\pdfoutput=1
\documentclass{amsart}
\usepackage[utf8]{inputenc}
\usepackage{amssymb}
\usepackage{amsmath}
\usepackage{graphics}
\usepackage{graphicx}
\usepackage{pifont}	
\usepackage{mathtools}
\usepackage{hyperref}
\usepackage{virginialake}
\vlnostructuressyntax
\usepackage{tikz-cd} 
\usepackage{cmll} 
\usepackage[capitalise]{cleveref}
\usepackage{url}
\usepackage[textsize=footnotesize]{todonotes}
\usepackage{thmtools,thm-restate}
\usepackage{subfloat}
\usepackage{enumerate}
\usepackage{bussproofs}
\usepackage{cmll}
\usepackage{tikz-cd}
\usepackage{algorithm}
\usepackage{algpseudocode}
\usepackage{adjustbox}
\usepackage{xspace}
\usepackage{tikz}
\usetikzlibrary{arrows.meta}
\usetikzlibrary{decorations.markings}
\usetikzlibrary{matrix}
\usetikzlibrary{arrows}
\usetikzlibrary{decorations.pathmorphing}
\usetikzlibrary{decorations.text}
\usetikzlibrary{shapes.geometric}
\usetikzlibrary{shapes}
\usetikzlibrary{arrows}
\usetikzlibrary{positioning} 
\usepackage{tikz-cd} 
\usetikzlibrary{calc}
\usetikzlibrary{shapes}
\usetikzlibrary{arrows}
\usetikzlibrary{matrix}
\usepackage{blkarray}
\usepackage{framed}
\usetikzlibrary{automata}
\usepackage{wrapfig}
\usetikzlibrary{positioning,calc,arrows,shapes,  tikzmark, decorations.pathreplacing}

\title[Reducibility Candidates and Cut Elimination in the Ill-founded Realm]{Making progress: Reducibility Candidates and Cut Elimination in the Ill-founded Realm}

\author{Gianluca Curzi and Graham E.\ Leigh}
\date{}

\newcommand{\dfn}{:=}

\newcommand{\vlhyE}[1]{\global\setbox\vlhybox=\hbox{$#1$}%
	\vlhyaux{\box\vlhybox}}%
    \newcommand{\fv}[1]{\mathrm{FV}(#1)}

\newtheorem{thm}{Theorem}
\newtheorem{lem}[thm]{Lemma}
\newtheorem{prop}[thm]{Proposition}
\newtheorem{cor}[thm]{Corollary}

\newtheorem{fact}[thm]{Fact}

\newtheorem{defn}[thm]{Definition}
\newtheorem{exmp}[thm]{Example}
\newtheorem{rem}[thm]{Remark}

\newcommand{\red}[1]{{\color{red}#1}}
\newcommand{\blue}[1]{{\color{blue}#1}}

\newcommand{\purple}[1]{{\color{black}#1}}
\definecolor{mygreen}{rgb}{0, 0.5, 0}
\newcommand{\mygreen}[1]{{\color{mygreen}#1}}

\definecolor{airforceblue}{rgb}{0.36, 0.54, 0.66}

\newcommand{\cutform}[3]{\mathsf{cut}_{#3}(#1\mid #2)}
\newcommand{\mcut}{\mathsf{mcut}}

\newcommand{\lfp}[1]{\mathsf{lfp}\left(#1\right)}
\newcommand{\gfp}[1]{\mathsf{gfp}\left(#1\right)}
\newcommand{\env}{\mathcal{E}}
\newcommand{\inter}[2]{\llbracket {#1}\rrbracket^{#2}}
\newcommand{\vldr}[3][8]{\vltr{#2}{#3}{\vlhyE{}}{\vlhyE{\hspace{#1pt} }}{\vlhyE{}}}

\newcommand{\Ord}{\mathrm{Ord}}
\newcommand{\subform}{\subseteq}

\newcommand{\FL}{\mathrm{FL}}
\newcommand{\fl}[1]{\FL(#1)}
\newcommand{\flleq}{\preceq_\FL}
\newcommand{\fll}{\prec_\FL}
\newcommand{\fleq}{\approx_\FL}
\newcommand{\flnleq}{\prec_\FL}

\def\cutelim{\rightsquigarrow}
\def\cutelims{\cutelim^*}
\newcommand{\depth}[2]{\mathsf{depth}_{#1}{(#2)}}

\newcommand{\unit}{\mathbf{1}}
\newcommand{\infrule}{\mathsf{r}}
\newcommand{\rrule}{\infrule} %legacy
\newcommand{\id}{\mathsf{id}}

\newcommand{\ex}{\mathsf{e}}
\newcommand{\cut}{\mathsf{cut}}

\newcommand{\choice}[2]{#2^{#1}}

\newcommand{\zero}{\mathbf{0}}

\newcommand{\muLJ}{\mu \mathsf{LJ}}
\newcommand{\muMALL}{\mu \mathsf{MALL}}

\newcommand{\muLL}{\mu \mathsf{LL}}

\newcommand{\seq}[2]{\mathcal{S}_{#1}({#2})}
\newcommand{\rul}[2]{\mathcal{R}_{#1}({#2})}

\newcommand{\treem}[2]{\mathrm{T}_{#2}(#1)}

\newcommand{\initm}[2]{\mathrm{I}_{#2}(#1)}

\newcommand{\PDer}[1]{\mathfrak{P}_{#1}}
\newcommand{\GDer}[1]{\mathfrak{E}_{#1}}
\newcommand{\ortcoh}[3]{ {#2}\scoh_{#1}{#3}}

\newcommand{\reds}[1]{\mathfrak{#1}}

\newcommand{\finseq}[1]{{#1}^{<\omega}}
\newcommand{\length}[1]{|#1|}
\newcommand{\branch}[1]{\mathfrak{#1}}
\newcommand{\Branch}[1]{\mathrm{Br}(#1)}
\newcommand{\Cnor}[1]{\mathfrak{N}_{#1}}
\newcommand{\Rset}{\mathfrak{R}}

\begin{document}
\maketitle

\begin{abstract}
Ill-founded (or non-wellfounded) proof systems have emerged as a natural framework for inductive and coinductive reasoning. In such systems, soundness relies on global correctness criteria, such as the \textit{progressivity}  condition. Ensuring that these criteria are preserved under infinitary cut elimination remains a central technical challenge in ill-founded proof theory.

In this paper, we present two cut elimination arguments for ill-founded $\muMALL$ -- a fragment of linear logic extended with fixed-points -- based on the reducibility candidates technique of Tait and Girard. In both arguments,  preservation of progressivity  follows directly from the defining properties of the reducibility candidates. In particular, the second argument is derived from the topological notion of internally closed set developed in previous work by Afshari and Leigh.
\end{abstract}

\section{Introduction}

Since their early development, infinitary proofs have played a prominent role in modern proof theory. Their systematic study was initiated by Sch\"{u}tte, who showed cut elimination  for an  infinitary extension of Peano arithmetic  \cite{Schutte1,Schutte2}, where the presence of the $\omega$-rule -- an infinitely branching rule replacing the induction scheme -- makes the {corresponding} notion of proof infinite in \emph{breadth} while remaining well-founded. {Specifically, Sch\"{u}tte’s cut elimination argument was formulated as a transfinite process indexed by ordinals, with termination ensured by a well-founded measure on proof height. This framework was later refined by Mints who emphasised the role of continuity in proof transformations~\cite{mints1978finite}.}

Since this pioneering work, infinitary proof systems and their cut elimination techniques have been extensively studied, providing a valuable setting for establishing consistency proofs and conservativity results for theories of arithmetic and analysis (see, e.g., \cite{Buchholz:1981IteratedInductive,Takeuti:1967ConsistencyProofs,Tait:1970ApplicationsCut}).

Over the past two decades, however, alternative notions of infinitary proof have emerged that relax the requirement of well-foundedness (while often reintroducing finite branching).  Motivated by the study of first-order inductive definitions~\cite{Broth07,brotherston2011sequent}, but anticipated in the seminal work of Niwinski and Walukiewicz on the modal $\mu$-calculus \cite{niwinski1996games}, ill-founded (or non-wellfounded) proofs offer an ideal proof-theoretic framework for  inductive and coinductive reasoning. Research on ill-founded proofs has been applied to a wide range of areas, including  automata theory \cite{Automata}, games~\cite{SANTOCANALE2002305}, theories of arithmetic \cite{Simpson17}, (modal) $\mu$-calculi \cite{sprengerdam03:journal},  complexity~\cite{CurziCIC},  and type theory \cite{Das2021}.

In  ill-founded proof systems,  soundness is no longer guaranteed by local inductive arguments. Instead, it appeals to   \emph{global correctness criteria}. Among these criteria, the most commonly adopted in the literature is the \emph{progressivity} (or \emph{trace}) condition,  typically satisfied whenever  a syntactically determined feature occurs infinitely often {along each  infinite branch of a proof}.
{Other examples of criteria include  bouncing threads, induction orders, semantic productivity, and automata-based conditions  (see, e.g., \cite{sprengerdam03:journal,sprengerdam03:conf, bouncing}). 

A major focus of ill-founded proof theory is on \emph{circular} or \emph{cyclic} proofs, that is, proofs whose underlying tree is (possibly infinite but) \emph{regular} -- i.e.,  having  only finitely many distinct subtrees. Owing to their finite presentability, cyclic proofs are typically expressed as finite, possibly cyclic, graphs and can be viewed as a natural ill-founded counterpart of  traditional inductively presented proofs.}

A growing body of research in this topic is devoted to the family of \emph{fixed-point logics}, which integrate  finitary logics with  least and greatest fixed points. Notable examples include the family of  modal $\mu$-calculi~\cite{Kozen}, fixed-point formulations of intuitionistic logic ($\muLJ$)~\cite{Clair,muLJ} and of  linear logic and its fragments (such as $\muLL$ and $\muMALL$) \cite{BaeldeM07,Baelde2016InfinitaryPT}. When nesting and interleaving of fixed points are permitted, fixed-point logics are well suited to express very general forms of (co)inductive definitions and corresponding (co)inductive reasoning principles. From a computational perspective, fixed-point logics that admit a constructive interpretation -- such as $\muLJ$, $\muLL$ and $\muMALL$ -- can be used to model sophisticated (co)inductive data structures and (co)recursion mechanisms. A detailed account of the computational strength of these logics (both in the inductive and cyclic presentation)  can be found in \cite{muLJ}.

This paper  focuses on $\muMALL$, the fixed-point extension of \emph{multiplicative additive linear logic} ($\mathsf{MALL}$), which can be  obtained from classical logic by removing the structural rules of contraction and weakening. Its key feature is the presence of two (non-equivalent) formulations of the connectives $\wedge$ and $\vee$: the  \emph{multiplicatives}  (resp., $\otimes$ and $\parr$), and the  \emph{additives}  (resp., $\with$ and $\oplus$). The ``exponentials'' of linear logic, which control the use of weakening and contraction, can be simulated by fixed points in \(\muMALL\)~\cite{BaeldeM07}, thus making this logic a general framework for studying  the theory of (co-)induction. 

The introduction of ill-founded proofs challenges one of the central tools of proof theory: cut elimination. When proofs can be infinite in \emph{depth},  Sch\"{u}tte-style  termination arguments are no longer applicable, and a conceptual shift becomes necessary. One approach is to reformulate cut elimination as an infinitary -- possibly transfinite, ordinal-indexed -- rewriting process that converges to a cut-free proof in the limit. This perspective draws on notions and techniques from topology (such as continuity and metric completion) or  coinductive reasoning.  In particular, when the rewriting process has a countable number of steps,   the existence of the limit proof boils down to showing that all finite approximations can be computed in finitely many steps, a property commonly referred to as \emph{productivity}.  

{The use of infinite rewriting for cut elimination in an ill-founded setting is not, in itself, a novel idea. Infinitary rewriting techniques have been studied systematically since the 1990s in variety of contexts, including  first-order and  higher-order rewrite systems~\cite{DERSHOWITZ199171,KETEMA2011893}, and infinitary extensions of the $\lambda$-calculus \cite{Lambda}. What ill-founded proof theory brings to this topic, rather, lies in the  proof-theoretic methods developed to ensure that the infinitary cut elimination procedures preserve the global correctness conditions mentioned above, thereby yielding sound cut-free limit proofs whenever they are applied to sound proofs.

One of the earliest and more general approaches to cut elimination for cyclic and ill-founded proofs was introduced by Santocanale in~\cite{FortierS13} for a fragment of $\muMALL$, and was later extended to full $\muMALL$ by Baelde et al in~\cite{Baelde2016InfinitaryPT}. The core idea is to define a (possibly infinite) rewriting strategy that stepwise pushes cuts upward, giving priority to the bottommost ones. To avoid technical complications such as cut permutations, this approach often resorts to the \emph{multicut} rule, a macro rule representing a series of consecutive cuts.  

{In recent years, a variety of alternative cut elimination techniques have been proposed, such as  the use of fixed-point theorems for contractive maps over ultrametric spaces~\cite{modal1,modal2,modal3}, notions of run~\cite{das:pous:non-well}, or domain-theoretic approaches~\cite{acclavio_et_al:LIPIcs.CSL.2024.8}. Nonetheless, many of these methods rely on features specific of the systems considered, and do not readily extend to different or more general settings.}

Proving that infinitary cut elimination preserves the aforementioned global correctness criteria remains a technically challenging -- yet central -- problem in ill-founded proof theory. Existing arguments in the literature tend to be bespoke and system-specific, and therefore lack full generality and robustness    {\cite{modal1,modal2,modal3,Saurin,das:pous:non-well,acclavio_et_al:LIPIcs.CSL.2024.8}}.

The present article contributes to the general theory by adapting Tait and Girard’s celebrated reducibility candidates technique to the ill-founded setting. More specifically, we isolate two
notions of reducibility candidate. The first, named \emph{$\Cnor{}$-reducibility candidates}, is defined directly in terms of cut elimination. The second, \emph{$\GDer{}$-reducibility candidates}, is based on an alternative, topologically motivated and logic-independent, global condition on proofs which we call \emph{external progressivity}. In essence, the latter are the ill-founded proofs which explicitly present an invariant for cut elimination. Our work is developed within ill-founded $\muMALL$ whose linear structure offers a streamlined and concise exposition, but covers classical and intuitionistic fixed point logics by virtue of the standard embeddings.

We prove that progressing proofs belong to both kind of reducibility candidate, thereby confirming that progressing proofs are both externally progressing and normalisable. The argument relies on a modest generalisation of the standard soundness (or computational totality) argument from ill-founded proof theory (see, e.g.,~\cite{muLJ,farzad-infinitary}).  In the case of $\Cnor{}$-reducibility candidates, the result directly implies cut elimination for progressing proofs. An (almost) parallel argument establishes that every progressing proof is externally progressing, generalising an observation from~\cite{Graham-ICset}. As with the standard reducibility candidates arguments for inductive systems, $\Cnor{}$-reducibility merely confirms the \emph{existence} of a cut elimination strategy. A more insightful and explicit cut elimination argument is developed using $\GDer{}$-reducibility candidates, where external progressivity provides a direct and straightforward certification of preservation of progressivity when a concrete cut elimination procedures is given.

\subsection*{Related work}
Cut elimination for infinitary $\muMALL$ was first developed by Baelde et al.~\cite{Baelde2016InfinitaryPT} using the notion of multicut and extending a previous result for a much weaker system by Fortier and Santocanale~\cite{FortierS13}. To ensure productivity and preservation of progressivity, however, the authors   propose  a bespoke argument based on a truth semantics. 

The concept of externally progressing ill-founded proofs appeared in~\cite{Graham-ICset} as an intermediate notion between progressing and normalisable ill-founded proofs in the context of higher-order intuitionistic arithmetic with fixed-points. Both inclusions -- progressing proofs are externally progressing and externally progressing proofs are normalisable -- crucially relied on restricting the use of fixed point quantifiers.

In contrast to~\cite{Graham-ICset}, our result does not place any specific restriction on the formation of fixed points (beyond positivity). Moreover, thanks to the robustness of the reducibility candidates method and the logic-independent nature of external progressivity, our cut elimination arguments generalise to fixed point logics beyond those treated in~\cite{Baelde2016InfinitaryPT,FortierS13}, such as the higher-order fixed point logics discussed in~\cite{Graham-ICset}.

\subsection*{Outline of the paper} 
\Cref{sec:muMALL} recalls the ill-founded proof system for $\muMALL$, its progressivity condition, and infinitary cut elimination. \Cref{sec:good} introduces  the notion of internally closed set, which is used to define the \emph{externally progressing} derivations. 
\Cref{sec:totality-candidates} is devoted to reducibility candidates and their properties. Finally, in  \Cref{sec:cut-elimination-theorem} and \cref{sec:second-cut-elimination} we prove cut elimination (\Cref{thm:cut-elimination}) by two reducibility-style arguments, one based on $\omega$-normalisation (\Cref{lem:progressing-implies-reducible}) and the other through external progressivity (\Cref{lem:good-norm}).

\subsection*{Conventions on sequences and trees}

 The set of natural numbers is denoted $\omega$, and is associated with the standard ordering $<$. 
A finite sequence over a set $X$ is a function $u:\{0, \ldots, n-1\}\to X$, often written $\langle u(0), \ldots, u(n-1)\rangle$. The cardinality of the domain of $u$ is referred to as the \emph{length} and denoted $\length{u}$. The set of finite sequences over $X$ is denoted $\finseq{X}$. 
The empty sequence is denoted  $\langle\rangle \in \finseq{X}$, and  concatenation of (finite) sequences  $u$ and $v$ is denoted $u. v$. \purple{The prefix relation on $\finseq{X}$ is denoted $\leq$, defined by $u \leq v$ iff there exists $w \in \finseq{X}$ such that $v=u.w$.}

A \emph{tree} is a non-empty  set $T \subseteq \finseq{\omega}$ which is  prefix closed, that is, if $u <v\in T$ then $u \in T$. Elements of $T$ are called \emph{nodes}  of $T$. 
A \emph{leaf} is a \purple{$<$-maximal} node.
A \emph{tree over  $X$} is a pair $T'=(T, \lambda)$ where $T$ is a tree and $\lambda \colon T \to X$ is a label function assigning to each node of $T$ an element of $X$. When there is no cause for confusion we write $u \in T'$ in place of $u \in T$, and $T'(u)$ for $\lambda (u)$. 
Given $v \in T$, the \emph{\( v \)-rooted subtree} of \( T \) is the subtree \( T_v \) with nodes $\{u \mid v.u \in T\}$ and $T_{v}(u)=T(v.u)$.

\purple{In what follows, we will assume the so-called \emph{tree topology} of (labelled) trees. It is well-known that the latter is the topology of the complete (ultra)metric space over the set of (labelled) trees given by the distance function $\delta$ such that 
$\delta(T, T') = 0$ if $T = T'$, and otherwise $\delta(T, T')$ is the infimum of all $2^{-n}$ such that $T$ and $T'$ coincide on all nodes with length $\le n$. This will ensure that Cauchy sequences of derivations have \emph{limits}.}

A \emph{branch} of $T$ is an infinite sequence of successors in $T$ starting from the root, namely a sequence $\mathfrak b: \omega \to \omega^{<\omega}$ such that $\branch b (0)= \langle \rangle$ and $\branch b(n+1)$ is in $\{u \in T \mid \branch b(n)\leq u \text{ and }|u|=n+1\}$.
The set of branches of $T$ is denoted $\Branch T$. 
The longest common prefix of distinct branches $\branch a , \branch b$  of $  T$ is the node $\branch a \wedge \branch b\in T$
. 

Let $B \subseteq \Branch T$ be a non-empty set of branches of a tree $T$. The \emph{closure} of \( B \) is the set \( \overline B = \{ \branch b \in \Branch T \mid \forall n > 0 \exists \branch a \in B \, \branch{a}(i) = \branch b(i) \text{ for all } i < n \} \).
If \( B = \overline B \) we call \( B \) \emph{closed}. Note that \( \Branch T \) is closed if \( T \) is finitely branching.
The \emph{infimum} (\emph{supremum}) of $B$ is the leftmost (resp.~rightmost) branch of $\overline B$, denoted $\inf B$ ($\sup B$). Infima always exist, though suprema are only guaranteed in the case of finitely branching trees.

\section{Ill-founded $\muMALL$}\label{sec:muMALL}

We begin by introducing an ill-founded (one-sided) sequent calculus system for $\muMALL$, the multiplicative-additive fragment of propositional linear logic extended with least and greatest fixed point operators~\cite{BaeldeM07,Baelde2016InfinitaryPT}. Our presentation will mainly follow~\cite{Baelde2016InfinitaryPT}, although we  adopt some terminology from~\cite{Graham-ICset}.

\subsection{Formulas}\label{subsec:formulas}

\begin{defn}[Formulas]
\emph{Preformulas}, written $\phi, \psi$ etc., are generated by the following grammar:
\[
    \phi, \psi \dfn X \mid \bot \mid \unit \mid \phi \parr \psi \mid \phi \otimes \psi   \mid \zero \mid \top \mid \phi \oplus \psi \mid \phi \with \psi \mid \mu X \phi \mid \nu X \psi
\]
where $X$ belongs to a countable set of (\emph{propositional}) \emph{variables}.

\emph{Free variables} of a preformula are defined as expected, construing $\mu$ and $\nu$ as binders:
\begin{itemize}
    \item $\fv X \dfn \{X\}$
    \item $\fv \bullet \dfn \emptyset$, for $\bullet \in \{\zero, \top, \bot, \unit\}$
    \item $\fv {\phi \star \psi} \dfn \fv \phi \cup \fv \psi$, for $\star \in \{\oplus,\otimes,\parr, \with\}$
    \item $\fv{\kappa X \, \phi} \dfn \fv \phi \setminus \{X\}$, for $\kappa \in \{\mu,\nu\}$
\end{itemize}
A preformula is \emph{closed} if it has no free variables, otherwise it is \emph{open}. Closed preformulas are called \emph{formulas}.
\end{defn}

Capture-avoiding substitution of a formula $\psi$ for a free variable $X$ in $\phi$, written $\phi[\psi/X]$, is defined in the standard way. We will sometimes write $\psi(\mu X\phi)$ for $\psi[\mu X \phi/X]$, and similarly for $\nu$.
We also assume some standard conventions on variable binding, in particular that each occurrence of a binder $\mu$ or $\nu$ binds a variable distinct from all other binder occurrences in consideration. This avoids having to deal with variable renaming explicitly.

 \emph{Negation} of a formula $\phi$, written $\phi^\perp$,  is defined as  the involution on preformulas satisfying $(\phi \parr \psi)^\perp= \phi^\perp \otimes \psi^\perp$, $(\phi \oplus \psi)^\perp=  \phi^\perp \with \psi^\perp$, $\perp^\perp=
\unit$, $\zero^\perp= \top$, $(\nu X \phi)^\perp=  \mu X \phi^\perp$, $X^\perp= X$. Notice that the equations imply  $(\phi[\psi/X])^\perp= \phi^\perp [\psi^\perp/X]$.\footnote{The equation $X^\perp= X$ is harmless since throughout this paper we will only consider closed preformulas. Thanks to that equation, we  do not require any positivity condition on fixed point formulas to enforce a semantic interpretation based on the Knaster-Tarski fixed point theorem.}

The presence of fixed point formulas requires a more permissive notion of subformula, given by the well-known Fischer-Ladner preorder.

\begin{defn}[Fischer-Ladner preorder] We  write $\phi \subform \psi$ if $\phi$ is a subformula of $\psi$. The \emph{Fischer-Ladner preorder}, written $\flleq$,
is the smallest reflexive and transitive extension of $\subform$ satisfying
$\phi(\mu X \phi(X)) \flleq \mu X \phi(X)$ and $\phi(\nu X \phi(X)) \flleq \nu X \phi(X)$.
We write $\phi \fleq \psi$ if $\phi \flleq \psi \flleq \phi$, and define ${\fll} = {\flleq} \setminus {\fleq}$.
The \emph{Fischer-Ladner closure} of a formula $\psi$, written $\fl \psi$, is the set $\{\phi \mid \phi \flleq \psi\}$.
\end{defn}

Note that  $\fleq$-equivalence classes are naturally (well) partially ordered by $\flleq$. Notice also that  $\fl \psi$ is the smallest set of formulas closed under subformulas and fixed point unfolding: whenever $\kappa X\phi(X) \in \fl \psi$, then also $\phi(\kappa X\phi(X)) \in \fl \psi$ for $\kappa \in \{\nu, \mu\}$. Also,  $\fl \psi$ is a \emph{finite} set.    

The Fisher-Ladner preorder allows   a standard 
  (strict) well partial order on formulas, $<$, that associates a ``priority'' to interleaving fixed points within a formula (see, e.g.,~\cite{Studer08,Doumane17thesis,muLJ}).

\begin{defn}
	[Priority] \label{defn:priority}
	We say that $\phi$ has higher \emph{priority} than $\psi$, written $\psi<\phi$, if $\psi \flnleq \phi$, or $\phi \fleq \psi $ and $\phi \subset \psi$.
\end{defn}

\begin{exmp}
    Consider the following formula $\phi= \mu X\psi$, where $\psi=\nu Y(X \otimes Y \otimes \theta)$ and $\theta= \nu Z (\unit \oplus Z)$. Setting $\chi= \nu Y (\phi \otimes Y \otimes \theta)$ we have that $\chi \flleq \phi$. Moreover, since $\phi \subset \chi$, we also have $\phi \flleq \chi$, and so $\chi \fleq \phi$. On the other hand, since $\theta \subset \phi$ we  have $\theta \flleq \phi$ (but not  $\phi  \flleq \theta$). Finally, notice that the preformula $\psi$ is not closed, and so it is not a formula. This means that $\psi$ is incomparable with respect to any of these orderings. 
\end{exmp}

\begin{lem}
	If $\phi \fleq \psi $ then $\psi<\phi$ iff \( \phi \) is shorter than \( \psi \).
\end{lem}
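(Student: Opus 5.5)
We first unfold \Cref{defn:priority} under the hypothesis $\phi \fleq \psi$. Since ${\fll} = {\flleq}\setminus{\fleq}$, the first disjunct $\psi \fll \phi$ in the definition of $\psi<\phi$ is false. So $\psi<\phi$ holds iff $\phi \subset \psi$, that is, iff $\phi$ is a proper subformula of $\psi$. The lemma therefore reduces to the following claim: for $\fleq$-equivalent formulas $\phi,\psi$, $\phi$ is a proper subformula of $\psi$ iff $\phi$ is strictly shorter than $\psi$.

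The left-to-right direction is immediate, since a proper subformula is strictly shorter. The right-to-left direction carries all the content. It needs a structural fact about Fischer--Ladner classes: \emph{any two distinct formulas in the same $\fleq$-class are comparable under $\subset$.} Equivalently, each class is linearly ordered by the subformula relation, and hence strictly by length.

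To establish this, I would first analyse how $\chi \flleq \psi$ arises. An induction on the generation of $\flleq$ (steps being either immediate subformula or unfolding $\kappa X\theta \mapsto \theta(\kappa X\theta)$) should show that every $\chi \in \fl\psi$ has the form $\chi = \chi_0[\vec\sigma/\vec X]$. Here $\chi_0$ is a subformula of $\psi$, possibly open, and $\vec\sigma$ are closed fixed-point formulas of $\fl\psi$ that substitute for the bound variables of $\chi_0$.

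Next, take a $\fleq$-class $C$. Since $\chi \flleq \chi'$ and $\chi' \flleq \chi$, a cycle of steps must pass through an unfolding. I would then single out the $\subset$-minimal element of $C$: the fixed-point formula $\sigma = \kappa X\theta$ whose unfolding occurs on every cycle. Finally, I would show that every member of $C$ is obtained from $\sigma$ by descending along the path in $\theta(\sigma)$ from the root to an occurrence of $\sigma$. This path gives a chain $\theta(\sigma) = \chi_1 \supset \chi_2 \supset \cdots \supset \sigma$, with members of $C$ lying on it. Any two members of $C$ then lie on one $\subset$-chain, which yields comparability, and the shorter one is the subformula.

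The main obstacle is this last step. A fixed-point formula may have several occurrences of its variable, or interleave several fixed points; in that case the class could \emph{a priori} contain subformulas lying on different branches. These would be distinct, $\subset$-incomparable, and possibly of different lengths. My first attempt would use the binding convention (each binder binds a distinct variable) together with the observation that returning from a subformula to a member of $C$ requires unfolding a fixed point visible in that subformula. If branching genuinely occurs, the fallback is to weaken the claim slightly. I would then verify that the proofs which use the lemma only compare formulas on a single unfolding chain, where the argument above applies directly.
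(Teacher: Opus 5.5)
Your reduction is right: under $\phi\fleq\psi$ the disjunct $\psi\fll\phi$ is excluded, so $\psi<\phi$ holds iff $\phi\subset\psi$. The left-to-right direction is then trivial. The gap is the structural fact you need for the converse, namely that each $\fleq$-class is linearly ordered by $\subset$. This fact is false, and so is the lemma as literally stated; the paper gives no proof of it. Here is a counterexample:
\begin{itemize}
\item let $\rho := \mu X.\bigl(\nu Y.(X\otimes Y)\oplus\nu Z.((X\parr X)\parr Z)\bigr)$, $\chi_1 := \nu Y.(\rho\otimes Y)$ and $\chi_2 := \nu Z.((\rho\parr\rho)\parr Z)$;
\item the unfolding of $\rho$ is $\chi_1\oplus\chi_2$, so $\chi_1,\chi_2\flleq\rho$;
\item $\rho\subset\chi_1$ and $\rho\subset\chi_2$ give $\rho\flleq\chi_1,\chi_2$, hence $\chi_1\fleq\chi_2$;
\item $\chi_1$ is strictly shorter than $\chi_2$, but the only closed subformulas of $\chi_2$ are $\chi_2$, $\rho\parr\rho$ and $\rho$, so $\chi_1\not\subset\chi_2$ and $\chi_2<\chi_1$ fails.
\end{itemize}
Taking $\phi:=\chi_1$ and $\psi:=\chi_2$ refutes the right-to-left direction. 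Both $\chi_1$ and $\chi_2$ are $\nu$-formulas. So restricting to fixed-point formulas, which are what priority ranks in the definition of assignments, does not help.

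Your sketch breaks at the appeal to \emph{the} path in $\theta(\sigma)$ from the root to an occurrence of $\sigma$. When the bound variable occurs more than once in $\theta$ there are several such paths, and the class branches along them; here $X$ occurs in both disjuncts. The binding convention only concerns variable names and cannot prevent this. Even with a single occurrence, the class is not confined to subformulas of $\theta(\sigma)$: unfoldings of inner fixed points, such as $\rho\otimes\chi_1$ above, also belong to it.

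Your fallback of comparing only formulas on a single unfolding chain does not rescue things either. One thread can run through $\rho,\ \chi_1\oplus\chi_2,\ \chi_1,\ \rho\otimes\chi_1,\ \rho,\ \chi_1\oplus\chi_2,\ \chi_2,\ (\rho\parr\rho)\parr\chi_2,\ \rho\parr\rho,\ \rho,\dots$, and so visit both $\chi_1$ and $\chi_2$ infinitely often.

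What survives is the one-way statement: if $\phi\fleq\psi$ and $\psi<\phi$, then $\phi$ is shorter than $\psi$. Equivalently, your biconditional holds for $\subset$-comparable pairs. This suffices, for instance, to order each class by length when extending $<$ to a total order. The biconditional as stated cannot be proved.
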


In what follows, we will assume an arbitrary extension of $<$ to a \emph{total} well order.

\subsection{Derivations}

We can now define an infinitary proof system for $\muMALL$. First, we present the one-sided sequent calculus rules. Derivations in this system are  possibly infinite labelled trees constructed from those rules. 

\begin{defn}[Sequents and inference rules]
    A \emph{sequent} of $\muMALL$ is a finite list of formulas. The inference rules of the sequent calculus for ill-founded $\muMALL$ are presented in \Cref{fig:sequent-calculus-MALL}, and are split into two categories: the \emph{logical rules} ($\top$, $\unit$, $\bot$, $\otimes$, $\parr$, $\choice 0 \oplus$, $\choice 1\oplus$, $\with$, $\nu$, $
\mu$),  and the \emph{structural rules} ($\cut$, $\ex$). In particular, the rules $\top$, $\zero$, $\choice 0 \oplus, \choice 1\oplus, \with$ are called \emph{additive}, the rules $\unit, \bot, \otimes, \parr$ are called \emph{multiplicative}, and the rules $\mu, \nu$ are called \emph{fixed point}.
The sequent displayed below the inference line is the \emph{conclusion} and the sequents above are the \emph{premise(s)}. In $\cut$ the formula $\phi$ is the \emph{cut formula}. $\Gamma$ and $\Delta$ designate lists of formulas called \emph{contexts}. In the logical rules a single formula in the conclusion (i.e., the one not in the context) is designated as the \emph{principal} formula, and the distinguished formulas in the premise(s) as \emph{minor} formulas. 
\end{defn}

\begin{figure}[t]
    \centering
    \(
    \begin{array}{c}
   \vlinf{\ex}{}{{\Gamma}, \psi, \phi, {\Delta}}{{\Gamma},{\phi}, \psi ,{\Delta}}
   \qquad
    \vliinf{\cut}{}{{\Gamma}, \Delta}{\Gamma,\phi}{\phi^\perp, \Delta}
   \\[1.25em]
   \vlinf{\unit}{}{ \unit}{}
   \qquad 
   \vlinf{\perp}{}{{\Gamma},\perp}{{\Gamma}}
   \qquad 
   \vliinf{\otimes}{}{ {\Gamma} , {\Delta},{\phi \otimes \psi}}{ {\Gamma},\phi }{{\Delta},\psi }
   \qquad 
   \vlinf{\parr}{}{ {\Gamma},{\phi \parr \psi} }{{\Gamma},{\phi},{\psi} }
   \\[1.25em]
   \vlinf{\top}{}{ \Gamma,\top}{}
   \qquad
   \vlinf{\choice 0 { \oplus}}{}{{\Gamma},{\psi_0 \oplus \psi_1} }{ {\Gamma},{\psi_0} }\qquad
    \vlinf{\choice 1 { \oplus}}{}{{\Gamma},{\psi_0 \oplus \psi_1} }{{\Gamma},{\psi_1} 
    }
    \qquad 
    \vliinf{\with}{}{{\Gamma},{\phi \with \psi} }{{\Gamma},{\phi}  }{{\Gamma},{\psi} }
     \\[1.25em]
       \vlinf{ \mu}{}{{\Gamma},{\mu X \phi(X)}}{{\Gamma},{\phi(\mu X \phi(X))}}
       \qquad 
        \vlinf{ \nu}{}{{\Gamma},{\nu X \phi(X)}}{{\Gamma},{\phi(\nu X \phi(X))} }
    \end{array}
    \)
    \caption{Sequent calculus rules for $\muMALL$ (infinitary presentation).}
    \label{fig:sequent-calculus-MALL}
\end{figure}

\begin{defn}[Derivations] A \emph{derivation} is a (possibly infinite) tree $d=(T, \lambda)$ over pairs of sequents and rule names consistent with the inference rules in \Cref{fig:sequent-calculus-MALL}. The sequent and rule names at a node $u \in d$  are denoted  $\seq d u$ and $\rul d u$ respectively, so $\lambda (u)= (\seq d u , \rul d u)$. Consistency with the inference rules means that for every $u \in d$, the sequent $\seq d u$ occurs as the conclusion of an instance of the rule $\rul d u$ whose premises are the sequents associated to successors of $u$ (in order). The sequent labelling the root is called \emph{conclusion}. A \emph{subderivation} of $d$ is a subtree $d_u$ of $d$ rooted at a node $u \in d$.
\end{defn}

\begin{exmp}\label{exmp:derivations}
    \Cref{fig:example-derivations} illustrates two examples of derivations (colours
may be ignored for now). Notice that $\Gamma$ can be any formula, possibly $\perp$. Therefore derivations are not logically sound.
\end{exmp}
\begin{figure}
    \centering
   \(
   \vlderivation{
\vliin{\cut}{}{\Gamma}
{\vlin{\nu}{}{\Gamma, \red{\nu X.X}}{\vlin{\nu}{}{ \Gamma, \red{\nu X.X}}{\vlin{\nu}{}{ \Gamma, \red{\nu X.X}}{\vlhy{\branch{b}\, \vdots}}}}}
{\vlin{\mu}{}{\blue{\mu X.X}}{\vlin{\mu}{}{\blue{\mu X.X}}{\vlin{\mu}{}{\blue{\mu X.X} }{\vlhy{ \vdots\, \branch{b}'}}}}}
}
\qquad \qquad
\vlderivation{
\vlin{\sigma}{}{\mygreen{A}}{
\vlin{\nu}{}{\mygreen{B(A)}}{
\vlin{\choice 0 \oplus}{}{\mygreen{A \oplus B(A)}}
{
\vlin{\sigma}{}{\mygreen{A}}{\vlhy{\vdots}}
}
}
}
}
\)
    \caption{Examples of derivations, where $A\dfn \sigma X. B(X)$ and $B(X)\dfn \nu Y (X \oplus Y)$, with $\sigma \in \{\nu, \mu\}$.}
    \label{fig:example-derivations}
\end{figure}

To simply the presentation of many concepts, henceforth we largely leave occurrences of the exchange rule \( \ex \) implicit. This convention is especially relevant in the case of cut which will often be assumed to be of the following ``exchange invariant'' form:
\[
   \vliinf{\cut}{}{{\Gamma}, \Gamma', \Delta, \Delta'}{{\Gamma},\phi, \Gamma'}{\Delta,\phi^\perp, {\Delta'}}
\]
Ignoring applications of exchange  will allow us to denote a derivation obtained by applying a consecutive series of cut rules as follows:
\[
\vlderivation{
\vliin{\cut}{}{\Gamma, \Delta_1, \ldots, \Delta_n}
{
\vliin{}{}{\vdots}{
  \vliin{\cut}{}{\Gamma, \Delta_1, \phi_2, \ldots, \phi_n}
{ 
\vltr{d}{\Gamma, \phi_1, \ldots, \phi_n}{\vlhy{}}{\vlhy{}}{\vlhy{}}
}{
\vltr{d_1}{\phi_1^\perp,\Delta_1}{\vlhy{}}{\vlhy{}}{\vlhy{}}
}
}
{\vltr{d_2}{\phi_2^\perp,\Delta_2}{\vlhy{}}{\vlhy{}}{\vlhy{}}}
}{
\vltr{d_n}{\phi_n^\perp,\Delta_n}{\vlhy{}}{\vlhy{}}{\vlhy{}}
}
}
\]
We will denote the above derivation with  $\cutform{d}{d_1, \ldots, d_n}{\phi_1, \ldots, \phi_n}$, or simply $\cutform{d}{d_1, \ldots, d_n}{}$ when the cut formulas are clear from the context.

\subsection{Threads, traces, and progressivity}

\Cref{exmp:derivations} shows, among others,  that the system of derivations of $\muMALL$ is not logically sound. A typical approach to recover soundness is to introduce a global correctness criterion called \emph{progressivity} (or \emph{trace}) \emph{condition} (see, e.g.,~\cite{Broth07}). Intuitively, the progressivity condition certifies that along each branch of a given derivation we can find a sequence of formulas ordered by $\fll$ where certain $\nu$-formulas unfold infinitely often. Such sequences are called \emph{threads}.

\begin{defn}[Threads]
A \emph{weak thread} is an infinite sequence $\tau$ such that $\tau(i)$ is a pair $(\phi_i, p_i)$, where $\phi_i$ is a formula and $p_i\in\{0,1\}$, for which for all $i$ either $(\phi_i, p_i) = (\phi_{i+1}, p_{i+1})$ or $\phi_{i+1} \fll \phi_i$ and:
\begin{itemize}
    \item if $\phi_i=\kappa X\psi$ then $p_{i+1}=0$
    \item if $\phi_{i}= \psi_0 \star \psi_0$ with $\star \in \{\oplus,\otimes,\parr, \with\}$ then $\phi_{i+1}=\psi_{p_{i+1}}$.
\end{itemize}
Henceforth, for the sake of readability, we will consider $\tau$ as a sequence of formulas, thus treating each $\tau(i)$ as a formula rather than a pair.

 A \emph{thread} is a weak thread $\tau$ such that $ \tau(i+1)\fll \tau(i) $ for all $i$. A weak thread that is not eventually constant uniquely identifies a thread $\hat \tau$ given by contracting all consecutive repetitions, in which case we refer to $\tau$ as the \emph{expansion} of $\hat \tau$. Finally, the  \emph{dual} of a (weak) thread $\tau=(\phi_i)_i$, written $\tau^\perp$, is the (weak) thread $(\phi_i^\perp)_i$.
\end{defn}

The following is a well-known property of threads (see, e.g.,~\cite{CmuPA}):

\begin{fact}\label{lem:good-formula}
    Let $\tau=(\phi_i)_i$ be a thread. There is a unique $\kappa X \psi$ with $\kappa \in \{\mu, \nu\}$ such that:
    \begin{itemize}
        \item for infinitely many $i$, $\phi_i= \kappa X \psi$, 
        \item for all but finitely many $i$, $\phi_i$ contains $\kappa X\psi$ as a subformula.
    \end{itemize}
\end{fact}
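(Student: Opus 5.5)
We prove the two clauses of \Cref{lem:good-formula} in order and then derive uniqueness. The only facts used are that the Fischer--Ladner closure is finite and that each step of a thread either unfolds a fixed point or passes to an immediate subformula.

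\emph{Step 1: the thread lives in a finite set and eventually stays in one class.} Since $\tau(i+1) \fll \tau(i)$, every $\phi_i$ lies in $\fl{\phi_0}$, which is finite. The $\fleq$-classes are partially ordered by $\flleq$, and along the thread the class can only weakly descend. A strict descent $\phi_{i+1} \flnleq \phi_i$ moves to a strictly lower class, and there are finitely many classes. Hence from some index $N$ on, all $\phi_i$ with $i \ge N$ lie in a single class $C$.

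\emph{Step 2: identify the infinitely recurring fixed point.} For $i \ge N$ each step is either an unfolding $\kappa X\chi \mapsto \chi(\kappa X\chi)$ or a step to an immediate subformula. Subformula steps strictly decrease the size of the formula. Sizes are bounded because $C \subseteq \fl{\phi_0}$ is finite, so unfolding steps must occur infinitely often. Since $C$ is finite, some fixed-point formula is unfolded infinitely often, and among the infinitely recurring fixed-point formulas we take $\kappa X\psi$ to be the shortest one. It is then a subformula of every other fixed-point formula occurring infinitely often, and this is the main obstacle. I would establish it through the standard lemma: if $\theta \fleq \theta'$ are both fixed points in the same class, then the shorter is a subformula of the longer, or, more directly, the segment of the thread between two occurrences of $\kappa X\psi$ passes only through formulas having $\kappa X\psi$ as a subformula.

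To justify the segment claim, take consecutive occurrences $\phi_j = \phi_k = \kappa X\psi$ with $N \le j < k$. In between, the formulas are obtained from $\psi(\kappa X\psi)$ by subformula steps and unfoldings, and a formula can only be reached back in the same class by unfolding. Suppose an intermediate formula did not contain $\kappa X\psi$ as a subformula. Every intermediate formula is a subformula of an unfolding of a formula of $C$, so it would be built from subformulas of $\psi$ with other fixed points substituted. Returning to $\kappa X\psi$ would then require unfolding some fixed point $\kappa' Y\chi$ that contains $\kappa X\psi$ in its unfolding but not syntactically. By the variable-binding conventions this means $\kappa X\psi$ occurs free-variable-instantiated inside $\chi$, so $\kappa' Y\chi$ is strictly longer than $\kappa X\psi$ and lies in the same class. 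I would argue this contradicts the minimality choice by tracking sizes, using the lemma above that relates $<$ to length within a class. Alternatively, I would cite the standard argument of \cite{CmuPA}, where one shows that $\kappa X\psi$ remains a subformula of every $\phi_i$ after its first occurrence past $N$, since every later unfolding is of a fixed point containing it.

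\emph{Step 3: uniqueness.} Suppose $\kappa' Y\psi'$ also satisfies both clauses. Then each of the two formulas occurs infinitely often and eventually contains the other as a subformula. So each is a subformula of the other, and they are equal.
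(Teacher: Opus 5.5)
\paragraph{Gap in Step~2.} The paper gives no proof of this fact; it only points to \cite{CmuPA}. So everything rests on your Step~2, and there the key claim is not established. That claim is that the shortest infinitely recurring fixed point $\kappa X\psi$ is eventually always a subformula, and neither route you offer proves it.

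First, the ``standard lemma'' is false. Take $A=\mu X.(\nu Y.(X\otimes Y)\oplus\nu Z.(X\otimes(Z\otimes\unit)))$. The fixed points $B=\nu Y.(A\otimes Y)$ and $C=\nu Z.(A\otimes(Z\otimes\unit))$ both satisfy $B\fleq A\fleq C$, and $B$ is shorter than $C$. Yet $B$ is not a subformula of $C$. Read literally, the paper's lemma relating $<$ to length would force $B\subset C$, so you cannot lean on it here either.

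Second, in the segment argument the size comparison goes the wrong way. Suppose $\kappa X\psi$ is a subformula of $\chi[\kappa'Y\chi/Y]$ but not of $\kappa'Y\chi$. Then $\kappa X\psi=\rho[\kappa'Y\chi/Y]$ for some subformula $\rho\neq Y$ of $\chi$ with $Y$ free, so $\kappa'Y\chi$ is strictly \emph{shorter} than $\kappa X\psi$, not longer. This contradicts minimality only if $\kappa'Y\chi$ itself recurs infinitely often, and starting at $j\ge N$ (class stabilisation) does not guarantee that. Concretely, let $F=\nu Y.\mu X.(X\oplus Y)$ and $G=\mu X.(X\oplus F)$. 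The thread $G,\,G\oplus F,\,F,\,G,\,G\oplus F,\,G,\,G\oplus F,\,G,\dots$ lies in a single class from index $0$, and $G$ is its only infinitely recurring fixed point. Yet $F$, which occurs between the occurrences of $G$ at indices $0$ and $3$, does not contain $G$. So your segment claim, and the claim that $\kappa X\psi$ ``remains a subformula after its first occurrence past $N$'', are both false as stated. Moreover, the real danger is a subformula step dropping $\kappa X\psi$, which ``every later unfolding is of a fixed point containing it'' does not address.

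\paragraph{Repair.} You need a different cut-off and a \emph{backward} invariant.
\begin{enumerate}
\item Choose $N'$ such that every $\phi_i$ with $i\ge N'$ occurs infinitely often; this is possible since $\fl{\phi_0}$ is finite.
\item Take $\kappa X\psi$ shortest among the fixed points occurring after $N'$; these exist by your size argument.
\item Show that for $i\ge N'$, if $\kappa X\psi\subseteq\phi_{i+1}$ then $\kappa X\psi\subseteq\phi_i$.
\end{enumerate}
For subformula steps, the third point is immediate. For an unfolding of $\phi_i=\kappa'Y\chi$, the occurrence of $\kappa X\psi$ lies in $\chi$ or in a copy of $\phi_i$, unless it has the form $\rho[\phi_i/Y]$ above. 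That form would make $\phi_i$ an infinitely recurring fixed point strictly shorter than $\kappa X\psi$, which is impossible. Since $\kappa X\psi$ occurs at arbitrarily large indices, propagating backward gives containment for all $i\ge N'$. With this in place your uniqueness argument (Step~3) goes through, and Step~1 is not needed.
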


The above property allows us to identify those threads  that can be used to certify progressivity condition, which we call ``good''.

\begin{defn}[Good threads]\label{defn:good-and-bad}
    Let $\tau=(\phi_i)_i$ be a thread, and let $\kappa X \psi$ be a formula given by \cref{lem:good-formula}. If $\kappa= \nu$ we call $\tau$ \emph{good}; otherwise $\tau$ is \emph{bad}.
\end{defn}

A consequence of the above is that a thread $(\phi_i)_{i \geq 0}$ is good (bad) iff $(\phi_i)_{i \geq n}$ is good (bad) for every \( n \ge 0 \). Moreover:

\begin{prop}\label{lem:good-iff-bad}
    A thread is either good or bad, and it is good iff its dual is bad.
\end{prop}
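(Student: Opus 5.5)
The plan is to read the first part directly off \cref{lem:good-formula} and then derive the duality part from the fact that negation preserves the Fischer-Ladner structure while swapping $\mu$ and $\nu$.

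\textbf{Good or bad.} By \cref{lem:good-formula}, every thread $\tau$ determines a unique formula $\kappa X\psi$ that occurs infinitely often in $\tau$ and is a subformula of almost every $\tau(i)$. Since $\kappa\in\{\mu,\nu\}$ and these two cases are exclusive, \cref{defn:good-and-bad} classifies $\tau$ as exactly one of good or bad. Uniqueness of $\kappa X\psi$ is what makes this classification well defined.

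\textbf{$\tau^\perp$ is a thread.} Next I check that the dual $\tau^\perp=(\phi_i^\perp)_i$ of a thread is again a thread. This uses two observations.
\begin{itemize}
\item Negation is compatible with subformulas and with unfolding. It maps subformulas to subformulas, $\psi\subform\phi$ iff $\psi^\perp\subform\phi^\perp$. Since $(\phi[\chi/X])^\perp=\phi^\perp[\chi^\perp/X]$, it also maps the unfolding $\phi(\kappa X\phi)$ of $\kappa X\phi$ to the unfolding $\phi^\perp(\overline{\kappa} X\phi^\perp)$ of $(\kappa X\phi)^\perp=\overline\kappa X\phi^\perp$.
\item Because $\flleq$ is the least preorder generated by these clauses, and negation is an involution, we get $\psi\flleq\phi$ iff $\psi^\perp\flleq\phi^\perp$. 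Hence negation preserves $\fleq$ and $\fll$.
\end{itemize}
The side conditions on the index $p$ are unaffected, since negation preserves the main connective up to swapping it with its dual. So $\tau^\perp$ is a thread.

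\textbf{Duality of the witness.} Let $\kappa X\psi$ be the witness for $\tau$. Then $(\kappa X\psi)^\perp=\overline\kappa X\psi^\perp$ occurs infinitely often in $\tau^\perp$, and is a subformula of almost every $\phi_i^\perp$. By the uniqueness in \cref{lem:good-formula}, it is the witness for $\tau^\perp$. Since $\overline\kappa=\nu$ iff $\kappa=\mu$, $\tau$ is good iff $\tau^\perp$ is bad.

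\textbf{Main obstacle.} The only delicate point is justifying that negation is an automorphism of $\flleq$. I would prove $\psi\flleq\phi\Rightarrow\psi^\perp\flleq\phi^\perp$ by induction on the generation of $\flleq$: reflexivity, transitivity, the subformula clause and the unfolding clause. The converse then follows by involutivity. This in turn relies on the substitution identity for negation, which the paper records and which is harmless for closed formulas under the convention $X^\perp=X$.
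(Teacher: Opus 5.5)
Your proof is correct and follows the only route the paper implicitly intends: the paper states this proposition without proof, as an immediate consequence of \cref{lem:good-formula} and \cref{defn:good-and-bad}. Your argument that negation is an automorphism of $\flleq$ shows that $\tau^\perp$ is again a thread whose witness is $(\kappa X\psi)^\perp$, with $\mu$ and $\nu$ swapped; this supplies the routine details the paper leaves implicit.
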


We now need  the notion of \emph{trace}. This  can be seen as a concrete instantiation of a thread $\tau$ along a branch, and is defined as a sequence of  \emph{formula occurrences}  enumerating (with possible repetitions) the formulas of $\tau$.

\begin{defn}[Formula occurrences] Let $d$ be a derivation. A \emph{formula occurrence (of $d$)} is a triple $(u, k, \phi)$ where $u \in d$ and  $k< \length{\seq d u}$ is such that $\seq d u (k)=\psi_1, \ldots, \psi_n$ with $\psi_k=\phi$. 
\end{defn}

\begin{defn}[Ancestry] Let $d$ be a derivation.
{We say that a formula occurrence $f= (u, k, \phi)$ is an \emph{ancestor} of another formula occurrence $f'= (v, h, \psi)$ if one of the following cases holds:
\begin{itemize}
    \item $f'$ is principal and $f$ is a minor formula
    \item  $f'$  is in the context, and $f$ is the corresponding formula in a premise of the rule. 
\end{itemize}
}
\end{defn}

\begin{defn}[Trace] A \emph{trace} in $d$ is an infinite sequence $t=(u_i, k_i, \phi_i)_i$ of formula occurrences  such that {$t(i+1)$ is an ancestor of $t(i)$} for every $i<\omega$.
The unique branch $\branch b \in \Branch{d}$ which contains $(u_i)_i$ is said to \emph{carry} $t$.
\end{defn}

Notice that, if $t=(u_i, k_i, \phi_i)_i$  is a trace in $d$ then $ \phi_{i+1}\flnleq\phi_i $ if $t(i)$ is principal in a logical rule and $\phi_i=\phi_{i+1}$ otherwise.

\begin{defn}
    Let $\tau$ be a weak thread. A \emph{trace of $\tau$} is a trace $t=(u_i, k_i, \phi_i)_i$ such that $(\phi_i)_i$ is an expansion of $\tau$. A branch  $\branch b\in \Branch d$ \emph{bears} $\tau$, equivalently $\tau$ is \emph{borne} by $\branch b$, if $\tau$ has a trace carried by $\branch b$. We say that a set of branches \emph{bears} a thread if one of its branches does so.  
\end{defn}

By weak K\"{o}nig's lemma, every branch of a derivation carries at least one trace. As a consequence, every branch bears at least one weak thread though not necessarily a (non-weak) thread. Indeed, we can state:

\begin{lem}\label{lem:non-steble-traces-implies-bearing-thread}
    A branch bears a thread iff the branch carries a trace which is infinitely often principal in a logical rule.
\end{lem}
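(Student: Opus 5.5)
Both directions come from comparing the formula sequence of a trace with the definitions of weak thread and thread. The one fact about traces we use is the remark just after the definition of trace. If $t=(u_i,k_i,\phi_i)_i$ is a trace, then $\phi_{i+1}\fll\phi_i$ whenever $t(i)$ is principal in a logical rule, and $\phi_{i+1}=\phi_i$ otherwise. For the cut rule the principal/minor clauses of ancestry do not apply, so a trace cannot pass through a cut formula; it only follows context formulas there. Hence every step of a trace either repeats the formula or moves from a principal formula to a minor one.

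\emph{($\Leftarrow$)} Let $\branch b$ carry a trace $t=(u_i,k_i,\phi_i)_i$ that is principal in a logical rule at infinitely many indices $i$. We attach bits $p_i$ so that $((\phi_i,p_i))_i$ is a weak thread, by reading off the rule at $u_i$.
\begin{itemize}
\item If $t(i)$ is principal with $\phi_i=\psi_0\star\psi_1$, the minor formula $\phi_{i+1}$ is $\psi_j$ for a definite $j$, and we set $p_{i+1}=j$. For $\otimes$ and $\with$, $j$ is determined by which premise lies on $\branch b$; for $\parr$ it is determined by which of the two minor occurrences $t(i+1)$ is.
\item For a fixed point rule we set $p_{i+1}=0$.
\item On non-principal steps we copy the previous pair.
\end{itemize}
The rules $\unit$ and $\top$ have no premises, so they cannot occur along an infinite trace. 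Since $\phi_{i+1}\fll\phi_i$ at principal steps, the weak thread clauses hold.

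This weak thread is not eventually constant, because strict steps occur infinitely often. It therefore has a contraction $\hat\tau$ satisfying $\hat\tau(i+1)\fll\hat\tau(i)$, that is, a thread, and $t$ is a trace of its expansion. So $\branch b$ bears a thread.

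\emph{($\Rightarrow$)} Suppose $\branch b$ carries a trace $t$ of $\tau$, where $\tau$ is an expansion of a thread $\hat\tau$. We need infinitely many principal indices. If there were only finitely many, the remark would make $\phi_i$ eventually constant. Then $(\phi_i)_i$ would be eventually constant, so it could not be an expansion of any thread: contracting repetitions would produce a finite sequence, while a thread is infinite and strictly decreasing at every step. Hence $t$ is principal in a logical rule infinitely often.

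The main obstacle is bookkeeping rather than mathematics. One must check that the pair component $p_i$ can always be chosen consistently with ancestry, in particular for $\parr$, which has two minor formulas, and for the fixed point rules. One must also be clear about what ``bears'' means: the trace's formula sequence has to be the \emph{expansion} of a thread, so repetitions are allowed but eventual constancy is not.
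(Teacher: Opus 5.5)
The paper states this lemma without proof, and your argument is the natural unwinding of the definitions. It is correct except for one step, which fails in exactly the paper's motivating example.

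In $(\Leftarrow)$ you conclude that your weak thread is not eventually constant ``because strict steps occur infinitely often''. That presupposes that a principal step always changes the formula, and it does not. The unfolding of $\kappa X.X$ is $\kappa X.X$ itself, and this is the only case where that happens. The fixed-point clause also forces $p_{i+1}=0$. A trace that reaches $\kappa X.X$ never leaves it, so from then on both its formula sequence and your pair sequence are constant, however often the trace is principal.

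This is the red trace on $\branch b$ in the left derivation of \Cref{fig:example-derivations}, which the paper says bears a thread. For the constant sequence $\nu X.X,\nu X.X,\dots$ to be a thread at all, the relation $\fll$ used in the thread definition (and in the remark you rely on) must relate $\nu X.X$ to itself. In effect it is the immediate Fischer--Ladner successor relation, so it is not irreflexive. (Read literally as ${\flleq}\setminus{\fleq}$, it would admit no infinite threads, since Fischer--Ladner closures are finite.) Contracting repetitions then leaves only a finite sequence, so your construction produces no thread. In fact, with ``expansion'' defined by contraction of repetitions, this branch bears no thread at all, so the lemma fails as literally stated.

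The repair is to extract the thread from the principal positions rather than by contraction.
\begin{enumerate}
\item Let $i_0<i_1<\cdots$ enumerate the indices at which $t$ is principal, and set $\hat\tau(k)\dfn\phi_{i_k}$, with your bits.
\item No step strictly between $i_k$ and $i_{k+1}$ is principal, so $\hat\tau(k+1)=\phi_{i_k+1}$ is the minor formula of $\phi_{i_k}$. Hence $\hat\tau$ is a thread; in the problematic case it is eventually the constant thread $\kappa X.X,\kappa X.X,\dots$.
\item The sequence $(\phi_i)_i$ arises from $\hat\tau$ by stuttering, with $\hat\tau$ advancing exactly at the principal steps of $t$.
\end{enumerate}
You then have to read ``$t$ is a trace of $\hat\tau$'' as this principal-aligned stuttering. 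The formula sequence alone cannot tell a principal step at $\nu X.X$ from an idle one. The paper relies on precisely this distinction in the example following \Cref{fig:IC-set-example}, where a constant $\mu X.X$ trace that is not infinitely often principal bears only a weak thread. With that reading your $(\Rightarrow)$ direction is immediate, and the rest of your bookkeeping (the choice of bits for the binary connectives and the fixed points) goes through unchanged.
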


Thanks to the notions of thread and trace, we can now define the global correctness condition for derivations of ill-founded $\muMALL$:

\begin{defn}[Progressing derivations] A \emph{progressing} derivation is a 
derivation for which every branch bears a good thread. 
 With  $\PDer{}$  we denote the set of progressing derivations.
\end{defn}

\begin{exmp}[\Cref{fig:example-derivations}, revisited]
 Consider the leftmost derivation of~\Cref{fig:example-derivations}. It has only two branches $\branch b$ and $\branch{b'}$ defined by $\branch{b}(n)= \langle 0, \overset{n}{\ldots}, 0 \rangle$ and $\branch{b}(n)= \langle 1, 0, \overset{n-1}{\ldots}, 0 \rangle$, respectively.
 \purple{Let $\tau$ (resp., $\tau^\perp$) be the  thread bourne by $\branch{b}$ (resp., $\branch{b}'$) and whose trace is highlighted with \red{red}  (resp., \blue{blue}) formula occurrences.}  Since $\tau^\perp$ is the only thread borne by  $\branch{b}'$ and it is not good, then the derivation is not progressing.

\purple{Concerning the centre derivation, its only branch $\branch b$   carries the trace highlighted with \mygreen{green} formulas. Its thread $\tau$ unfolds two fixed points, $A$ and $B(A)$, the smallest one is $A$. Then, $\tau$ is good precisely when  $\sigma=\nu$, in which case the derivation is progressing. }
\end{exmp}

\begin{figure}
    \centering
\(
\id_{\unit}\dfn 
\vlderivation{
\vlin{\perp}{}{\perp, \unit}{\vlin{\unit}{}{\unit}{\vlhy{}}}
}
\qquad
\id_{\top}\dfn 
\vlderivation{
\vlin{\top}{}{0, \top}{\vlhy{}}
}
\)\\[1.25em]
\(
\id_{ \phi \otimes  \psi}
\dfn 
\vlderivation{
\vlin{\parr}{}{\phi^\perp\parr\psi^\perp, \phi \otimes  \psi}
{
\vliin{\otimes}{}{\phi^\perp,\psi^\perp, \phi \otimes  \psi}
{\vldr{\id_{\phi}
}{\phi^\perp, \phi}}
{\vldr{\id_{\psi}
}{\psi^\perp, \psi}}
}
}
\qquad 
\id _{\phi \oplus \psi} 
\dfn
\vlderivation{
\vliin{\with}{}{\phi^\perp \with \psi^\perp, \phi \oplus \psi}
{\vlin{\choice 0\oplus}{}{\phi^\perp, \phi \oplus \psi}{\vldr{\id_{\phi}
}{\phi^\perp, \phi}}}
{\vlin{\choice 1\oplus}{}{ \psi^\perp, \phi \oplus \psi}{\vldr{\id_{\psi}
}{\psi^\perp, \psi}}}
}
\)
\(
\id_{\mu X \phi}
\dfn 
\vlderivation{
\vlin{\nu}{}{\nu X \phi^\perp, \mu X \phi}{\vlin{\mu}{}{\phi^\perp(\nu X \phi^\perp), \mu X \phi}{
\vldr{
\id_{\phi(\mu X \phi)}
}{\phi^\perp(\nu X \phi^\perp),\phi(\mu X\phi)}
}
}
}
\)
    
    \caption{Coinductive definition of the identity derivation for the connectives \( \perp \), \( \unit \), \( \otimes \), \( \oplus \) and \( \mu \); the case of other connectives is defined dually.}
    \label{fig:id}
\end{figure}

\begin{rem}[Identity derivation]\label{rem:id}
  The sequent calculus did not include an identity rule. However, as observed in~\cite{Baelde2016InfinitaryPT}, we can  construct coinductively the \emph{identity derivation}  $\id_{\phi}$ as in \Cref{fig:id}. 
\end{rem}

\begin{prop}\label{fact:identity-branches} Every branch of  the identity derivation bears exactly two dual  threads. 
 In particular,      the identity derivation is (cut-free and) progressing.
\end{prop}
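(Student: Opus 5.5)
We will prove the statement by analysing the coinductive structure of $\id_\phi$ directly. The first step is to establish an invariant: every sequent occurring in $\id_\phi$ has the form $\psi^\perp, \psi$, where $\psi \flleq \phi$ or $\psi^\perp \flleq \phi$. The only exceptions are the topmost sequents of the finite blocks ($\unit$, and the initial sequents of the $\top$ rule), and these are leaves. Leaves do not lie on any infinite branch, so only infinite branches matter. We show the invariant by inspecting \Cref{fig:id}. Each block of $\id_\phi$ applies one rule to $\psi^\perp$ and one rule to $\psi$ (the dual connectives), and then recurses into $\id_{\psi'}$ for an immediate Fischer-Ladner successor $\psi'$ of $\psi$. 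Here $\psi'$ is either an immediate subformula of $\psi$ or its unfolding.

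The second step follows an infinite branch $\branch b$. Along $\branch b$ we read off a sequence of blocks whose root sequents are $\psi_n^\perp, \psi_n$ with $\psi_0 = \phi$. Each $\psi_{n+1}$ is chosen by the branch: which premise of $\otimes$ or $\with$ it passes through, or the unfolding in the fixed point case. In every case $\psi_{n+1} \fll \psi_n$ in the sense required by threads, with the correct bit $p$. We then define two traces. The first follows the occurrence of $\psi_n$ in the conclusion of the block and moves to the occurrence of $\psi_{n+1}$ at the top of the block. The second does the same for $\psi_n^\perp$ and $\psi_{n+1}^\perp$. Inside a block, the occurrence that is not currently principal is simply carried through the context. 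This gives an ancestor chain: for instance, in the $\otimes$ block, the $\parr$ rule makes $\phi^\perp\parr\psi^\perp$ principal, and the formula $\phi\otimes\psi$ stays in the context until the $\otimes$ rule. Each trace is principal once per block, hence infinitely often. So by \Cref{lem:non-steble-traces-implies-bearing-thread} the branch bears the threads $\tau = (\psi_n)_n$ and $\tau^\perp = (\psi_n^\perp)_n$. These are dual by construction.

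The third step shows that these are the only threads borne by $\branch b$, which gives ``exactly two''. Every sequent on $\branch b$ has exactly two formula occurrences (apart from finitely many intermediate three-formula sequents inside $\otimes$ or $\parr$ blocks; these are subsequents of the same two tracks). So any trace carried by $\branch b$ must eventually coincide with one of the two tracks just described. A trace that switched tracks would violate the ancestry relation, since the two tracks never merge. Hence any thread borne by $\branch b$ has an expansion that is a tail of $\tau$ or of $\tau^\perp$. Because threads are determined by contracting repetitions and begin at the root, it equals $\tau$ or $\tau^\perp$.

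Finally, by \Cref{lem:good-iff-bad} exactly one of $\tau$ and $\tau^\perp$ is good, so every branch bears a good thread and $\id_\phi$ is progressing. It is cut-free because no block of \Cref{fig:id} uses $\cut$. I expect the main obstacle to be the bookkeeping in the second and third steps. Specifically, one must check that the context occurrence is correctly tracked through each block, so that the two traces really are ancestor chains and the count is exactly two. This means verifying the cases of \Cref{fig:id} together with their duals.
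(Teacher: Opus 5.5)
Your proposal is correct and takes the same route as the paper, whose proof is just an inspection of the construction of $\id_\phi$ followed by an appeal to \cref{lem:good-iff-bad}; you make that inspection explicit. One slip to fix: a branch can pass through infinitely many three-formula sequents (e.g.\ in $\id_{\nu X(X\otimes X)}$), so they are harmless not because they are finitely many, but because the $\parr$-minor that goes to the off-branch premise of $\otimes$ has no ancestor on $\branch b$; also, ``exactly two'' relies on the paper's implicit convention that traces start at the conclusion.
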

\begin{proof}
   By inspecting the construction of the identity derivation using \cref{lem:good-iff-bad}.  
\end{proof}

\subsection{Cut reduction  and $\omega$-normalisation}\label{sec:cut-reduction}

We recall that a cut in a derivation  is called \emph{critical} if both cut formulas are principal for a rule and \emph{non-critical} otherwise.

\begin{defn}[Cut reduction rules] 
	The \emph{critical cut reduction rules} (reducing critical cuts) and the \emph{commuting cut reduction rules} (reducing non-critical cuts) are defined in \cref{fig:cut-elim-prin,fig:cut-elim-comm}, respectively.\footnote{To avoid duplication of the cut reduction rules,  we will not assume an order of the premises of the cut.} Notice that the commuting rules allow for permutation of cut rules. As usual, we write  $\cutelims$ for reflexive and transitive closure of $\cutelim$.
 \end{defn}
 
    \begin{defn}[Reducts]
   Given a cut $\rrule$, the  cuts generated by applying a cut reduction step to $\rrule$ are called \emph{reducts (of $\rrule$)}. Notice that a cut can have zero or more reducts. 
\end{defn}

\begin{rem}\label{rem:at-least-one-reduct}
The only cut reduction step that has no reduct is the critical cut reduction step with cut formulas $\unit$ and $\perp$. 
\end{rem}

\begin{figure}
    \centering
   \[
 \vlderivation{
 \vliin{\cut}{}{\Gamma}
  {
  \vlin{\unit}{}{\unit}{\vlhy{}}
  }
  {
  \vlin{\bot}{}{\bot, \Gamma}{\vlhy{\Gamma}}
  }
 }
 \quad 
 \cutelim
 \qquad
 \vlderivation{
 \vlhy{\Gamma}
 }
   \]
   \[
   \vlderivation{
   \vliin{\cut}{}{\Gamma, \Delta}
    {
   \vlin{\mu}{}{\Gamma, \mu X\phi}{\vlhy{\Gamma, \phi(\mu X\phi)}}
   }
   {
\vlin{\nu}{}{\Delta, \nu X\phi^\perp}{\vlhy{\Delta, \phi^\perp(\nu X\phi^\perp)}}
}
}
\qquad 
\cutelim
\qquad 
\vliinf{\cut}{}{\Gamma, \Delta}{\Gamma, \phi(\mu X\phi)}{\Delta, \phi^\perp(\nu X\phi^\perp)}
   \]
   \[
   \vlderivation{
\vliin{\cut}{}{\Gamma, \Delta, \Sigma}
{
\vliin{\otimes}{}{\Gamma, \Delta, \phi \otimes \psi}{\vlhy{\Gamma, \phi}}{\vlhy{\Delta, \psi}}
}   
{
\vlin{\parr}{}{\phi^\perp\parr \psi^\perp, \Sigma}{\vlhy{\phi^\perp, \psi^\perp, \Sigma}}
}
   }
   \qquad \cutelim \qquad
   \vlderivation{
   \vliin{\cut}{}{\Gamma, \Delta, \Sigma}
    {\vlhy{\Delta, \psi}}
{\vliin{\cut}{}{\psi^\perp, \Gamma, \Sigma}{\vlhy{\Gamma, \phi}}{\vlhy{ \phi^\perp,\psi^\perp,  \Sigma}}
}
   }
   \]
   \[
\vlderivation{
\vliin{\cut}{}{\Gamma, \Delta}
{
\vlin{\oplus^i}{}{\Gamma, \phi_0 \oplus \phi_1}{\vlhy{\Gamma, \phi_i}}
}
{
\vliin{\with}{}{\phi_0^\perp \with \phi_1^\perp, \Delta}{\vlhy{\phi_0^\perp, \Delta}}{\vlhy{\phi_1^\perp, \Delta}}
}
}
\qquad
\cutelim
\qquad
\vlderivation{
\vliin{\cut}{}{\Gamma, \Delta}{\vlhy{\Gamma, \phi_i}}{\vlhy{\phi_i^\perp, \Delta}}
}
   \]
    \caption{Critical cut reduction rules.}
    \label{fig:cut-elim-prin}
\end{figure}
\begin{figure}
    \centering
    \[
    \vlderivation{
\vliin{\cut}{}{\Gamma, \Delta}{\vlin{\rrule}{}{\Gamma, \phi}{\vlhy{\Gamma_1, \phi}}}{\vlhy{\phi^\perp, \Delta}}
    }
    \qquad\cutelim\qquad
    \vlderivation{
    \vlin{\rrule}{}{\Gamma, \Delta}{\vliin{\cut}{}{\Gamma_1, \Delta}{\vlhy{\Gamma_1, \phi}}{\vlhy{\phi^\perp, \Delta}}}
    }
    \]
    \[
    \vlderivation{
    \vliin{\cut}{}{\Gamma, \Delta}{\vliin{\rrule\neq \with}{}{\Gamma, \phi}{\vlhy{\Gamma_1, \phi}}{\vlhy{\Gamma_2}}}{\vlhy{\phi^\perp, \Delta}}
    }
    \qquad \cutelim \qquad
    \vlderivation{
    \vliin{\rrule}{}{\Gamma, \Delta}{\vliin{\cut}{}{\Gamma_1, \Delta}{\vlhy{\Gamma_1, \phi}}{\vlhy{\phi^\perp, \Delta}}}{\vlhy{\Gamma_2}}
    }
    \]
     \[
    \vlderivation{
    \vliin{\cut}{}{\Gamma, \Delta}{\vliin{\with}{}{\Gamma, \phi, \psi \with \theta}{\vlhy{\Gamma, \phi, \psi}}{\vlhy{\Gamma, \phi, \theta}}}{\vlhy{\phi^\perp, \Delta}}
    }
    \qquad \cutelim \qquad
    \vlderivation{
    \vliin{\with}{}{\Gamma, \Delta, \psi \with \theta}{\vliin{\cut}{}{\Gamma, \Delta, \psi}{\vlhy{\Gamma_1, \phi, \psi}}{\vlhy{\phi^\perp, \Delta}}}{\vliin{\cut}{}{\Gamma, \Delta, \theta}{\vlhy{\Gamma, \phi, \theta}}{\vlhy{\phi^\perp, \Delta}}}
    }
    \]
    \caption{Commuting cut reduction rules. }
    \label{fig:cut-elim-comm}
\end{figure}

The notion of $\omega$-normalisation relies on the existence of $\omega$-long sequences of cut reduction steps - here called \emph{$\omega$-reduction sequences} -  that converge to a derivation.

\begin{defn}[$\omega$-reduction sequence] 
An \emph{$\omega$-reduction sequence (from $d$)} is an  $\omega$-indexed sequence of derivations $\reds s$ with $\reds s(0)=d$ and $\reds s(i) \cutelim^* \reds s({i+1})$ for all $i < \omega$. The sequence is called \emph{strict} if $\reds s (i)\neq \reds s (i+1)$ for infinitely many $i \in \omega$. \purple{We denote with $\depth{\reds s}i$ the minimal length of the (nodes labelled by) cut rules reduced by  $\reds s(i) \cutelims \reds s (i+1)$.} We say that $\reds s $ is \emph{depth-increasing} if either it is not strict or   $\lim_{i \in \omega} \depth{\reds s}i=\infty$. 
\end{defn}

Intuitively, non-strict $\omega$-reduction sequences implement \textit{finite} rewriting procedures. Moreover, the depth-increasing condition ensures the existence of \purple{the limit of an $\omega$-reduction sequence $\reds s$, which we denote by $\lim_{i \in \omega} \reds s(i)$}. Notice, however,  that such a limit may contain cuts (and  might not  satisfy progressivity). 

\begin{defn}[$\omega$-normalisation] A  derivation $d$ is \emph{$\omega$-normalisable} if there is a depth-increasing (possibly non-strict)  $\omega$-reduction sequence from $d$, and its limit is cut-free and progressing. We denote with $\Cnor{}$   the set of  $\omega$-normalisable derivations.
\end{defn}

\begin{rem}\label{rem:expansion}
       If $d \cutelim d'\in \Cnor{}$ then $d\in \Cnor{}$. 
\end{rem}

The following is a simple consequence of the $\omega$-compression property for transfinite reduction sequences of ill-founded  $\muMALL$ from~\cite{Saurin}. 

\begin{prop}[Compression]\label{prop:compression} Let $\reds s$ and $\reds t$ be depth-increasing $\omega$-reduction sequences from, respectively, $d$ and $\lim_{i \in \omega}\reds s(i)$. There exist a   depth-increasing $\omega$-reduction sequence $\reds r$ from $d$  such that   $\lim_{i \in \omega}\reds r(i)= \lim_{i \in \omega}\reds t$.
\end{prop}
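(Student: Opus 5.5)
The plan is to concatenate $\reds s$ and $\reds t$ into a single $\omega$-indexed reduction sequence by interleaving, using the depth-increasing property to ensure that the steps of $\reds t$ can be performed ``early'' on finite approximants of $\lim_i \reds s(i)$. This is essentially the $\omega$-compression property for transfinite (length $\omega\cdot 2$) reduction sequences, which is why the statement is presented as a consequence of~\cite{Saurin}. I would either invoke that result directly or reprove the relevant special case as follows.

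First, dispose of the non-strict cases. If $\reds s$ is not strict, then $d \cutelims \lim_i \reds s(i)$ in finitely many steps, and $\reds r$ is $\reds t$ prefixed by those steps. If $\reds t$ is not strict, then $\lim_i \reds t(i)$ is obtained from $\lim_i\reds s(i)$ by finitely many steps, and we proceed as in the general case below.

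In the general case, set $d_\infty=\lim_i\reds s(i)$. Consider a step $\reds t(j)\cutelims\reds t(j+1)$, which reduces finitely many cuts all at depth at least $\depth{\reds t}{j}$. Each cut reduction step is local: it depends only on a bounded neighbourhood of the cut (the two rule instances above it) and changes only nodes at or above the cut position. Since $\depth{\reds s}{i}\to\infty$, for every $n$ there is an index $N(n)$ such that $\reds s(i)$ and $d_\infty$ agree on all nodes of length $\le n$ for $i\ge N(n)$.

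We then build $\reds r$ in stages. At stage $k$, we have a derivation $e_k$ with $d\cutelims e_k$, and $e_k$ agrees with $\reds t(k)$ up to some depth $m_k$, with $m_k\to\infty$. To pass to stage $k+1$, we first continue along $\reds s$ (pushed through the residuals of the $\reds t$-steps already simulated) far enough that the nodes involved in the $k$-th block of $\reds t$ are stabilised. We then apply the same finite reduction to the corresponding cuts in the current derivation.

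The main obstacle is this simulation step. The $\reds s$-reductions performed after a $\reds t$-step must be transported along that step, as residuals, since commuting reductions may duplicate cuts (the $\with$ case) or permute cuts. I would argue via a projection/residual lemma: reductions at depth $>n$ commute with a finite reduction at depth $\le n$, up to the duplication caused by additive commutations, while remaining at depth $>n-c$ for a constant $c$ bounded by the height of the affected rules. Because each step changes only nodes above the reduced cut, the depths of residual reductions still tend to infinity. This yields $\depth{\reds r}{i}\to\infty$ and that the limit of $\reds r$ agrees with $\lim_i\reds t(i)$ at every depth, hence equals it.
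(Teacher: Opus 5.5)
Your proposal is correct and matches the paper, which gives no argument beyond deriving the statement from the $\omega$-compression property of~\cite{Saurin}; your fallback sketch is the standard interleaving-with-residuals proof of that property, specialised to reduction sequences of length $\omega\cdot 2$. The one point to state more sharply is the depth estimate: each cut reduction displaces the subderivations above the reduced cut by at most two levels, so a step of $\reds s$ at depth $m$ has residuals at depth at least $m-2L$ after $L$ simulated $\reds t$-steps (not a bound of the form $n-c$ with a uniform $c$), and hence how far you advance along $\reds s$ before simulating the $k$-th block must be chosen depending on this cumulative $L$, which is what actually gives $\depth{\reds r}{i}\to\infty$.
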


\begin{prop}\label{prop:d-normalises-iff-pcut-eta-expansion-does}
Let $d$ be a derivation with conclusion $\phi_0, \ldots, \phi_n$. For all $i < n$:
\begin{enumerate}
\item there is a depth-increasing $\omega$-reduction sequence from $\cutform{d}{\id_{\phi_i}}{}$ whose limit is $d$.
\item There is a depth-increasing $\omega$-reduction sequence from $d$ 
    iff there is one from $\cutform{d}{\id_{\phi_i}}{}$ with the same limit.
\end{enumerate}
\end{prop}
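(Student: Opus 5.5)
For part (1) I will build, by coinduction on the structure of $d$, a depth-increasing $\omega$-reduction sequence that pushes the cut against $\id_{\phi_i}$ upward along the ancestry of $\phi_i$. I write $d \mid \id$ for the cut of $d$ with $\id_{\phi_i}$ on $\phi_i$, and strengthen the claim: for every derivation $e$ and every formula occurrence $\phi$ in its conclusion, the cut of $e$ with $\id_{\phi}$ on $\phi$ reduces by a depth-increasing sequence to $e$. The last rule $\rrule$ of $e$ gives two cases.

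\emph{Case 1: $\phi$ is in the context of $\rrule$.} A commuting step moves the cut above $\rrule$, into each premise containing the ancestor of $\phi$. If $\rrule=\with$, the cut is duplicated into both premises; if $\rrule$ is a two-premise multiplicative rule, or $\cut$, it goes into exactly one. This yields $\rrule$ applied to cuts of the premises with $\id_{\phi}$ (or the premises themselves). Each new cut sits at depth one more than before.

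\emph{Case 2: $\phi$ is principal in $\rrule$.} I unfold one step of $\id_\phi$ (\Cref{fig:id}); its last rule introduces the dual $\phi^\perp$, so the cut is critical. Take $\phi=\phi_0\otimes\phi_1$. The identity ends with $\parr$ and then $\otimes$ on $\phi^\perp,\phi$. The critical step produces cuts of the premises of $\rrule$ against the $\otimes$-subderivation of $\id$ whose conclusion contains $\phi$. That cut is non-critical on the identity side, since $\phi$ is principal there but the cut formulas $\phi_j^\perp$ are not. Commuting it upward, past the $\otimes$ of $\id$, gives the rule $\otimes$ on $\phi$ with premises $\cut(e_j \mid \id_{\phi_j})$. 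For $\mu/\nu$, one critical step gives $\mu$ on $\phi$ with the premise cut against $\id_{\phi(\mu X\phi)}$. The additive, unit and $\top$ cases are similar; for $\top$ the cut reduces to the $\top$-rule on $e$'s conclusion, since $\id_\top$ ends in the $\top$ axiom. In every case, after finitely many steps the result is the last rule of $e$ on the same sequent, with the remaining cuts strictly higher.

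Iterating the construction breadth-first over the tree gives an $\omega$-reduction sequence whose $i$-th segment reduces cuts at depth at least roughly $i$, so the sequence is depth-increasing. Its limit agrees with $e$ on every finite level, hence equals $e$.

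The main obstacle is bookkeeping. The set of pending cuts at depth $n$ may be infinite across branches, since $\with$ duplicates cuts, so one finite stage cannot process all cuts at a given depth. I will schedule by depth instead: at stage $n$, I reduce all pending cuts below depth $n$. The number of nodes at each depth is finite, because the tree is finitely branching, so each stage is a finite $\cutelim^*$. One subtlety is that some cuts never disappear, namely those propagated along an infinite chain of context rules. Each such cut is still pushed strictly upward at every stage, so $\depth{\reds s}{i}\to\infty$.

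For part (2), the forward direction follows from part (1) and \Cref{prop:compression}: first reduce the cut with $\id_{\phi_i}$ to $d$, then continue with the given sequence from $d$. Conversely, take a depth-increasing sequence from the cut of $d$ with $\id_{\phi_i}$. Every step inside the $d$ component is mirrored as a step on $d$ alone. Steps that interact with the $\id$ cut behave like the transformation of part (1), which only ever reproduces the shape of $d$. Since this cut is a pure identity, I will argue that dropping it gives a depth-increasing sequence from $d$ with the same limit. I expect this direction to need the most care.
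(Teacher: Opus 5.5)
Your part (1) is the paper's argument: push the identity cut upward, only commuting it past the cuts of $d$ and never reducing them. You give it with more bookkeeping. Your forward direction of (2) is also fine. You compose the part-(1) sequence with the given one via \Cref{prop:compression}. The paper instead first runs the given sequence inside $\cutform{d}{\id_{\phi_i}}{}$ and then applies part (1) to $\cutform{d^*}{\id_{\phi_i}}{}$. Either order works.

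The converse has a genuine gap, and it is not a matter of bookkeeping. Your claim that steps involving the identity cut ``only ever reproduce the shape of $d$'' is false. When neither cut formula is principal, the commuting rules may be applied on either side of the cut. A sequence can therefore push a rule of $d$ below a rule contributed by the identity, permuting the two.

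Concretely, let $d$ be the cut-free derivation of $\unit\oplus\zero,\mu X.X$ that applies $\mu$, then $\choice{0}{\oplus}$, then $\mu$ forever. Consider $\cutform{d}{\id_{\mu X.X}}{}$ after the critical $\mu/\nu$ step. The cut now sits between the $\choice{0}{\oplus}$ of $d$ and the identity's $\mu$ on its non-cut formula $\mu X.X$, and neither rule is principal for its cut formula. Commute $\choice{0}{\oplus}$ first, then proceed as in part (1). This gives a depth-increasing sequence whose cut-free limit begins with $\choice{0}{\oplus}$ rather than $\mu$, so its limit differs from $d$. But $d$ is cut-free, so the only $\omega$-reduction sequence from $d$ is constant, with limit $d$.

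Hence no erasure or projection of the identity cut can produce a sequence from $d$ with the same limit. The right-to-left direction, read with ``same limit'' as the paper's proof and later uses require, fails as stated. The paper's postponement argument rests on the same assumption: it claims $\reds t$, which fully reduces the identity cut, has limit $d$, whereas here $\reds t$ is the whole sequence. So you are not missing an idea that the paper supplies.

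What \Cref{prop:id-in-tot-candidate} actually needs is weaker: if some sequence from $\cutform{d}{\id_{\phi_i}}{}$ has a cut-free progressing limit, then some sequence from $d$ does. Proving that would require an argument that accounts for such rule permutations, or a restriction on how the identity cut may be reduced, not just dropping the cut.
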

\begin{proof}
The first claim is straightforward noticing that reducing the cut between $d$ and the  identity derivation does not affect the structure of $d$ while gradually pushing upward the cuts. In such a  sequence of cut reduction steps the cut rules in $d$ are not reduced and are merely commuted with the ``active'' cut.

Concerning point 2, let $\reds s$ be a depth-increasing $\omega$-reduction sequence from $d$ with limit  $d^*$. Clearly, we can construct a depth-increasing $\omega$-reduction sequence from $\cutform{d}{\id_{\phi_i}}{}$ with  limit $\cutform{d^*}{\id_{\phi_i}}{}$. Let $\reds t$ be this sequence. By point 1 there is a depth-increasing $\omega$-reduction sequence from $\cutform{d^*}{\id_{\phi_i}}{}$ and with  limit $d^*$. \Cref{prop:compression} yields a depth-increasing $\omega$-reduction sequence from  $\cutform{d}{\id_{\phi_i}}{}$ whose limit is $d^*$. 

  Concerning the converse of point 2, suppose there is $\reds s$ be a depth-increasing $\omega$-reduction sequence from $\cutform{d}{\id_{\phi_i}}{}$. We can clearly postpone all cut reduction steps applied to cuts in $d$, obtaining two  $\omega$-reduction sequences $\reds t$ and $\reds r$ such that:
  \begin{itemize}
      \item $\reds t$ is a depth-increasing $\omega$-reduction sequence from $\cutform{d}{\id_{\phi_i}}{}$ that  fully reduces the cut between $d$ and the identity derivation $\id_{\phi_i}$ (while permuting downward all the cuts in $d$). It has limit $d$.
      \item $\reds r$ is a   $\omega$-reduction sequence from $d$  that performs all the postponed reduction steps of $\reds s$ (with the same order as $\reds s$), and with the same limit as $\reds s$.  
  \end{itemize}
  
 Moreover, since $\lim_{i \in \omega}\depth{\reds s}i= \infty$ then $\lim_{i \in \omega} \depth{\reds r}i= \infty$. 
\end{proof}

\purple{
In the next section we introduce the notion of \textit{external progressivity}, a global condition that statically certifies $\omega$-normalisability, meaning that derivations satisfying this requirement can be easily shown to rewrite to a cut-free and progressing derivation in the limit (\Cref{lem:good-norm}).  
}

\section{Internal closure and external progressivity}\label{sec:good}

Infinitary cut elimination in the form of $\omega$-normalisation annihilates certain threads while preserving others. The two classes of threads can be characterised as the threads initiating from particular formula occurrences. The annihilated threads, which we refer to as \emph{internal}, are the threads stemming from cut formulas, while the preserved threads, called \emph{external}, are those initiating from the root of the derivation.

The next definitions clarify these notions.

\begin{defn}[Internal versus external]
Let $d$ be a derivation. A trace $t$ is \emph{internal} if $t(0)$ is a cut formula occurrence, and \emph{external} if $t(0)$ is a formula occurrence in the conclusion of $d$.  An \emph{internal} (resp. \emph{external}) thread of  a branch $\branch b$ is a thread of an  internal (resp. external) trace traversing $\branch b$. 
\end{defn}

Clearly, \emph{a trace is either internal or external}.

\begin{defn}[Coherence]
Let $d$ be a  derivation. For each thread $\tau$ we introduce a relation $\ortcoh{\tau}{}{}$ on branches of $d$ given by $\ortcoh{\tau}{\branch b}{\branch c}$ iff $\branch b \wedge \branch c$ is a cut in $d$ and there exists traces $s,t$ traversing $\branch b$ and $\branch c$ respectively such that:
\begin{itemize}
    \item $s(0)$ and $t(0)$ are the cut formula occurrences at nodes $\branch b(n+1)$ and $\branch c(n+1)$ respectively, where  $n= |\branch b \wedge \branch c|$
    \item $s$ bears $\tau$ and $t$ bears $\tau^\perp$.
 \end{itemize}
  \purple{In this case, we call $\branch{b}$ and $\branch{c}$ \emph{coherent}.} Notice that $\ortcoh{\tau}{\branch b}{\branch c}$ iff $\ortcoh{\tau^\perp}{\branch c}{\branch b}$.
\end{defn}

\emph{Internally closed sets} are closed sets of coherent branches (see preliminary conventions):

\begin{defn}[IC set] Let $d$ be a derivation. A non-empty set $X \subseteq \Branch{d}$ of branches of $d$
 is \emph{internally closed} (or \emph{IC sets}) if it is closed and, moreover,  for every $\branch b \in X$ and  $\tau $ internal thread of ${\branch b}$, there exists $\branch c \in X$ such that $ \ortcoh{\tau}{\branch b}{\branch c}$.
\end{defn}

IC sets allow us to introduce another global condition alternative to progressivity:

\begin{defn}[External progressivity]
    A  derivation is \emph{externally progressing} if every IC set bears a good external thread. We denote with $\GDer{}$ the set of derivations that are  externally progressing.
\end{defn}

\begin{rem}\label{rem:ext-prog-local-preserv}
    If $d \cutelim d'$ and $d\in \GDer{}$ then $d' \in \GDer{}$.
\end{rem}

\begin{figure}
    \centering
   \[
   \vlderivation{
\vliin{\cut}{}{\Gamma_0, \Delta_0}
{\vlin{\rrule_0}{}{\Gamma_0, \red{\nu X.X}}{\vlin{\rrule_1}{}{\Gamma_1, \red{\nu X.X}}{\vlin{\rrule_2}{}{\Gamma_2, \red{\nu X.X}}{\vlhy{\branch{b}\, \vdots}}}}}
{\vlin{\rrule'_0}{}{\blue{\mu X.X}, \Delta_0}{\vlin{\rrule'_1}{}{\blue{\mu X.X}, \Delta_1}{\vlin{\rrule'_2}{}{\blue{\mu X.X}, \Delta_2}{\vlhy{ \vdots\, \branch{b}'}}}}}
}
   \]
    \caption{IC sets and their cut elimination behaviour.}
    \label{fig:IC-set-example}
\end{figure}

An IC set $X$ represents a set of branches that are fully visited  by a cut reduction procedure. Specifically,   when  two branches $\branch b$ and $\branch {b}'$ are coherent, their cut  can  be eliminated only by  visiting \emph{both} branches in entirety, and  cut reduction will stepwise ``zip them up''  producing a cut-free limit branch. External progressivity  ensures that $X$ bears a good thread.

The following example clarifies the interplay between cut elimination and IC sets.

\begin{exmp}
  Consider the derivation in \cref{fig:IC-set-example} (left). It has only two branches, $\branch b$ and $\branch{b'}$ defined by $\branch{b}(n)= \langle 0, \overset{n}{\ldots}, 0 \rangle$ and $\branch{b}(n)= \langle 1, 0, \overset{n-1}{\ldots}, 0 \rangle$ respectively. \purple{Let $\tau$ (resp., $\tau'$) be the weak thread bourne by $\branch{b}$ (resp., $\branch{b}'$) and whose trace is highlighted with \red{red}  (resp., \blue{blue}) formula occurrences.} We have four cases:
 \begin{enumerate}
     \item Both $\tau$ and $\tau'$ are threads, and so $\tau$ is good and $\tau'=\tau^\perp$ is bad. By definition, the only IC set  is $\{\branch b, \branch{b}'\}$. Indeed, the cut can only be eliminated by performing infinitely many critical cut reduction steps, exploring both branches.
      \item \purple{$\tau$ is a thread but $\tau'$ is only a weak thread.  In this case, the only IC set is $\{\branch{b}'\}$. Indeed, no IC set can contain $\branch {b}$, as it bears an internal trace $\tau$ but $\branch{b}'$ does not bear $\tau^\perp$ ($\tau'$ is only a weak thread). To eliminate the cut, we need to explore entirely $\branch{b}'$ but we cannot do the same for $\branch{b}$, as that would require infinitely many critical cut reduction steps.}
      \item \purple{$\tau'$ is a thread but $\tau$ is only a weak thread. In this case, the only IC set is $\{\branch{b}\}$, and the only cut reduction process will visit entirely $\branch b$, but not $\branch{b}'$.}
    \item Neither $\branch b$ nor $\branch{b}'$ are infinitely often principal. We have three IC sets, namely $\{\branch{b}\}$, $\{\branch{b}'\}$, and $\{\branch b, \branch{b}'\}$. Indeed, to eliminate the cut we have three possible strategies, due to the fact that we will eventually perform only non-critical cut reduction steps: we can either eventually perform non-critical steps pushing the cut upward along only one branch (thus visiting entirely only one among $\branch{b}$ and $\branch{b}'$), or we can alternate non-critical steps that push upward the cut along both branches (thus fully visiting both of them).
    \end{enumerate}
\end{exmp}

\begin{rem}
Notice that not all the IC sets of a derivation can be fully visited during cut elimination. To see this, consider the derivation in \Cref{fig:IC-set-example}, and suppose that $\Gamma_i=\mygreen{\nu X.X}$ for all $i$, and that:
\begin{itemize}
    \item   $\red{\nu X.X}$ is principal for $\rrule_0$.
    \item  $\mygreen{\nu X.X}$ is principal for $\rrule_i$, for all   $i>0$. 
    \item no $\blue{\mu X.X}$ is principal for $\rrule'_i$.
\end{itemize}
 Clearly, $\{\branch b, \branch{b}'\}$ is a IC set, as neither $\branch b$ nor $ \branch{b}'$ bears an internal thread. However, since     $\red{\nu X.X}$ is principal for $\rrule_0$ and no $\blue{\mu X.X}$ is principal for $\rrule'_i$, there is only one cut elimination strategy, which repeatedly applies non-critical steps permuting downward all $\rrule'_i$. This means that, for all $n>1$,  $\branch{b}(n)$ will never become a premise of a cut. 
\end{rem}

One of the key results of this paper is that progressing derivations are also externally progressing (\Cref{lem:progressing-implies-reducible}). Notice that the converse does not hold, as the following example shows.

\begin{exmp}
  Let use consider the following derivation 
    \[
\vlderivation{
\vliin{\cut}{}{\red{\nu X .X}}{\vliq{\nu}{}{\red{\nu X.X}, \blue{\nu Y.Y}}{\vliq{\nu}{}{\red{X[\nu X .X/X]}, \blue{Y[\nu Y.Y/Y]}}{\vlhy{\vdots}}}}{\vlin{\mu}{}{\mygreen{\mu Y.Y}}{\vlin{\mu}{}{\mygreen{Y[\mu Y.Y/Y]}}{\vlhy{\vdots}}}}
}
    \]
It has only two branches, $\branch b$ and $\branch{b'}$ defined by  $\branch{b}(n)= \langle 0, \overset{n}{\ldots}, 0 \rangle$ and $\branch{b}(n)= \langle 1, 0, \overset{n-1}{\ldots}, 0 \rangle$, respectively.
\purple{Let $\tau$ (resp., $\sigma$) be the  thread bourne by $\branch{b}$  and whose trace is highlighted with \red{red}  (resp., \blue{blue})  formula occurrences. Moreover, let $\sigma^\perp$ be the  thread bourne by $\branch{b}'$ and whose trace is highlighted with \mygreen{green} formula occurrences.}
 \purple{Then, the only  IC set of the derivation is $\{\branch b, \branch {b'}\}$, as IC sets are non-empty and $ \ortcoh{\sigma}{\branch b}{\branch{b}'}$.} Moreover, $\{\branch b, \branch {b'}\}$ bears the good external thread, $\tau$, and so the derivation is externally progressing. However, the derivation is \emph{not} progressing, as the branch $\branch {b'}$ bears  no good thread.
\end{exmp}

However, the two global conditions  match in the cut-free setting.

\begin{prop}\label{prop:cut-free-good-equals-progressing}
A cut-free derivation $d$ is externally progressing iff $d$ is progressing.
\end{prop}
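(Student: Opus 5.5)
In a cut-free derivation $d$ there are no cut formula occurrences, so there are no internal traces and hence no internal threads. The plan is to use this to show that the IC sets of $d$ are exactly the non-empty closed subsets of $\Branch{d}$. With no internal thread to check, the coherence requirement in the definition of IC set holds vacuously. Likewise, every trace starts in the conclusion, so every thread borne by a branch is external.

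For the direction ``progressing implies externally progressing'', let $X$ be an IC set. It is non-empty, so we may pick $\branch b \in X$. Since $d$ is progressing, $\branch b$ bears a good thread, and by the remark above that thread is external. Hence $X$ bears a good external thread, as required.

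For the converse, assume $d$ is externally progressing and let $\branch b \in \Branch{d}$. The key step is to show that the singleton $\{\branch b\}$ is closed. If $\branch c \in \overline{\{\branch b\}}$, then for every $n$ we have $\branch c(i) = \branch b(i)$ for all $i<n$, so $\branch c = \branch b$. Hence $\{\branch b\}$ is a non-empty closed set, and by the first paragraph it is an IC set. By external progressivity it bears a good external thread, and the only branch in it is $\branch b$. So $\branch b$ bears a good thread. Since $\branch b$ was arbitrary, $d$ is progressing.

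The only step needing any care is the claim that a cut-free derivation has no internal threads. This holds because a trace is internal exactly when its first occurrence is a cut formula occurrence, and $d$ contains no $\cut$ rules. It is worth noting that both directions only use the fact that the vacuity of the IC condition turns singletons into IC sets; no appeal to \cref{lem:good-iff-bad} or to the finiteness of branching is needed.
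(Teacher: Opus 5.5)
Your proof is correct and is the paper's argument. The paper's proof is the one-liner that every thread in a cut-free derivation is external, and you unpack it: with no internal threads the coherence clause is vacuous, so IC sets are exactly the non-empty closed sets of branches, and singletons $\{\branch b\}$ are closed. Two minor points: ``every trace starts in the conclusion'' relies on the paper's implicit convention that every trace is internal or external (otherwise you would extend a trace down to the root, which is harmless since goodness is a tail property), and your closing remark is slightly off, because the forward direction uses only non-emptiness of IC sets, not singletons.
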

\begin{proof}
    By definition, as every thread is external in a cut-free derivation.
\end{proof}

\begin{rem}\label{rem:identity-is-good}
    From \cref{prop:cut-free-good-equals-progressing} and \cref{fact:identity-branches} it follows that the identity derivation is externally progressing.
\end{rem}

\section{Reducibility candidates}\label{sec:totality-candidates}

Reducibility candidates were introduced by Girard~\cite{Girard1972InterpretationFE} for proving normalisation of \purple{system} $\mathsf{F}$, a polymorphic calculus,  and generalise Tait's celebrated  \emph{reducibility} (or \emph{computability}) \emph{method} for   G\"{o}del's system $\mathsf{T}$ \cite{Tait67}. In the context of linear logic, reducibility candidates arise naturally from the so-called \textit{orthogonality construction} \cite{Girard87}, a recurring motif of this logic that stems from semantic motivations. Orthogonality-based reducibility candidates were developed by Baelde and Miller to prove normalisation of inductive $\muMALL$ in~\cite{BaeldeM07}. 

This section defines reducibility candidates for ill-founded $\muMALL$ via the notion of orthogonality, and outlines their properties. Our  constructions smoothly adapt~\cite{BaeldeM07}  to our infinitary setting. To recast the Baelde and Miller's normalisation argument for $\omega$-normalisation, we will introduce ordinal-indexed approximations of reducibility candidates.

\subsection{Orthogonality and reducibility candidates}

Henceforth, given a set of derivations $\Rset$  we will denote with $\Rset_\phi$ the set of derivations with a distinguished formula $\phi$ in the conclusion.

\begin{defn}[Orthogonality] 
	Let $\Rset$ be a set of derivations and \( \phi \) a formula. The ($\Rset$-)\emph{orthogonality relation} over \( \phi \)  is the relation \( {\perp_\phi^\Rset} \subseteq \Rset_\phi \times \Rset_{\phi^\perp} \) defined by $d\perp_\phi^{\Rset} d'$ iff $ \cutform d {d'} \phi\in \Rset$. 
	For an orthogonality relation \( {\perp} = {\perp_\phi^\Rset} \) and set of derivations $\mathcal{X} \subseteq \Rset_\phi $, we define \( \mathcal X ^{\perp} \coloneqq \{ d \in \Rset_{\phi^\perp} \mid e \perp d \text{ for all } e \in \mathcal X \}\).
\end{defn}

We write \( d \perp_\phi d' \) (or simply \( d \perp d' \)) if the set of derivations \( \Rset\) (and formula \(\phi\)) can be inferred from context. Thus, unless specified otherwise, when a set \( \mathcal{X} \) has been introduced in a context such as ``\( \mathcal{X} \subseteq \Rset_\phi \)'' the relation \( \perp \) denotes \( \perp_\phi^\Rset \). In particular, \( \mathcal{X}^{\perp \perp} \) means \( \mathcal{X}^{\perp_\phi \perp_{\phi^\perp}} \).

\begin{defn}[Reducibility candidate] 
	Let $\Rset$ be a set of derivations. A ($\Rset$-)\emph{reducibility candidate}  is a set $\mathcal{X}\subseteq \Rset_{\phi}$ satisfying $\mathcal{X}=\mathcal{X}^{\perp \perp}$.
\end{defn}

Intuitively, a $\Rset$-reducibility candidate  is a set of derivations that preserve membership in $\Rset$ when interacting by a cut rule with derivations from the dual candidates. In other words,  reducibility candidates carve out sets of derivations of $\Rset$ satisfying a ``composability property''. In this paper we will focus on $\Cnor{}$-reducibility candidates and $\GDer{}{}$-reducibility candidates.

In what follows, the class of derivations \( \Rset \) is considered fixed (and arbitrary) unless stated otherwise.
Orthogonality of derivations satisfies several  well-known properties which do not depend on the choice of $\Rset$.

\begin{prop}[See, e.g., \cite{BaeldeM07}]
    For any sets $\mathcal{X},\mathcal{Y} \subseteq \Rset$:
    \begin{itemize}
        \item  $\mathcal{X} \subseteq \mathcal{Y}$ implies $\mathcal{Y}^\perp\subseteq \mathcal{X} ^\perp $ 
        \item  $(\mathcal{X} \cup \mathcal{Y})^\perp = \mathcal{X}^\perp \cap \mathcal{Y}^\perp$
        \item $\mathcal{X} \subseteq \mathcal{X}^{\perp \perp}$
        \item $\mathcal{X}^{\perp\perp\perp}= \mathcal{X}^{\perp}$. In other words, $\mathcal{X}^\perp$ is always a reducibility candidate.
    \end{itemize}
\end{prop}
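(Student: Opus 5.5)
These four properties are the standard Galois-connection facts for orthogonality. They depend only on the definition of $\mathcal X^\perp$ as the set of $d \in \Rset_{\phi^\perp}$ with $e \perp d$ for all $e \in \mathcal X$, and on the symmetry of the relation. We will not need to look at the structure of derivations or at the choice of $\Rset$.

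The first point we would treat first is symmetry. The cut rule is taken without an order on its premises, so $\cutform{e}{d}{\phi}$ and $\cutform{d}{e}{\phi^\perp}$ are the same derivation. Hence $e \perp_\phi d$ iff $d \perp_{\phi^\perp} e$. This is what makes $\mathcal X^{\perp\perp}$ meaningful as a subset of $\Rset_\phi$ again.

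Next, the four items in turn.
\begin{itemize}
\item Antitonicity. If $\mathcal X \subseteq \mathcal Y$ and $d \in \mathcal Y^\perp$, then $d$ is orthogonal to every element of $\mathcal Y$, hence to every element of $\mathcal X$. So $d \in \mathcal X^\perp$.
\item Union. $d \in (\mathcal X \cup \mathcal Y)^\perp$ iff $e \perp d$ for all $e \in \mathcal X \cup \mathcal Y$. This holds iff $e\perp d$ for all $e\in\mathcal X$ and also for all $e \in \mathcal Y$, that is, iff $d \in \mathcal X^\perp \cap \mathcal Y^\perp$.
\item Inclusion $\mathcal X \subseteq \mathcal X^{\perp\perp}$. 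Take $e \in \mathcal X$. For every $d \in \mathcal X^\perp$ we have $e \perp_\phi d$, so by symmetry $d \perp_{\phi^\perp} e$. Therefore $e \in (\mathcal X^\perp)^{\perp}$.
\item Triple orthogonal. Applying the third item to $\mathcal X^\perp$ gives $\mathcal X^\perp \subseteq \mathcal X^{\perp\perp\perp}$. Applying antitonicity to $\mathcal X \subseteq \mathcal X^{\perp\perp}$ gives $\mathcal X^{\perp\perp\perp} \subseteq \mathcal X^\perp$. So the two sets are equal, and $\mathcal X^\perp$ is a reducibility candidate by definition.
\end{itemize}

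The only point needing care is the symmetry step, specifically that the typing $\Rset_\phi$ versus $\Rset_{\phi^\perp}$ lines up and that $\phi^{\perp\perp}=\phi$ (negation is an involution). Once that is settled, all four items are one-line set-theoretic arguments.
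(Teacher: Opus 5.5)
Your proof is correct and matches the standard Galois-connection argument that the paper defers to by citing \cite{BaeldeM07}; the paper gives no proof of its own. You correctly isolate the one ingredient that is not pure set theory: $\cutform{e}{d}{\phi}$ and $\cutform{d}{e}{\phi^\perp}$ must be identified, which the paper's convention of unordered cut premises (together with $\phi^{\perp\perp}=\phi$) provides, and this is needed only for $\mathcal X\subseteq\mathcal X^{\perp\perp}$ and hence for $\mathcal X^{\perp\perp\perp}=\mathcal X^\perp$.
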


\begin{prop}
	 $\Rset$-reducibility candidates  (ordered by $\subseteq$) form a complete lattice.
\end{prop}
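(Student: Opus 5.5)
The claim is that, for a fixed formula $\phi$, the $\Rset$-reducibility candidates contained in $\Rset_\phi$ form a complete lattice under inclusion. I will show that this collection is closed under arbitrary intersections. A family of subsets of $\Rset_\phi$ closed under arbitrary intersections, including the empty intersection (read as $\Rset_\phi$), is a complete lattice. Meets are intersections, and the join of a family is the intersection of all candidates containing its union.

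First I check that $\Rset_\phi$ itself is a candidate, so that the empty meet and the top element exist. Since $\emptyset^\perp = \Rset_\phi$ for the relation $\perp_{\phi^\perp}$, we get $\Rset_\phi = \emptyset^{\perp}$. By the last item of the preceding proposition, every set of the form $\mathcal{Y}^\perp$ is a candidate, so $\Rset_\phi$ is one.

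Next, take a non-empty family $(\mathcal{X}_i)_{i\in I}$ of candidates over $\phi$. Write each as $\mathcal{X}_i = \mathcal{X}_i^{\perp\perp} = (\mathcal{X}_i^\perp)^\perp$. By the second item of the proposition, extended to arbitrary unions (the same one-line argument works, since $e\perp d$ for all $e$ in a union iff it holds for each member), we have
\[
\textstyle\bigcap_i \mathcal{X}_i = \bigcap_i (\mathcal{X}_i^{\perp})^{\perp} = \bigl(\bigcup_i \mathcal{X}_i^\perp\bigr)^\perp .
\]
This is again of the form $\mathcal{Y}^\perp$ and hence a candidate.

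It follows that the join of $(\mathcal{X}_i)_i$ is $\bigl(\bigcup_i \mathcal{X}_i\bigr)^{\perp\perp}$. This set is a candidate containing every $\mathcal{X}_i$, by the third item of the proposition. It is also below any candidate $\mathcal{Z}$ containing all $\mathcal{X}_i$: applying antitonicity twice to $\bigcup_i\mathcal{X}_i\subseteq\mathcal{Z}$ gives $\bigl(\bigcup_i\mathcal{X}_i\bigr)^{\perp\perp}\subseteq\mathcal{Z}^{\perp\perp}=\mathcal{Z}$.

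There is no real obstacle. The only point needing care is that the prior proposition states the union law for two sets only, and its generalisation to arbitrary families has to be noted as above.
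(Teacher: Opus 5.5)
Your proof is correct and is essentially the paper's argument. Meets are intersections, shown to be candidates via $\bigcap_i \mathcal{X}_i = \bigl(\bigcup_i \mathcal{X}_i^\perp\bigr)^\perp$, and the join is $\bigl(\bigcup_i \mathcal{X}_i\bigr)^{\perp\perp}$, which is minimal by applying antitonicity twice. You are slightly more careful than the paper: it names the top element as $\Rset$ rather than $\Rset_\phi = \emptyset^\perp$, and it uses the union law for arbitrary families without comment.
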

\begin{proof}
 Given an arbitrary collection of candidates $S$, it is easy to check that $(\bigcup S)^{\perp\perp}$ is its least upper bound in the lattice, and $\bigcap S$ its greatest lower bound. We just check the minimality of $(\bigcup S)^{\perp\perp}$: any upper bound $\mathcal{Y}$ satisfies $\bigcup S \subseteq \mathcal{Y}$, and hence $(\bigcup S)^{\perp\perp}\subseteq \mathcal{Y}^{\perp\perp}=\mathcal{Y}$.  Concerning the greatest lower bound, the only non-trivial thing is that it is a reducibility candidate, but it suffices to observe that $\bigcap S=\bigcap_{\mathcal{X} \in S} \mathcal{X}^{\perp\perp}=(\bigcup _{\mathcal{X} \in S}\mathcal{X}^\perp)^\perp$.  The least reducibility candidate is $\emptyset^{\perp\perp}$ and the greatest is $\Rset$. 
\end{proof}

Having a complete lattice, we can use the Knaster--Tarski theorem: any monotone function $f$ on reducibility candidates admits a least fixed point $\lfp f$ and a greatest fixed point $\gfp f$ in the lattice of candidates.

\subsection{Interpretation}

We now define an interpretation associating with every formula a reducibility candidate. Notice that the interpretation proceeds inductively on the construction of \emph{pre}formulas. 

\begin{defn}[Interpretation]\label{defn:interpretation}
 	An environment is a mapping $\env$ from the set of propositional variables to reducibility candidates. Let \( \env \) be an environment. By \( \env[X \mapsto \mathcal X] \) we denote the environment which maps \( X \) to \( \mathcal X \) and agrees with \( \env \) otherwise. We define, by recursion on a \emph{pre}formula $\phi$, a reducibility candidate $\inter \phi \env_\Rset$ (also denoted $\inter \phi \env$) called the \emph{interpretation of $\phi$ (under $\env$)}:
 \begin{itemize}
 	\item $\inter X \env_\Rset \dfn \env (X)$
 	\item $\inter\top \env_\Rset = \emptyset^{\perp}$
 	\item $\inter \perp \env_\Rset \dfn \left\{ \vlinf{\unit}{}{ \unit}{} \right\}^{\perp}$
 	\item $\inter {\phi \parr \psi} \env_\Rset\dfn \left\{ \begin{array}{c|c}
    \vlderivation{
 \vliin{\otimes}{}{\Delta, \Delta', \theta \otimes \chi}{\vldr{d}{\Delta, \theta}}{\vldr{d'}{\Delta', \chi}}    
}
     & d  \in \inter {\phi^\perp} \env_\Rset \text{ and } d'  \in \inter {\psi^\perp} \env_\Rset 
 	\end{array} \right\}^{\perp}$
 	\item $\inter {\phi_1 \with \phi_2}\env_\Rset \dfn \left\{\begin{array}{c|c} \vlderivation{\vlin{\oplus_i}{}{\Delta, \psi_1 \oplus \psi_2}{\vldr{d}{\Delta, \psi_i}}}  & \text{for }i \in \{0,1\} \text{ and } d  \in \inter {\phi_i^\perp}\env_\Rset  \end{array}  \right\}^{\perp} $
 		
	\item $\inter {\nu X \phi} \env_\Rset \dfn \displaystyle \bigcap \left\{ \mathcal X \quad \middle| \quad \mathcal X \subseteq  \left\{  \begin{array}{c|c}
    \vlderivation{
\vlin{\mu}{}{\Delta, \mu X \phi^\perp}{\vldr{d}{\Delta, \phi^\perp(\mu X \phi^\perp)} }    
}
& d \in \inter {\phi^\perp}{\env[ X \mapsto \mathcal{X}]}_\Rset
 		\end{array}  \right\}^{\perp}  \right\}$
 	\item In all other cases $\inter {\phi^\perp}\env_\Rset \dfn (\inter{\phi}{\env^\perp}_\Rset)^\perp $, where  $\env^\perp_\Rset: X \mapsto \env_\Rset(X)^\perp$.
 \end{itemize}
 Finally, for \( \Gamma = \phi_1, \ldots, \phi_n \) we define $\inter{\Gamma}{\env}_\Rset = \inter{\phi_1}{\env}_\Rset \cap \dotsm \cap \inter{\phi_n}\env_\Rset$, which is a reducibility candidate by~\Cref{prop:properties-candidates}. Henceforth, we omit the environment $\env$ when there are no free variables or the choice is clear from the context.
\end{defn}

The following properties of interpretations are easy consequences of the definition, 
 recalling that $X=X^\perp$ from~\Cref{subsec:formulas}.

\begin{prop}
    \label{prop:candidates-subset-R} For all \emph{formulas} $\phi$, \( \inter{\phi}{} \subseteq \Rset_\phi \).
\end{prop}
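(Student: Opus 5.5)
Since every interpretation is by definition a set of the form $\mathcal{Y}^\perp$ or an intersection of such sets, we argue by induction on the construction of the preformula $\phi$, proving a slightly stronger statement for preformulas: for every environment $\env$ whose values are reducibility candidates (each $\env(X) \subseteq \Rset_{\psi}$ for the appropriate closed instance), $\inter{\phi}{\env}$ is a reducibility candidate contained in $\Rset$ whose conclusions carry the distinguished formula obtained from $\phi$. For closed $\phi$ the environment is irrelevant, and we obtain $\inter{\phi}{} \subseteq \Rset_\phi$.

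The key observation comes straight from the definition of orthogonality. For any $\mathcal{Y} \subseteq \Rset_{\phi^\perp}$, the orthogonal $\mathcal{Y}^{\perp} = \mathcal{Y}^{\perp_{\phi^\perp}}$ is by definition a subset of $\Rset_{\phi}$. Hence it suffices to check, in each clause, that the set being orthogonalised consists of derivations with distinguished formula $\phi^\perp$:
\begin{itemize}
\item For $\top$, we have $\emptyset \subseteq \Rset_{\zero}$.
\item For $\perp$, the derivation $\unit$ concludes $\unit = \perp^\perp$.
\item For $\phi \parr \psi$, the $\otimes$-derivations conclude $\phi^\perp \otimes \psi^\perp$, by the induction hypothesis applied to $\inter{\phi^\perp}{}$ and $\inter{\psi^\perp}{}$.
\item The $\with$ case is analogous, using the $\oplus$-derivations.
\item For a variable $X$, the interpretation is $\env(X)$, which lies in the right $\Rset$-set by assumption.
\item For the dual clauses $\inter{\phi^\perp}{\env} = (\inter{\phi}{\env^\perp})^\perp$, the orthogonal of a subset of $\Rset_{\phi}$ lies in $\Rset_{\phi^\perp}$.
\end{itemize}

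The $\nu$ case is the only delicate one. Each $\mathcal{X}$ in the intersection is contained in a set of the form $\{\mu\text{-derivations of } \mu X\phi^\perp\}^\perp \subseteq \Rset_{\nu X\phi}$, so every member of the family lies in $\Rset_{\nu X \phi}$. The intersection then does too, provided the family is non-empty; if the family were empty, the intersection would have to be read as the top element $\Rset_{\nu X\phi}$, which gives the bound anyway. We must also ensure that the induction hypothesis applies to $\inter{\phi^\perp}{\env[X\mapsto\mathcal{X}]}$, which requires $\mathcal{X}$ to be a legitimate candidate value for $X$. Here we use the convention $X^\perp = X$ to identify the conclusion formula correctly after substitution. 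This bookkeeping, namely that after substituting $\mu X\phi^\perp$ (respectively $\nu X\phi$) for $X$ the distinguished formulas match, is the main obstacle, and it is handled by the identity $(\phi[\psi/X])^\perp = \phi^\perp[\psi^\perp/X]$ noted earlier.
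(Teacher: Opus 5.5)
Your argument is correct and is essentially the paper's own, which records this as an immediate consequence of the definitions: every clause of the interpretation is an orthogonal $\mathcal{Y}^{\perp}$ (or, for $\nu X\phi$, an intersection of sets each contained in one), and an orthogonal lies in $\Rset$ with the dual distinguished formula by the very definition of ${\perp}$. For closed $\phi$ this already applies to the outermost clause, so the strengthened induction over open preformulas and the substitution bookkeeping you flag as the main obstacle in the $\nu$ case are not needed for the inclusion.
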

 
\begin{prop}\label{prop:properties-candidates}
	For all pre-formulas \( \phi \) and $\psi$:
    \begin{enumerate}
        \item   \( \inter {\phi^\perp} \env = (\inter {\phi}{\env^\perp} )^\perp \).
        \item \label{inter-subst} $\inter{\phi[\psi/X]} \env = \inter{\phi}{\env[X \mapsto \inter \psi {\env}]} $.
    \end{enumerate}
\end{prop}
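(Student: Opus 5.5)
Both items go by simultaneous induction on the construction of the preformula \(\phi\), following the clauses of \Cref{defn:interpretation}. The induction has to be carried out uniformly in the environment \(\env\), because the fixed point clause re-interprets \(\phi\) under the modified environment \(\env[X\mapsto\mathcal X]\).

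For item (1), split according to whether \(\phi\) is covered by one of the ``primary'' clauses (\(X\), \(\top\), \(\perp\), \(\parr\), \(\with\), \(\nu\)) or by the final clause.
\begin{itemize}
\item If \(\phi\) is not primary, then \(\phi^\perp\) is primary. The final clause directly gives \(\inter{\phi}{\env} = (\inter{\phi^\perp}{\env^\perp})^\perp\).
\item If \(\phi\) is primary, then \(\phi^\perp\) falls under the final clause. Since \(\phi^{\perp\perp}=\phi\), we get \(\inter{\phi^\perp}{\env} = (\inter{\phi}{\env^\perp})^\perp\) by definition.
\end{itemize}
In the second case this is exactly item (1). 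In the first case, instantiate at \(\env^\perp\) and use \(\env^{\perp\perp}=\env\), which holds because each \(\env(X)\) is a candidate. This gives \(\inter{\phi}{\env^\perp} = (\inter{\phi^\perp}{\env})^\perp\). Applying \(\perp\) to both sides, and using that \(\inter{\phi^\perp}{\env}\) is a candidate, yields \(\inter{\phi^\perp}{\env}=(\inter{\phi}{\env^\perp})^\perp\).

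The variable case needs separate care because \(X^\perp = X\). Here \(\inter{X}{\env}=\env(X)\), while \((\inter{X}{\env^\perp})^\perp = \env(X)^{\perp\perp}=\env(X)\). So item (1) holds, and the convention \(X=X^\perp\) is precisely what makes this case consistent. In effect item (1) is built into the definition; what must be checked is only that no preformula falls under two clauses, and that the final clause is well-founded because its recursive call is on a primary formula of the same size.

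For item (2), the induction is on \(\phi\).
\begin{itemize}
\item Variable case: if the variable is \(X\), both sides equal \(\inter{\psi}{\env}\); if it is some \(Y\neq X\), both sides equal \(\env(Y)\).
\item Constants: trivial.
\item Connectives \(\parr\) and \(\with\): immediate, since the defining sets are built from the interpretations of the immediate dual subformulas, to which the induction hypothesis applies.
\item Final clause: use the induction hypothesis for the primary formula \(\phi^\perp\) under \(\env^\perp\), together with the identity \((\env[X\mapsto \inter{\psi}{\env}])^\perp = \env^\perp[X\mapsto \inter{\psi}{\env}^\perp]\). One then needs \(\inter{\psi}{\env}^\perp = \inter{\psi^\perp}{\env^\perp}\), which is item (1) applied to \(\psi\) at \(\env^\perp\). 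This is why the two items are proved together. Since \((\phi[\psi/X])^\perp=\phi^\perp[\psi^\perp/X]\), the computation closes.
\end{itemize}

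The main obstacle is the fixed point case \(\nu Y\phi\). By the binder convention, \(Y\neq X\) and \(Y\) is not free in \(\psi\). For each candidate \(\mathcal X\), the induction hypothesis on \(\phi^\perp\) (a smaller preformula) at environment \(\env[Y\mapsto\mathcal X]\) gives
\[\inter{\phi^\perp[\psi^\perp/X]}{\env[Y\mapsto\mathcal X]}=\inter{\phi^\perp}{\env[Y\mapsto\mathcal X][X\mapsto\inter{\psi^\perp}{\env[Y\mapsto\mathcal X]}]}.\]
Because \(Y\notin\fv{\psi}\), the inner environment change is irrelevant to \(\psi^\perp\). This rests on a small auxiliary lemma, proved by a similar induction: the interpretation depends only on \(\env\) restricted to the free variables. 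The two environment updates also commute since \(Y\neq X\). One subtlety is that the substitution is written with \(\psi\) while the \(\nu\)-clause mentions \(\phi^\perp\); reconciling them requires item (1) for \(\psi\), which again forces the simultaneous induction. With these facts, the defining families of sets in the \(\nu\)-clause coincide for each \(\mathcal X\), hence so do their intersections. Finally, the conclusion \(\mu Y \phi^\perp(\mu Y\phi^\perp)\) of the derivations in the defining set also commutes with the substitution, by the syntactic identity \((\nu Y\phi)[\psi/X]^\perp = \mu Y(\phi^\perp[\psi^\perp/X])\).
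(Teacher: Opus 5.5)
Your treatment of item (1) is fine: it is built into the final clause, given that every primary clause yields a candidate and $\env^{\perp\perp}=\env$.

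The gap is in item (2), in the $\parr$, $\with$ and $\nu$ cases. You call the first two immediate, and for $\nu$ you say item (1) reconciles the mismatch. These clauses recurse on the \emph{dual} subformula under the \emph{same} environment: $\inter{\phi_1\parr\phi_2}{\env}$ is built from $\inter{\phi_1^\perp}{\env}$. After substitution the relevant set is $\inter{(\phi_1[\psi/X])^\perp}{\env}=\inter{\phi_1^\perp[\psi^\perp/X]}{\env}$. The induction hypothesis therefore gives $\inter{\phi_1^\perp}{\env[X\mapsto\inter{\psi^\perp}{\env}]}$, whereas the right-hand side needs $\inter{\phi_1^\perp}{\env[X\mapsto\inter{\psi}{\env}]}$.

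In the final clause the environment is dualised together with the formula, which is why item (1) at $\env^\perp$ closes that case. Here it is not. Item (1) relates $\inter{\psi^\perp}{\env}$ only to $(\inter{\psi}{\env^\perp})^\perp$, never to $\inter{\psi}{\env}$. Going instead through item (1) and the hypothesis for $\phi_1$ leaves you needing $\inter{\psi}{\env^\perp}=(\inter{\psi}{\env})^\perp$, which is equally false. The same mismatch, with $\env[Y\mapsto\mathcal X]$ in place of $\env$, breaks your $\nu$ case.

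In fact, with the clauses read literally and $X^\perp=X$, item (2) is false, so no bookkeeping can close these cases. Take $\phi=X\parr\bot$ and $\psi=\unit$.
\begin{itemize}
\item $\inter{\unit\parr\bot}{\env}$ is the orthogonal of $\otimes$-derivations whose left premise lies in $\inter{\bot}{}$.
\item $\inter{X\parr\bot}{\env[X\mapsto\inter{\unit}{}]}$ is the orthogonal of $\otimes$-derivations whose left premise lies in $\inter{X}{\env[X\mapsto\inter{\unit}{}]}=\inter{\unit}{}$.
\end{itemize}
The elements of the two sets have distinguished formulas $\unit\parr\bot$ and $\bot\parr\bot$ respectively, so the sets cannot coincide. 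The paper's justification (``easy consequences of the definition'') passes over this.

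What is missing is a repair of the definition. In the $\parr$, $\with$ and $\nu$ clauses, use $(\inter{\phi_i}{\env})^\perp$, equivalently $\inter{\phi_i^\perp}{\env^\perp}$, in place of $\inter{\phi_i^\perp}{\env}$. Alternatively, adopt Baelde and Miller's negated variables $\overline{X}$, interpreted by $\env(X)^\perp$. With that change:
\begin{itemize}
\item the $\parr$ and $\with$ cases really are immediate from the induction hypothesis on $\phi_i$ itself;
\item the $\nu$ case follows by your freshness and commutation remarks;
\item your final-clause computation goes through unchanged.
\end{itemize}
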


\begin{lem}
	[Monotonicity] 
	\label{lem:mon-pos-types-in-hr}
	If $\mathcal{X}\subseteq \mathcal{Y}$ are reducibility candidates and \( \phi \) is a pre-formula then \[ \inter{\phi}{\env[ X \mapsto \mathcal{X}]}\subseteq \inter{\phi}{\env[X \mapsto \mathcal{Y}]}. \] 
In particular, 	$\mathcal{X} \mapsto \inter{\phi}{\env[ X \mapsto \mathcal{X}]}$ is a monotone endofunction over the complete lattice of the reducibility candidates.
\end{lem}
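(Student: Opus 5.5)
The plan is to prove a slightly stronger statement by induction on the structure of the preformula $\phi$: for every environment $\env$ and reducibility candidates $\mathcal{X}\subseteq\mathcal{Y}$, both
\[ \inter{\phi}{\env[X\mapsto\mathcal{X}]}\subseteq\inter{\phi}{\env[X\mapsto\mathcal{Y}]} \quad\text{and}\quad \inter{\phi^\perp}{\env[X\mapsto\mathcal{X}]}\subseteq\inter{\phi^\perp}{\env[X\mapsto\mathcal{Y}]}. \]
Both halves are needed because the definition of $\inter{\cdot}{}$ handles the "negative" connectives directly and the dual ones through $(\cdot)^\perp$ with the environment replaced by $\env^\perp$. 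Throughout, the only tool is antitonicity of orthogonality: $\mathcal{A}\subseteq\mathcal{B}$ implies $\mathcal{B}^\perp\subseteq\mathcal{A}^\perp$. Combined with $X^\perp=X$ and $\env^\perp(X)=\env(X)^\perp$, this makes the two polarities interlock.

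\emph{Variable case.} If $\phi=X$, then $\inter{X}{\env[X\mapsto\mathcal X]}=\mathcal X\subseteq\mathcal Y$ directly. For $\phi^\perp=X$ (read as the syntactic negation), \Cref{prop:properties-candidates}(1) gives the value $(\inter{X}{\env^\perp})^\perp=\mathcal X^{\perp\perp}=\mathcal X$, which is again contained in $\mathcal Y$. This is exactly where the convention $X^\perp=X$ together with the double negation $\env^{\perp\perp}=\env$ is used. Other variables and the units do not depend on $X$, so those cases are trivial.

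\emph{Connective cases.} For $\phi\parr\psi$, the generating set of $\otimes$-derivations built from $\inter{\phi^\perp}{}$ and $\inter{\psi^\perp}{}$ is monotone in $X$ by the induction hypothesis on the duals. Taking $^\perp$ reverses inclusion, so as written the generating-set argument alone would give an \emph{anti}-monotone dependence; this is the point that needs care.

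The way to resolve it is the intended reading of \Cref{defn:interpretation}: the recursion is really on formulas of the form $\phi$ with $\env$, and $\phi^\perp$ with $\env^\perp$. Free variables occur "self-dually": $X$ under $\env^\perp$ is interpreted as $\mathcal X^\perp$. So in $\inter{\phi^\perp}{\env[X\mapsto\mathcal X]}$ the variable is interpreted through $\env^\perp$, that is, as $\mathcal X^\perp$, and the orthogonal in front of the generating set cancels this flip. Concretely, I would prove the statement simultaneously for the pair
\[ \bigl(\inter{\phi}{\env},\ \inter{\phi}{\env^\perp}\bigr), \]
showing that the first component is monotone and the second antitone in $\env(X)$. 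The $\parr$, $\with$ and dual cases then go through, since each applies exactly one $^\perp$ to a set assembled from components of the opposite polarity. The order in which I would check the cases is:
\begin{enumerate}
\item the variable case, which fixes the polarities;
\item the binary connectives $\parr$ and $\with$, by the induction hypothesis plus antitonicity;
\item the dual clause, which holds by \Cref{prop:properties-candidates}(1).
\end{enumerate}

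\emph{Fixed-point case.} For $\nu Y\phi$ with $Y\neq X$, the interpretation is the intersection of all $\mathcal Z$ satisfying $\mathcal Z\subseteq F_{\mathcal X}(\mathcal Z)$, where $F_{\mathcal X}(\mathcal Z)$ is the orthogonal of the $\mu$-derivations over $\inter{\phi^\perp}{\env[X\mapsto\mathcal X,Y\mapsto\mathcal Z]}$. By the induction hypothesis, with the same polarity bookkeeping as above, $F_{\mathcal X}(\mathcal Z)\subseteq F_{\mathcal Y}(\mathcal Z)$. Hence every $\mathcal Z$ admissible for $\mathcal X$ is also admissible for $\mathcal Y$, and the family of admissible sets grows. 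An intersection over a larger family is \emph{smaller}, so this step reverses the inclusion; I expect it to be the main obstacle.

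I would first try to treat the clause as the Knaster--Tarski greatest fixed point, $\gfp{F}$, that is, the union (join) of the post-fixed points rather than their intersection. With that reading, enlarging the family of post-fixed points enlarges the result, and monotonicity follows. If the clause really is meant literally as written, I would instead verify it by showing that it coincides with the $\mu$-dual clause, which is a least fixed point and can be handled directly by the dual argument.

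Finally, the "in particular" part is immediate: the map $\mathcal X\mapsto\inter{\phi}{\env[X\mapsto\mathcal X]}$ takes values in reducibility candidates by construction, and the inclusion just proved is precisely monotonicity of this map.
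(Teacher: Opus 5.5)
The paper gives no proof of this lemma; it treats it as an easy consequence of \Cref{defn:interpretation}. The argument it implicitly relies on is your structural induction: antitonicity of $(\cdot)^\perp$ for the connective clauses, \Cref{prop:properties-candidates}(1) for the dual clauses, and a greatest-fixed-point reading of the $\nu$ clause.

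\textbf{The definition needs an explicit repair.} Your diagnosis that the printed clauses go the wrong way is correct, and it matters more than your proposal suggests. With the printed $\parr$ clause, $\inter{X\parr X}{\env}$ is the orthogonal of the $\otimes$-derivations whose premises range over $\inter{X^\perp}{\env}=\env(X)$. So it is \emph{antitone} in $\env(X)$. It also lands in $\Rset_{\theta^\perp\parr\theta^\perp}$ rather than $\Rset_{\theta\parr\theta}$ when $\env(X)\subseteq\Rset_\theta$, so \Cref{prop:properties-candidates}(2) fails for it as well.

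Your ``intended reading'' is therefore a genuine correction of the definition, not a reading of it, and you should state it as one:
\begin{itemize}
\item in the $\parr$ and $\with$ clauses, the premises range over $\inter{\phi^\perp}{\env^\perp}=(\inter{\phi}{\env})^\perp$;
\item in the $\nu$ clause, they range over $\inter{\phi^\perp}{\env^\perp[X\mapsto\mathcal X^\perp]}=(\inter{\phi}{\env[X\mapsto\mathcal X]})^\perp$;
\item the outer $\bigcap$ is replaced by the join $(\bigcup\cdot)^{\perp\perp}$ of the post-fixed points.
\end{itemize}
This is also the version that \Cref{prop:properties-candidates}(2) and \Cref{prop:characterisation-of-interpretation-of-mu} actually need. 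With these clauses your induction goes through.

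\textbf{Points to tighten in the write-up.} First, the sentence saying that in $\inter{\phi^\perp}{\env[X\mapsto\mathcal X]}$ the variable is read as $\mathcal X^\perp$, and that the orthogonal in front of the generating set cancels this flip, miscounts. By \Cref{prop:properties-candidates}(1) that term equals $(\inter{\phi}{\env^\perp[X\mapsto\mathcal X^\perp]})^\perp$. It carries two flips and is monotone in $\mathcal X$, exactly as your own induction hypothesis for $\phi^\perp$ asserts, and that is why the printed clause comes out antitone. Only the corrected term $\inter{\phi^\perp}{\env^\perp}$ has the single flip you describe. Keep $\inter{\cdot}{\env}$ with its literal meaning and change the clauses, rather than letting one expression mean different things in the clause and in the hypothesis.

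Second, the strengthening to the pair $(\inter{\phi}{\env},\inter{\phi}{\env^\perp})$ adds nothing. The second component is just the first precomposed with the antitone map $\mathcal X\mapsto\mathcal X^\perp$. What you genuinely need is the simultaneous induction on $\phi$ and $\phi^\perp$, which you already have.

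Third, drop the fallback in the $\nu$ case.
\begin{itemize}
\item Read literally, the family of post-fixed points always contains the least candidate $\emptyset^{\perp\perp}$. That candidate lies below every candidate, in particular below $F(\emptyset^{\perp\perp})$, which is an orthogonal and hence a candidate. So the printed clause is constant: trivially monotone, but plainly not what is meant, and the reversal you feared never materialises.
\item There is no independent least-fixed-point clause to appeal to, because the $\mu$ clause is defined from the $\nu$ clause by duality.
\end{itemize}
Your main argument is the right one: with the corrected body, $F_{\mathcal X}\subseteq F_{\mathcal Y}$ pointwise, so every post-fixed point of $F_{\mathcal X}$ is one of $F_{\mathcal Y}$ and the join can only grow. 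This step does not even need monotonicity of $F_{\mathcal X}$ in $\mathcal Z$. The case where the binder is $X$ itself is trivial, since $X$ is then not free.
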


It is important to note that fixed points are not interpreted as bona fide fixed points, i.e., $\inter {\mu X \phi} \env\neq \inter {\phi(\mu X \phi)}{\env}$ in general. Rather, we have:
\begin{prop}\label{prop:characterisation-of-interpretation-of-mu}
$\inter {\nu X \phi} \env\dfn \left\{  \begin{array}{c|c}
				\vlderivation{
					\vlin{\mu}{}{\Delta, \mu X. \phi^\perp}{\vldr{d}{\Delta, \phi^\perp(\mu X \phi^\perp)} }    
				}
				& d \in \inter {\phi^\perp(\mu X \phi^\perp)}{\env}
			\end{array}  \right\}^{\perp}.$
\end{prop}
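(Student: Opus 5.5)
The plan is to unfold the definition of $\inter{\nu X\phi}\env$ as a greatest fixed point and then use the substitution property. Write
\[
F(\mathcal X) \dfn \left\{ \mu(d) \;\middle|\; d \in \inter{\phi^\perp}{\env[X\mapsto \mathcal X]} \right\}^{\perp},
\]
where $\mu(d)$ denotes the derivation obtained by applying the $\mu$ rule to $d$.

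\textbf{Step 1: $F$ is monotone on candidates.} Each $F(\mathcal X)$ is a reducibility candidate, since any set of the form $\mathcal Y^\perp$ is one. If $\mathcal X\subseteq\mathcal Y$, then $\mathcal X^\perp\supseteq\mathcal Y^\perp$, and the environment $\env$ is replaced by $\env^\perp$ in the clause for $\phi^\perp$. So I must check how $\inter{\phi^\perp}{\env[X\mapsto\mathcal X]}$ depends on $\mathcal X$. Using \Cref{prop:properties-candidates}(1) and the convention $X^\perp=X$, this reduces to monotonicity of $\mathcal X\mapsto \inter{\phi^\perp}{\env[X\mapsto \mathcal X]}$, which \Cref{lem:mon-pos-types-in-hr} gives directly, as it is stated for arbitrary pre-formulas. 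Taking the set of $\mu(d)$ is monotone in its argument, and the outer $\perp$ reverses inclusion. Hence $F$ is \emph{antitone} or monotone depending on this bookkeeping. I expect monotone: the defining clause is a postfixed-point condition $\mathcal X\subseteq F(\mathcal X)$, the standard Knaster--Tarski $\nu$-shape.

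\textbf{Step 2: Knaster--Tarski fixed-point identity.} The definition reads literally as $\bigcap\{\mathcal X\mid \mathcal X\subseteq F(\mathcal X)\}$. I treat this as the extremal fixed point intended by the text and use $\inter{\nu X\phi}\env = F(\inter{\nu X\phi}\env)$. This is the step I consider the main obstacle: the literal $\bigcap$ of postfixed points is Knaster--Tarski's formula for the least fixed point only with $\supseteq$. I would verify directly that, for $\mathcal N$ defined as in the paper, both $\mathcal N\subseteq F(\mathcal N)$ and $F(\mathcal N)\subseteq\mathcal N$ hold, using monotonicity and the complete lattice structure. For the first inclusion, every postfixed $\mathcal X\supseteq\mathcal N$ gives $F(\mathcal X)\supseteq$ the required set, mirroring the usual proof. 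If the literal reading fails, I would read the definition as the Knaster--Tarski fixed point with the appropriate inclusion, which is clearly what is intended, and derive the identity from that.

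\textbf{Step 3: substitution.} Apply \Cref{prop:properties-candidates}(\ref{inter-subst}) to the pre-formula $\phi^\perp$ with $\psi=\mu X\phi^\perp$:
\[
\inter{\phi^\perp(\mu X\phi^\perp)}\env=\inter{\phi^\perp}{\env[X\mapsto\inter{\mu X\phi^\perp}\env]}.
\]
Then use $\inter{\mu X\phi^\perp}\env$ versus $\inter{\nu X\phi}{\env}$ via clause (1) of \Cref{prop:properties-candidates} together with $X^\perp=X$ in the environment, so that the environment argument matches the fixed point $\mathcal N$ of Step 2. Substituting into $F(\mathcal N)$ yields exactly the displayed set, giving the claimed equality. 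The delicate point is the consistent handling of $\env$ versus $\env^\perp$ in Step 3; I would track it explicitly by checking that $\mathcal N$ and $\mathcal N^\perp$ land in the right slots.
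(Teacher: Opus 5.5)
Your route is the paper's: Knaster--Tarski on the lattice of candidates, monotonicity from \Cref{lem:mon-pos-types-in-hr}, then the unfolding via \Cref{prop:properties-candidates}.\ref{inter-subst}. But the bookkeeping you leave open is the whole content of the proposition, and as you set it up it fails.

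\textbf{Step~1.} Your own computation shows that $F$ is \emph{antitone}. The map $\mathcal X\mapsto\inter{\phi^\perp}{\env[X\mapsto\mathcal X]}$ is monotone by \Cref{lem:mon-pos-types-in-hr}, $\mu(\cdot)$ preserves inclusion, and the outer $\perp$ reverses it. So ``I expect monotone'' is false, and Knaster--Tarski cannot be applied.

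\textbf{Step~3.} The same slip breaks it. Substitution gives $\inter{\phi^\perp(\mu X\phi^\perp)}{}=\inter{\phi^\perp}{[X\mapsto\inter{\mu X\phi^\perp}{}]}$ with $\inter{\mu X\phi^\perp}{}=\mathcal N^\perp$. So the target set is $\{\mu(d)\mid d\in\inter{\phi^\perp}{[X\mapsto\mathcal N^\perp]}\}^\perp$, which is not $F(\mathcal N)$. In fact the clause, read as printed, is ill-typed. Since $d$ must conclude $\phi^\perp(\mu X\phi^\perp)$, $X$ must denote a candidate for $\mu X\phi^\perp$, whereas your $\mathcal X\subseteq\Rset_{\nu X\phi}$. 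For example, with $\phi=X$, $F(\mathcal X)$ applies the $\mu$ rule to derivations whose distinguished formula is $\nu X.X$ rather than $\mu X.X$.

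\textbf{Step~2.} This cannot be rescued literally either. Every $F(\mathcal X)$ is a candidate and so contains $\emptyset^{\perp\perp}$. Hence the least candidate is itself post-fixed, and the printed $\bigcap$ of post-fixed points is just $\emptyset^{\perp\perp}$, ignoring $F$ altogether.

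\textbf{The fix.} The missing idea is to feed the dual candidate into the environment: let $G(\mathcal X)=\{\mu(d)\mid d\in\inter{\phi^\perp}{\env[X\mapsto\mathcal X^\perp]}\}^\perp$. If $\mathcal X\subseteq\mathcal Y$ then $\mathcal Y^\perp\subseteq\mathcal X^\perp$. By \Cref{lem:mon-pos-types-in-hr} the inner set for $\mathcal Y$ is then contained in the one for $\mathcal X$, and after $\mu(\cdot)$ and $\perp$ you get $G(\mathcal X)\subseteq G(\mathcal Y)$.

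Commit to the greatest fixed point $\mathcal N=\bigl(\bigcup\{\mathcal X\mid\mathcal X\subseteq G(\mathcal X)\}\bigr)^{\perp\perp}$. This is what the approximants (descending from $\inter{\nu^0X\phi}{}=\Rset$) and \Cref{cor:hr-mu-as-limit-of-approximants} require. Equivalently, $\mathcal N=(\lfp{H})^\perp$ for the monotone map $H(\mathcal Y)=\{\mu(d)\mid d\in\inter{\phi^\perp}{\env[X\mapsto\mathcal Y]}\}^{\perp\perp}$; this is where an intersection (of pre-fixed points) legitimately appears.

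Knaster--Tarski then gives $\mathcal N=G(\mathcal N)$. Step~3 now matches exactly, since the $X$-slot holds $\mathcal N^\perp=\inter{\mu X\phi^\perp}{}$. For closed $\nu X\phi$, the only case used later, this is the whole proof. If further variables are free, note that $\inter{\mu X\phi^\perp}{\env}=(\inter{\nu X\phi}{\env^\perp})^\perp$ by the last clause of \Cref{defn:interpretation}. So the remaining variables of $\phi^\perp$ must be evaluated under $\env^\perp$ for the slots to line up; this is exactly the $\env$-versus-$\env^\perp$ point you flagged.
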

\begin{proof}
	It follows from the Knaster-Tarski theorem, using~\Cref{lem:mon-pos-types-in-hr} and \Cref{prop:properties-candidates}.\ref{inter-subst}.
\end{proof}

\subsection{Ordinal assignments}

Our cut elimination arguments based on reducibility candidates will rely on an alternative characterisation of fixed points through ordinal-indexed approximations. To this end, we temporarily extend the formal syntax:
\begin{itemize}
	\item for any formulas $\mu X\phi$ and $\nu X\phi$, and for any ordinal $\alpha$, the expressions $\mu ^\alpha X \phi$ and $\nu^\alpha X \phi$ are also formulas.
\end{itemize}
Let us stress that indexed quantifiers are only permitted in the case of \emph{formulas}, so whereas \( \nu^\omega X \nu Y .\, X \oplus Y \) is permitted, \( \nu X \nu^\omega Y .\, X \oplus Y \) is not because \( \nu Y.\, X \oplus Y \) is not a formula.

The interpretation function \( \inter \cdot {\env}_\Rset \) is duly expanded by the following condition for all \( \alpha \).
\begin{itemize}
    \item $\inter{\nu^\alpha X\phi}{\env} \dfn \displaystyle \bigcap_{\beta < \alpha} \left\{  \begin{array}{c|c}
				\vlderivation{
					\vlin{\mu}{}{\Delta, \mu X. \phi^\perp}{\vldr{d}{\Delta, \phi^\perp(\mu X. \phi^\perp)} }    
				}
				&  d \in \inter{\phi^\perp}{\env[X \mapsto \inter{\nu^\beta X \phi}{\env}]} 
			\end{array}  \right\}^{\perp}$
    \item $\inter{\mu^\alpha X\phi}{\env} \dfn 
	(\inter{\nu^\alpha X \phi^\perp}{\env^\perp})^\perp$
\end{itemize}
notice that $\inter{\nu^0 X\phi}{\env} = \Rset$. By the Knaster--Tarski theorem:
\begin{prop}
	\label{cor:hr-mu-as-limit-of-approximants} 
	Let $\Rset$ be a set of derivations and let \( \Ord \) denote the class of all ordinals. For any pre-formula \( \phi \) with at most \( X \) free:
    \begin{enumerate}
        \item $\inter{\mu X \phi}{} = \displaystyle\bigcup\limits_{\alpha \in \Ord} \inter{\mu^\alpha X \phi}{}$.
        \item  $\inter{\nu X \phi}{} = \displaystyle\bigcap\limits_{\alpha \in \Ord} \inter{\nu^\alpha X \phi }{}$.
    \end{enumerate}
\end{prop}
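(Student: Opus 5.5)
Both items come from a transfinite induction that compares the approximants with the fixed point; item (1) is then obtained from item (2) by duality. Write $F(\mathcal X) \dfn \{\, \mu\text{-introductions of } d \mid d \in \inter{\phi^\perp}{[X\mapsto\mathcal X]} \,\}^{\perp}$ for the operator whose greatest fixed point is $\inter{\nu X\phi}{}$, by \Cref{defn:interpretation} and \Cref{prop:characterisation-of-interpretation-of-mu}. By \Cref{lem:mon-pos-types-in-hr}, $\phi^\perp$ interpreted in $X$ is monotone. Taking orthogonals reverses inclusions, and the set of $\mu$-introductions is monotone in its argument. Together with the convention $X^\perp=X$, this makes $F$ monotone on candidates. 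The definition of the approximants then reads $\inter{\nu^\alpha X\phi}{} = \bigcap_{\beta<\alpha} F(\inter{\nu^\beta X\phi}{})$, which is the standard downward Kleene iteration from the top element $\Rset=\inter{\nu^0 X\phi}{}$. For later use, note that this is an intersection of orthogonals, hence a candidate.

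First I show by induction on $\alpha$ that the chain is decreasing, i.e.\ $\inter{\nu^\alpha X\phi}{}\subseteq\inter{\nu^\beta X\phi}{}$ for $\beta\le\alpha$. Since $\alpha\mapsto\bigcap_{\beta<\alpha}(\cdot)$ ranges over a growing index set, monotonicity of $F$ together with the induction hypothesis gives this at once. Next, again by induction on $\alpha$, I show $\inter{\nu X\phi}{}\subseteq\inter{\nu^\alpha X\phi}{}$. If $\inter{\nu X\phi}{}\subseteq\inter{\nu^\beta X\phi}{}$ for all $\beta<\alpha$, then monotonicity gives $\inter{\nu X\phi}{}=F(\inter{\nu X\phi}{})\subseteq F(\inter{\nu^\beta X\phi}{})$ for each $\beta<\alpha$, and intersecting over $\beta$ yields the claim. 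Hence $\inter{\nu X\phi}{}\subseteq\bigcap_\alpha\inter{\nu^\alpha X\phi}{}$.

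For the reverse inclusion I use a cardinality argument, which I expect to be the only point needing care. The candidates are subsets of the set $\Rset$, and a decreasing ordinal-indexed chain of subsets of a set must stabilise. So there is some $\gamma$ with $\inter{\nu^{\gamma+1} X\phi}{}=\inter{\nu^\gamma X\phi}{}$, and beyond that point the chain is constant. Write $\mathcal Z$ for this stable value. Then $\mathcal Z=\inter{\nu^{\gamma+1}X\phi}{}\subseteq F(\inter{\nu^\gamma X\phi}{})=F(\mathcal Z)$. Thus $\mathcal Z$ is a post-fixed point of $F$ and a candidate, so by the Knaster--Tarski characterisation of $\inter{\nu X\phi}{}$ as the intersection-style greatest post-fixed point (\Cref{defn:interpretation}), we get $\mathcal Z\subseteq\inter{\nu X\phi}{}$. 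This proves (2).

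For (1), apply (2) to $\nu X\phi^\perp$, which is legitimate since $\mu X\phi = (\nu X\phi^\perp)^\perp$. The definitions give $\inter{\mu X\phi}{}=(\inter{\nu X\phi^\perp}{})^\perp$ and $\inter{\mu^\alpha X\phi}{}=(\inter{\nu^\alpha X\phi^\perp}{})^\perp$; the environment is irrelevant for a formula with only $X$ free. We therefore need $(\bigcap_\alpha \mathcal Y_\alpha)^\perp=\bigcup_\alpha\mathcal Y_\alpha^\perp$ for the decreasing chain $\mathcal Y_\alpha \dfn \inter{\nu^\alpha X\phi^\perp}{}$ of candidates. Orthogonality alone only gives $\supseteq$. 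However, by the previous paragraph the chain is eventually constant, equal to $\mathcal Y_\gamma$, so the intersection is $\mathcal Y_\gamma$. Its orthogonal $\mathcal Y_\gamma^\perp$ is the largest member of the increasing chain $(\mathcal Y_\alpha^\perp)_\alpha$, which is therefore equal to the union. This gives (1).
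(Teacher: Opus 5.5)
Your proposal is correct and is essentially the paper's argument. The paper just says the result follows ``by the Knaster--Tarski theorem'', and you unpack the standard transfinite-iteration argument behind that appeal: the descending chain from $\Rset$, the fixed point lying below every stage, and cardinality forcing stabilisation at a post-fixed candidate. You also add the useful observation that item (1) needs this stabilisation, because $(\bigcap_\alpha \mathcal Y_\alpha)^\perp = \bigcup_\alpha \mathcal Y_\alpha^\perp$ need not hold for an arbitrary decreasing family.

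Two minor points. First, the decreasing-chain claim follows directly from the growing index set, so no induction or monotonicity is needed there. Second, both the monotonicity of $F$ and the step $\mathcal Z \subseteq \inter{\nu X\phi}{}$ depend on the intended reading of the definition: $X$ in $\phi^\perp$ should be interpreted by the dual candidate, and $\inter{\nu X\phi}{}$ should be the greatest fixed point (the join, not the intersection, of post-fixed candidates). Your appeal to $X^\perp = X$ only gestures at this, but the paper relies on the same reading.
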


Given a formula $\phi$ and a list of ordinals of appropriate length $\vec\alpha$, we can assign ordinals  to the $\nu$-subformulas of $\phi$ in such a way that the interpretation of the resulting formula is an approximation of the interpretation of $\phi$, that is, $\inter{\phi^{\vec \alpha}}{}$ approximates $\inter{\phi}{}$. A similar reasoning applies to the $\mu$-subformulas of $\phi$. This requires the following subtle definition of assignment.

\begin{defn}[Assignments]
	Let \( \phi \) be a formula and let \( \nu X_k \phi_k < \dots < \nu X_1 \phi_1 \) enumerate the \( \nu \)-formulas that are \(\fleq\)-equivalent to \( \phi \) in order of priority. Given a sequence of ordinals \( \vec \alpha = \alpha_1 \dotsm \alpha_l \) with \( l \leq k \), referred to as a \( \nu \)\emph{-assignment}, we define the formula \( \phi^{\vec \alpha} \) by recursively replacing each subformula \( \nu X_i \phi_i \) by \( \nu^{\alpha_i} X_i \phi_i\), starting at \( i = l \).

	The formula \( \phi_{\vec \alpha} \) is defined analogously with respect to the \( \mu \)-subformulas of $\phi$, in which case \( \vec \alpha \) is referred to as a \( \mu \)\emph{-assignment}.
\end{defn}

Assignments are assumed to be ordered lexicographically, i.e., $\vec \alpha < \vec \beta$ iff \( \vec \alpha \) is a proper prefix of \( \vec \beta \) or there exists  $i$ with $\alpha_i < \beta_i$ and $\alpha_j = \beta_j$ for all $j<i$. The empty sequence is considered the least ordinal sequence. Note, the lexicographic ordering is well-founded on sequences of ordinals of bounded length. 
By \cref{lem:mon-pos-types-in-hr} and \Cref{cor:hr-mu-as-limit-of-approximants} we have:
\begin{prop}
	[Positive and negative approximants]
	\label{prop:pos-neg-approximants} {\ }
\begin{itemize}
    \item If $d \in \inter{\psi}{}$ there are (least) ordinals $\vec \alpha$ s.t.\ $d \in \inter{\psi_{\vec \alpha}}{}$.
    \item If $d \notin \inter \psi{}$ there are (least) ordinals $\vec \alpha$ s.t.\ $d \notin \inter{\psi^{\vec \alpha}}{}$.
\end{itemize}
\end{prop}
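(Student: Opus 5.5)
The plan is to get existence from a stabilisation argument and leastness from well-foundedness. I treat the positive case; the negative case is dual.

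\textbf{Stabilisation.} Fix a formula $\mu X\phi$ with at most $X$ free in $\phi$. I first show that the approximants $\inter{\mu^\alpha X\phi}{}$ form a $\subseteq$-increasing chain in $\alpha$. This should follow from the definition of $\inter{\nu^\alpha X \phi^\perp}{}$ as an intersection over $\beta<\alpha$, which makes the $\nu$-approximants decreasing, together with antitonicity of $(\cdot)^\perp$. Every member of the chain is a subset of the fixed set $\Rset_{\mu X\phi}$. Hence the chain cannot grow at every step: there is an ordinal $\kappa$ (of cardinality at most that of $\Rset$) with $\inter{\mu^{\kappa} X\phi}{}=\inter{\mu^{\beta} X\phi}{}$ for all $\beta\geq\kappa$. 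By \Cref{cor:hr-mu-as-limit-of-approximants} this gives $\inter{\mu^{\kappa} X\phi}{}=\inter{\mu X\phi}{}$. Dually, for $\nu$-formulas the decreasing chain $\inter{\nu^\alpha X\phi}{}$ stabilises at some $\kappa'$ with $\inter{\nu^{\kappa'}X\phi}{}=\inter{\nu X\phi}{}$.

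\textbf{Existence.} Let $\mu X_k\phi_k<\dots<\mu X_1\phi_1$ enumerate the $\mu$-formulas $\fleq$-equivalent to $\psi$. Choose $\kappa$ to be a single ordinal that is a stabilisation ordinal for every $\mu$-formula arising during the replacement; since only finitely many formulas are involved, taking the supremum of their individual closure ordinals suffices. Set $\vec\kappa=\kappa\dotsm\kappa$ of length $k$.

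I then show $\inter{\psi_{\vec\kappa}}{}=\inter{\psi}{}$ by following the replacement procedure step by step. At each step a closed subformula $\mu X_i\phi_i$ is replaced by $\mu^{\kappa}X_i\phi_i$. Writing the ambient formula as $\chi[\mu X_i\phi_i/Y]$ with $Y$ fresh, \Cref{prop:properties-candidates}.\ref{inter-subst} gives that its interpretation is $\inter{\chi}{[Y\mapsto\inter{\mu X_i\phi_i}{}]}$. By stabilisation this equals $\inter{\chi}{[Y\mapsto \inter{\mu^\kappa X_i\phi_i}{}]}$. Finally, the substitution property extends verbatim to indexed formulas, because the indexed clauses are defined compositionally in the environment, and this yields the interpretation of the modified formula.

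Thus $d\in\inter{\psi}{}$ implies $d\in\inter{\psi_{\vec\kappa}}{}$, so the set of $\mu$-assignments $\vec\alpha$ with $d\in\inter{\psi_{\vec\alpha}}{}$ is non-empty.

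\textbf{Leastness.} These assignments have length at most $k$, and the lexicographic order on ordinal sequences of bounded length is a well-order. So the non-empty set above has a least element.

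\textbf{Negative case.} If $d\notin\inter{\psi}{}$, the same argument with the $\nu$-formulas and the stabilisation ordinal $\kappa'$ gives $\inter{\psi^{\vec\kappa'}}{}=\inter{\psi}{}$. Hence $d\notin\inter{\psi^{\vec\kappa'}}{}$, and a least such $\nu$-assignment again exists by well-foundedness.

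\textbf{Main obstacle.} The delicate point is the replacement step. Replacing an outer fixed point can change the shape of the $\fleq$-equivalent fixed-point subformulas nested inside it, so each substitution must be performed on a \emph{closed} subformula for \Cref{prop:properties-candidates}.\ref{inter-subst} to apply. I would handle this by following the recursive order of the assignment definition, starting at index $l$, and checking by induction on the number of replacements already made that the interpretation is unchanged. Monotonicity (\Cref{lem:mon-pos-types-in-hr}) is used here only to guarantee that the approximant chains are monotone, so that they genuinely stabilise.
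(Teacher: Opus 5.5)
Your argument is correct and is essentially the paper's, which simply appeals to \Cref{lem:mon-pos-types-in-hr} and \Cref{cor:hr-mu-as-limit-of-approximants}, with leastness coming from the well-foundedness of the lexicographic order. Your explicit closure-ordinal step spells out what that appeal leaves implicit: for fixed points occurring under other connectives, monotonicity and the union characterisation alone only give $\bigcup_\alpha \inter{\chi}{\env[Y\mapsto\inter{\mu^\alpha X\phi}{}]}\subseteq\inter{\chi}{\env[Y\mapsto\inter{\mu X\phi}{}]}$, and it is stabilisation that yields the converse inclusion. One caveat about the statement itself: under the paper's convention that proper prefixes are smaller, the empty assignment gives $\psi_{\langle\rangle}=\psi^{\langle\rangle}=\psi$ and is trivially the least witness, so your leastness step, read literally, just returns $\langle\rangle$. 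The statement only has content for full-length assignments, which is what the $\nu$-case of \Cref{lem:reflecting-non-totality} needs. Your $\vec\kappa$ argument covers that reading verbatim once the minimisation is restricted to sequences of length $k$.
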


\subsection{Reducibility}

We now introduce  \emph{reducibility}. Intuitively, a derivation is $\Rset$-reducible if it can be turned to a derivation of $\Rset$ by cutting its conclusions with derivations from (dual)  reducibility candidates. A key result is that, when the set $\Rset$ satisfies some appropriate properties, as in the case of  $\Cnor{}$ and  $\GDer{}$, the $\Rset$-reducible derivations with conclusion $\Gamma$ will be precisely those belonging to the reducibility candidate $\inter{\Gamma}{}_\Rset$.  Because this characterisation result  does not hold for general sets $\Rset$, many results of this subsections will be restricted to the relevant cases of $\Cnor{}$ and $\GDer{}$.

\begin{defn}[$\Rset$-reducibility]
	A derivation \( d \) with conclusion  $\phi_0, \ldots, \phi_n$ is 
	(\( \Rset \)-)\emph{reducible} if $\cutform{d}{d_0, \ldots, d_n}{}\in \Rset$ for all derivations $(d_i \in \inter{\phi_i^\perp}{}_\Rset)_{i \le n}$.
\end{defn}

\begin{prop}\label{prop:equivalent-reducibility}
    Let $\Rset \in \{ \GDer{}, \Cnor{}\}$ and $d$ be a derivation with conclusion $  \phi_0 , \dotsc, \phi_n $.  The following are equivalent for all \( j \le n \):
    \begin{enumerate}
        \item \label{prop:equiv-red-1} $d$ is reducible.
        \item \label{prop:equiv-red-2} for all $(d_i\in \inter{\phi_i^\perp}{})_{i \neq  j}$, $\cutform{d}{d_0, \dotsc, d_{j-1}, d_{j+1}, \dotsc d_{n}}{}\in \inter{\phi_j}{}$.
    \end{enumerate}
\end{prop}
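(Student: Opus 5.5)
The plan is to show that both conditions are equivalent to the statement that the cut of $d$ with arbitrary reducibility candidate elements lies in $\Rset$. The bridge is the identity $\inter{\phi_j}{} = \inter{\phi_j}{}^{\perp\perp} = (\inter{\phi_j^\perp}{})^\perp$, which comes from \Cref{prop:properties-candidates}(1) for closed formulas, together with $\inter{\phi_j}{}$ being a candidate. The property of $\Rset \in \{\GDer{},\Cnor{}\}$ that we actually need is associativity/commutativity of cut composition up to membership in $\Rset$. Concretely, for $e \dfn \cutform{d}{d_0,\dots,d_{j-1},d_{j+1},\dots,d_n}{}$ with conclusion containing $\phi_j$, we need
\[ \cutform{e}{d_j}{\phi_j}\in\Rset \iff \cutform{d}{d_0,\dots,d_n}{}\in\Rset. \]

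First, argue that the two derivations differ only by the order in which the consecutive cuts are stacked. Each is obtained from the other by finitely many commuting cut reduction steps, permuting a cut past another cut. These steps are available since the commuting rules of \cref{fig:cut-elim-comm} allow permutation of cut rules.

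Second, check that each $\Rset$ is invariant under these permutations in both directions.
\begin{itemize}
\item For $\GDer{}$, the cleanest route is directly from the definition. Permuting cuts leaves the set of branches essentially unchanged: the branch structure below the cuts is rearranged, but traces, their internal/external status, the coherence relation and the closure topology correspond bijectively. Hence IC sets and good external threads correspond. \Cref{rem:ext-prog-local-preserv} gives one direction, and the converse uses the same observation.
\item For $\Cnor{}$, one direction is \Cref{rem:expansion}. The other direction takes a depth-increasing $\omega$-reduction sequence from one ordering and prefixes or postpones the finitely many permutation steps, in the spirit of the proof of \Cref{prop:d-normalises-iff-pcut-eta-expansion-does}. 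This yields a depth-increasing sequence from the other ordering with the same limit.
\end{itemize}

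Then the equivalence is formal. Condition (\ref{prop:equiv-red-2}) says $e \in (\inter{\phi_j^\perp}{})^\perp$, i.e.\ $\cutform{e}{d_j}{}\in\Rset$ for all $d_j \in \inter{\phi_j^\perp}{}$. By the invariance above this is exactly reducibility. One must also record that $e\in\Rset_{\phi_j}$, which follows since any candidate is nonempty enough, or directly by choosing $d_j=\id$-like elements. Failing that, we need the definition of $^\perp$ to range over all derivations with the right conclusion; this is a technicality to check against \Cref{prop:candidates-subset-R}.

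The main obstacle is the $\Cnor{}$ direction showing that permuting cuts preserves $\omega$-normalisability when the given sequence reduces the cuts in the other nesting order. The issue is that a reduction step on an outer cut in one ordering may correspond to a commutation through another cut in the other ordering, so the steps must be interleaved while keeping depths increasing. I would handle this as in \Cref{prop:d-normalises-iff-pcut-eta-expansion-does}: simulate each step of the given sequence on the permuted derivation, inserting the bounded number of cut–cut commutations at the same depth. This preserves depth-increase and the limit, with \Cref{prop:compression} used if necessary.
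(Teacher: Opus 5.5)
Your proposal is correct and follows the paper's proof: both directions come from $\inter{\phi_j}{}=(\inter{\phi_j^\perp}{})^\perp$ together with invariance of $\Rset$ under permuting the stacked cuts. Your identity-based check that $e\in\Rset$ (via \Cref{prop:id-in-tot-candidate}) is the right way to handle a point the paper's proof passes over; the ``nonempty enough'' remark is not by itself an argument.

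The $\Cnor{}$ ``obstacle'' is lighter than you suggest, and no simulation argument is needed. A cut--cut commutation is undone by another commuting step, so \Cref{rem:expansion} already gives invariance in both directions. Likewise \Cref{rem:ext-prog-local-preserv} gives two-way invariance for $\GDer{}$.
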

\begin{proof}

We treat the case \( j = n \) for simplicity.
Suppose \( d \) is reducible. Since $\inter{\phi_n}{}= {\inter{\phi_n}{}}^{\perp \perp}$, it suffices show that the derivation $d'=\cutform{d}{d_0 , \dotsc, d_{n-1}}{}$ is orthogonal to ${\inter{\phi_n}{}}^{\perp}$. 
Let $d_n \in {\inter{\phi_n}{}}^\perp$. Using cut permutation reductions, \( \cutform{d'}{d_n}{} \) reduces to $\cutform{d}{d_0, \ldots, d_n}{}$, which is  in $\Rset$ by the fact that $d$ is reducible and $\Rset$ is closed under permutation of cut rules.

That \ref{prop:equiv-red-2} implies \ref{prop:equiv-red-1} can be realised in a similar way. We have to show that $\cutform{d}{d_1, \ldots, d_n}{}\in \Rset$ for $(d_i \in \inter{\phi_i^\perp}{})_{i \le n}$. Permuting the cuts yields  $\cutform{\cutform{d}{d_0 , \dotsc, d_{n-1}}{}}{d_n}{}$, which is in $\Rset{}$ because $d_n \in \inter{\phi_n^\perp}{}={\inter{\phi_n}{}}^\perp$ and $\cutform{d}{d_0, \dotsc, d_{n-1}}{}\in \inter{\phi_n}{}$ by hypothesis.
\end{proof}

The characterisation result of $\Rset$-reducibility  relies on the following properties of the identity derivation.

\begin{prop}\label{prop:id-in-tot-candidate} Let $\Rset\in \{\GDer{}, \Cnor{}\}$. For every formula \( \phi \),
\begin{enumerate}
\item \label{enum:id-cond-1} $d \in \Rset$ iff $\cutform{d}{\id_{\phi}}{\phi}\in \Rset$
    \item \label{enum:id-conds-2} $\id_{\phi}\in \inter{\phi^\perp, \phi}{}$.
    \item \label{enum:id-conds-3} $d \in \inter{\phi}{}$ iff $\cutform{d}{\id_{\phi}}{\phi}\in \inter{\phi}{}$.
\end{enumerate}
\end{prop}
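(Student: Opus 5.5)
We prove the three items in order, treating $\Rset=\Cnor{}$ and $\Rset=\GDer{}$ separately only in item~\ref{enum:id-cond-1}; items~\ref{enum:id-conds-2} and~\ref{enum:id-conds-3} will be derived uniformly from item~\ref{enum:id-cond-1}, together with orthogonality and closure of $\Rset$ under permutation of cuts (already used in \Cref{prop:equivalent-reducibility}).

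For item~\ref{enum:id-cond-1} with $\Rset=\Cnor{}$ we apply \Cref{prop:d-normalises-iff-pcut-eta-expansion-does}.2. By that proposition, $d$ and $\cutform{d}{\id_\phi}{}$ admit depth-increasing $\omega$-reduction sequences with exactly the same limits. Hence one of them has a cut-free progressing limit iff the other does.

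For $\Rset=\GDer{}$ we argue directly with IC sets.
\begin{itemize}
\item Suppose $d\in\GDer{}$ and let $X$ be an IC set of $e=\cutform{d}{\id_\phi}{}$. If $X$ contains a branch through $d$, then $X$ restricted to $d$ is an IC set of $d$. Coherence through the new cut only pairs branches of $d$ with branches of $\id_\phi$, and the internal threads of $d$ remain internal. So $X$ bears a good external thread of $d$.
\begin{itemize}
\item If that thread starts at a formula of $\Gamma$, it is external in $e$.
\item If it starts at $\phi$, it is now internal in $e$. IC-closure then gives a coherent branch of $\id_\phi$ bearing its dual. By \Cref{fact:identity-branches}, that branch of $\id_\phi$ also bears the dual of the dual, i.e.\ the same good thread, external at the root formula $\phi$.
\end{itemize}
If $X$ lies entirely in $\id_\phi$, every branch bears two dual threads (\Cref{fact:identity-branches}). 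The one starting at the external $\phi$ is external, and one of the two is good by \Cref{lem:good-iff-bad}. If the good one is internal, IC-closure forces a branch of $d$ into $X$, which is the previous case.
\item Conversely, suppose $e\in\GDer{}$ and let $Y$ be an IC set of $d$. We extend $Y$ by adding, for each external thread of $Y$ starting at $\phi$, the matching branches of $\id_\phi$, and then take the closure. We check this is an IC set of $e$. A good external thread of $e$ on it either lies in $d$ at $\Gamma$, or lies in $\id_\phi$. In the latter case its partner thread in that identity branch is coherent with a thread of $Y$ from $\phi$, and that thread is good.
\end{itemize}
This bookkeeping of the two dual threads in identity branches, and of closure under limits when adjoining branches, is the main obstacle. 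An alternative would be to go via \Cref{rem:ext-prog-local-preserv} for the forward direction. However, that requires $\omega$ steps, and preservation of $\GDer{}$ under limits is not available, so we keep the direct argument.

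For item~\ref{enum:id-conds-3}, we use $\inter\phi{}=\inter\phi{}^{\perp\perp}$. We have $d\in\inter\phi{}$ iff $\cutform{d}{e}{}\in\Rset$ for all $e\in\inter{\phi^\perp}{}$. Similarly, $\cutform{d}{\id_\phi}{}\in\inter\phi{}$ iff $\cutform{\cutform{d}{\id_\phi}{}}{e}{}\in\Rset$ for all such $e$. Permuting cuts, the latter derivation is $\cutform{\cutform{d}{e}{}}{\id}{}$, with the identity cut on a context formula of the conclusion. Item~\ref{enum:id-cond-1}, applied to $\cutform{d}{e}{}$ at that formula (which its proof permits, as \Cref{prop:d-normalises-iff-pcut-eta-expansion-does} is stated for any conclusion formula), gives the equivalence.

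For item~\ref{enum:id-conds-2}, we use \Cref{prop:equivalent-reducibility} and show that $\id_\phi$ is reducible. It suffices that for every $e\in\inter{\phi}{}^{\perp}=\inter{\phi^\perp}{}$ we have $\cutform{\id_\phi}{e}{}\in\inter{\phi^\perp}{}$. Up to the orientation of the identity, this is exactly the backward direction of item~\ref{enum:id-conds-3} applied to $e$.
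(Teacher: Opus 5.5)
\textbf{Gap in item~3.} The step that fails is the regrouping in your proof of item~3. In $\cutform{\cutform{d}{\id_\phi}{\phi}}{e}{\phi}$ the identity is cut against $d$ on its $\phi^\perp$ and against $e$ on its $\phi$. So in the underlying multicut it is the middle premise, and $d$ and $e$ share no cut pair. Permuting the two cuts can only yield $\cutform{d}{\cutform{\id_\phi}{e}{\phi}}{\phi}$. There is no way to present it as $\cutform{d}{e}{}$ with an identity cut on a conclusion formula: the conclusion $\Gamma,\Delta$ contains no formula of $\id_\phi$ at all. Hence item~1, which removes an identity attached to a \emph{conclusion} formula, does not give the equivalence.

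What you need is that an identity spliced into a cut can be removed: $\cutform{d}{\cutform{\id_\phi}{e}{}}{}\in\Rset$ iff $\cutform{d}{e}{}\in\Rset$. The paper obtains this from exactly that regrouping, eliminating the inner identity cut underneath the outer cut via \Cref{prop:d-normalises-iff-pcut-eta-expansion-does}. For $\GDer{}$ this amounts to redoing your IC-set bookkeeping for identity branches \emph{both} of whose threads are internal.

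Your item~2 inherits the gap, since you derive it from item~3; also, the direction you invoke there is the forward one, not the backward one. Item~2 does not need item~3. $\id_\phi\in\inter{\phi}{}$ says $\cutform{\id_\phi}{e}{\phi}\in\Rset$ for all $e\in\inter{\phi}{\perp}\subseteq\Rset$. That derivation is just $e$ with an identity cut on its conclusion formula $\phi^\perp$, so item~1 applies directly, and symmetrically for $\inter{\phi^\perp}{}$. This is the paper's route.

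\textbf{Gap in the converse of item~1 for $\GDer{}$.} Your forward direction of item~1 for $\GDer{}$ is fine, but the converse does not work as written. Closing up the adjoined identity branches can produce identity branches with no coherent partner. For example:
\begin{itemize}
\item Let $\phi=\nu X(X\oplus\theta)$ with $\theta=\mu Z.Z$, and $\chi=\mu W(W\with W)$.
\item Let $d$ be cut-free with conclusion $\chi,\phi$. Its leftmost branch only unfolds $\chi$. The side branch leaving it at the $k$-th $\with$ unfolds $\phi$ $k$ times through the left disjunct, then passes to $\theta$ and unfolds it forever.
\item Take $Y=\Branch{d}$. The identity branches you adjoin carry the bad threads $\tau_k$ of the side branches.
\item These branches converge to the identity branch whose $\phi$-thread always unfolds $\nu X$. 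That thread is good, and no branch of $d$ bears it, since on the leftmost branch the $\phi$-trace is never principal.
\end{itemize}
So your closed set is not IC, and it even bears a good external thread. In this example the leftmost branch alone is a suitable IC set of $\cutform{d}{\id_\phi}{}$. So it is your construction rather than the claim that breaks, but a different choice of IC set is needed in general. The paper gives only a one-line justification for this case, so you are not missing an argument it supplies, but your proposal does not close this step either.
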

\begin{proof}
 Let us show point~\ref{enum:id-cond-1}.  If $\Rset=\Cnor{}$ then we apply \cref{prop:d-normalises-iff-pcut-eta-expansion-does}. \purple{The case
 $\Rset=\GDer{}$ follows by noticing that every branch of the identity derivation has exactly two dual threads,  one is good and the other one is bad by~\Cref{lem:good-iff-bad}.} 
 Point \ref{enum:id-conds-2} follows directly from point \ref{enum:id-cond-1}. As for point \ref{enum:id-conds-3},  the left-to-right direction follows from point \ref{enum:id-conds-2}. Concerning the converse, suppose $\cutform{d}{\id_{\phi}}{\phi}\in \inter{\phi}{}$, and let   
  $e\in \inter{\phi^\perp}{}$. By assumption,   $\cutform{\cutform{d}{\id_{\phi}}{\phi}}{e}{\phi}\in \Rset$, and so $\cutform{d}{\cutform{\id_{\phi}}{e}{\phi}}{\phi}\in \Rset$ by commuting cuts. By \cref{prop:d-normalises-iff-pcut-eta-expansion-does}, we have $\cutform{d}{e}{\phi}\in \Rset$. Therefore $d \in \inter{\phi}{}$.
\end{proof}

\begin{prop}[Characterisation]\label{prop:reducible-implies-in-interpretation}
 Let $\Rset \in \{\GDer{}, \Cnor{}\}$ and let $d$ be a derivation with conclusion $\Gamma$. Then $d$ is $\Rset$-reducible  iff  $d \in \inter{\Gamma}{}_\Rset$.
\end{prop}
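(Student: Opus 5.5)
We prove both directions using \Cref{prop:equivalent-reducibility} and \Cref{prop:id-in-tot-candidate}. Throughout, write $\Gamma = \phi_0, \ldots, \phi_n$, so that $\inter{\Gamma}{} = \inter{\phi_0}{}\cap\dotsm\cap\inter{\phi_n}{}$. For the membership $d \in \inter{\phi_j}{}$ to make sense, we read $d$ as a derivation with distinguished formula $\phi_j$.

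\emph{Left to right.} Suppose $d$ is $\Rset$-reducible and fix $j \le n$. We must show $d \in \inter{\phi_j}{}$. By \Cref{prop:equivalent-reducibility}, for all $(d_i \in \inter{\phi_i^\perp}{})_{i\neq j}$ we have $\cutform{d}{(d_i)_{i\neq j}}{} \in \inter{\phi_j}{}$. The idea is to instantiate each $d_i$ with the identity $\id_{\phi_i}$, which lies in $\inter{\phi_i^\perp}{}$ by \Cref{prop:id-in-tot-candidate}(\ref{enum:id-conds-2}). This yields $\cutform{d}{(\id_{\phi_i})_{i\neq j}}{} \in \inter{\phi_j}{}$.

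It then remains to strip the identity cuts one at a time. To do so we need the following generalisation of \Cref{prop:id-in-tot-candidate}(\ref{enum:id-conds-3}): for $i \neq j$, $e \in \inter{\phi_j}{}$ iff $\cutform{e}{\id_{\phi_i}}{\phi_i} \in \inter{\phi_j}{}$. Its proof is the same as that of point~\ref{enum:id-conds-3}. Given $e' \in \inter{\phi_j^\perp}{}$, the cuts with $e'$ and with $\id_{\phi_i}$ act on different formulas, so we may permute them. We then apply \Cref{prop:id-in-tot-candidate}(\ref{enum:id-cond-1}), which says that membership in $\Rset$ is invariant under cutting with an identity. This invariance is exactly \Cref{prop:d-normalises-iff-pcut-eta-expansion-does} for $\Cnor{}$, and for $\GDer{}$ it follows from the two dual threads on each identity branch. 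We also use that $\Rset$ is closed under permutation of cuts, as was already used in the proof of \Cref{prop:equivalent-reducibility}. Applying the generalisation $n$ times gives $d \in \inter{\phi_j}{}$.

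\emph{Right to left.} Suppose $d \in \inter{\phi_j}{}$ for all $j$; in fact $d\in\inter{\phi_n}{}$ together with the assumption for the other formulas is what we use. We show reducibility by induction on the number of cut partners, proving the stronger statement: for any $k \le n$ and any $(d_i \in \inter{\phi_i^\perp}{})_{i<k}$, the derivation $\cutform{d}{d_0,\ldots,d_{k-1}}{}$ lies in $\inter{\phi_j}{}$ for every $j \ge k$. Taking $k=n$ and $j=n$, then cutting once more with $d_n \in \inter{\phi_n}{}^\perp$, gives membership in $\Rset$, as required.

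The inductive step is the obstacle. Knowing $e \in \inter{\phi_k}{}$ only controls cuts of $e$ along $\phi_k$; it does not directly tell us that $\cutform{e}{d_k}{\phi_k}$ stays in $\inter{\phi_j}{}$ for $j>k$, which is a statement about the remaining formulas. The approach I would try is to show this inclusion by orthogonality. Take $e' \in \inter{\phi_j^\perp}{}$ and consider the double cut $\cutform{\cutform{e}{d_k}{}}{e'}{}$. Permuting cuts turns this into $\cutform{\cutform{e}{e'}{}}{d_k}{}$. One then needs that $\cutform{e}{e'}{} \in \inter{\phi_k}{}$, which would follow from the same kind of closure statement applied with $j$ and $k$ exchanged. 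To break this circularity, I would first prove, by simultaneous induction, that membership of a derivation in all interpretations of its conclusion formulas is preserved under cutting one formula against its dual candidate. The base case of this induction is orthogonality into $\Rset$.

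If this circular argument resists, the fallback is to read $\inter{\Gamma}{}$ as interpretation of the multi-conclusion sequent in the sense of \Cref{prop:equivalent-reducibility}, i.e.\ condition~\ref{prop:equiv-red-2}. Under that reading the right-to-left direction reduces immediately to \Cref{prop:equivalent-reducibility}, and the left-to-right direction is exactly the identity-stripping argument above.
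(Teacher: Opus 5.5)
\textbf{Left to right.} This direction is the paper's argument: instantiate \Cref{prop:equivalent-reducibility} with the identities $\id_{\phi_i}\in\inter{\phi_i^\perp}{}$, then strip them using \Cref{prop:id-in-tot-candidate}. You are right that point~3 there only covers an identity cut on the distinguished formula. The paper's ``repeatedly applying'' therefore needs your variant for $i\neq j$, and your proof of it (permute the two cuts, then apply point~1) is fine.

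\textbf{Right to left: this is where the gap is.}
\begin{itemize}
\item The simultaneous induction has no measure. Reducing ``$\cutform{e}{d_k}{}\in\inter{\phi_j}{}$'' to ``$\cutform{e}{e'}{}\in\inter{\phi_k}{}$'' is the same problem with $j$ and $k$ swapped. Nothing gets smaller, so the circularity is never broken.
\item The underlying obstruction is this. $d\in\inter{\phi_0}{}\cap\inter{\phi_1}{}$ only tells you that $\cutform{d}{e_0}{}$ and $\cutform{d}{e_1}{}$ each lie in $\Rset$ separately. Nothing in the orthogonality calculus turns that into $\cutform{d}{e_0,e_1}{}\in\Rset$: the intersection is a weaker condition than orthogonality to the tensor of the dual candidates.
\item Your fallback does not prove the stated proposition; it changes the meaning of $\inter{\Gamma}{}$. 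Under that multi-conclusion reading, both directions are literally \Cref{prop:equivalent-reducibility}, and the identity stripping becomes superfluous.
\end{itemize}

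\textbf{Comparison with the paper.} The paper dismisses this direction as ``by definition''. That is literal only under your multi-conclusion reading, or when $\Gamma$ is a single formula. Yet its own left-to-right argument is tailored to the intersection definition. So your diagnosis is sharper than the paper's text, but you still owe an argument.

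\textbf{How to close it for $\Rset=\Cnor{}$.} An argument can be given a posteriori, and it is not circular, because the paper only ever uses the left-to-right direction.
\begin{enumerate}
\item If $d\in\inter{\Gamma}{}$, then $d\in\Cnor{}$ by \Cref{prop:candidates-subset-R}.
\item Its cut-free progressing limit $d^*$ is reducible, by the proof of \Cref{lem:progressing-implies-reducible}.
\item Performing $d$'s reductions underneath the cuts gives a depth-increasing $\omega$-reduction sequence from $\cutform{d}{\vec e}{}$ to $\cutform{d^*}{\vec e}{}$.
\item \Cref{prop:compression} then yields $\cutform{d}{\vec e}{}\in\Cnor{}$.
\end{enumerate}
For $\GDer{}$, the analogous step would need $\GDer{}$ to be closed under backward $\omega$-reduction, which the paper does not provide.
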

\begin{proof}
   Let $\Gamma=\phi_0, \ldots, \phi_n$. The right-to-left direction follows by definition. Concerning the converse, suppose  $d$ is reducible. We show that $d \in \inter{\phi_i}{}$ for each $i$. By \cref{prop:id-in-tot-candidate}, $\id_{\phi_j}\in \inter{\phi_j^\perp}{}$ for every \( j \), so  \Cref{prop:equivalent-reducibility} implies 
   $ \cutform{d}{\id_{\phi_0}, \dotsc, \id_{\phi_{i-1}}, \id_{\phi_{i+1}}, \dotsc \id_{\phi_{n}}}{}\in \inter{\phi_i}{} 
   $. 
  Repeatedly applying \cref{prop:id-in-tot-candidate} yields $d\in \inter{\phi_i}{}$. Therefore $d \in \inter{\Gamma}{}$.
\end{proof}

It is convenient to expand the concept of {reducibility} to formulas endowed with ordinal assignments. 
\begin{defn}[$\Rset$-reducibility, revisited]
	Let \( \vec \alpha_0 \), \dots, \( \vec \alpha_n \) be sequences of ordinals. 
A derivation with conclusion \(\phi_0, \dotsc, \phi_n \)  is \( (\vec \alpha_0, \dotsc, \vec \alpha_n) \)\emph{-reducible (to \( \Rset \))} iff \( \cutform{d}{e_0, \dotsc, e_n}{} \in \Rset \) for all derivations \( (e_i \in \inter{\phi_i^{\vec \alpha_i}}{\perp}_\Rset)_{i \le n} \).
\end{defn}

\begin{prop}[\Cref{prop:equivalent-reducibility}, revisited]\label{reduc-approx-charac}
Let $\Rset \in \{\Cnor{}, \GDer{}\}$	and let  \( d \) be a derivation with conclusion \( \phi_0 , \dotsc, \phi_n \). The following are equivalent.
	\begin{enumerate}
		\item \( d \) is \( (\vec \alpha_0, \dotsc, \vec \alpha_n) \)-reducible,
		\item \( \cutform{d}{e_0, \dotsc, e_{n-1}}{} \in \inter{\phi_n^{\vec \alpha_n}}{} \) for all \( e_j \in \inter{\phi_j^{\vec \alpha_i}}\perp \) (\( j < n \)).
	\end{enumerate}
\end{prop}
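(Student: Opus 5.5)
The argument mirrors the proof of \Cref{prop:equivalent-reducibility}, with the candidates $\inter{\phi_i^\perp}{}$ replaced by $\inter{\phi_i^{\vec \alpha_i}}{}^\perp$. Reading $\inter{\phi_j^{\vec\alpha_j}}{\perp}$ as $(\inter{\phi_j^{\vec\alpha_j}}{})^\perp$ (that is, $\inter{(\phi_j^{\vec \alpha_j})^\perp}{}$), I will use two facts. First, each $\inter{\phi_n^{\vec\alpha_n}}{}$ is a reducibility candidate; this holds because, by definition, the interpretation of an indexed formula $\nu^\alpha X\phi$ is an intersection of orthogonals, and intersections of candidates are candidates. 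Second, $\Rset\in\{\Cnor{},\GDer{}\}$ is closed under permutation of cuts in both directions: for $\Cnor{}$ this follows from \Cref{rem:expansion} together with the permutation rules, and for $\GDer{}$ from \Cref{rem:ext-prog-local-preserv}. This second fact was already used implicitly in the proof of \Cref{prop:equivalent-reducibility}.

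For (1)$\Rightarrow$(2), fix $e_j\in\inter{\phi_j^{\vec\alpha_j}}{}^\perp$ for $j<n$ and set $d'=\cutform{d}{e_0,\dots,e_{n-1}}{}$. Since $\inter{\phi_n^{\vec\alpha_n}}{}=\inter{\phi_n^{\vec\alpha_n}}{}^{\perp\perp}$, it suffices to show $d'\perp e_n$ for every $e_n\in\inter{\phi_n^{\vec\alpha_n}}{}^\perp$. Note that $d'$ has conclusion formula $\phi_n$, while the candidate concerns $\phi_n^{\vec\alpha_n}$. Orthogonality is taken relative to the cut formula, and the assignment only decorates the syntax, so $\phi_n^{\vec\alpha_n}$ and $\phi_n$ have the same underlying formula and the cut $\cutform{d'}{e_n}{}$ makes sense. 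The derivation $\cutform{d'}{e_n}{}$ is a series of consecutive cuts that, modulo cut permutation, coincides with $\cutform{d}{e_0,\dots,e_n}{}$. The latter lies in $\Rset$ by $(\vec\alpha_0,\dots,\vec\alpha_n)$-reducibility, so by closure of $\Rset$ under cut permutation we get $\cutform{d'}{e_n}{}\in\Rset$.

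For (2)$\Rightarrow$(1), take $e_i\in\inter{\phi_i^{\vec\alpha_i}}{}^\perp$ for all $i\le n$. Then $\cutform{d}{e_0,\dots,e_n}{}$ is, up to cut permutation, $\cutform{\cutform{d}{e_0,\dots,e_{n-1}}{}}{e_n}{}$. By (2), the inner derivation lies in $\inter{\phi_n^{\vec\alpha_n}}{}$, and $e_n$ lies in its orthogonal, so the composite is in $\Rset$. Closure under permutation then transfers membership back to $\cutform{d}{e_0,\dots,e_n}{}$.

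The only delicate point is the bookkeeping: checking that the orthogonality $\perp$ for the approximant $\phi_n^{\vec\alpha_n}$ is the relation $\perp_{\phi_n}^\Rset$ on derivations with conclusion $\phi_n$, so that the indexed syntax is harmless. The cut-permutation closure of $\Rset$, which I take as given as in the earlier proof, is the substantive ingredient. Everything else is routine.
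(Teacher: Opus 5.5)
Your route is the one the paper intends. The paper gives no separate proof of this statement; it reruns the proof of \Cref{prop:equivalent-reducibility} with $\inter{\phi_i^{\vec\alpha_i}}{\perp}$ in place of $\inter{\phi_i^\perp}{}$. For the last formula no permutation is even needed: in the paper's notation $\cutform{\cutform{d}{e_0,\dots,e_{n-1}}{}}{e_n}{}$ is literally $\cutform{d}{e_0,\dots,e_n}{}$. So closure under cut permutation is not the substantive point.

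What you do skip is a step in (1)$\Rightarrow$(2). By definition orthogonals live inside $\Rset$, that is, $\mathcal X^\perp\subseteq\Rset_{\phi^\perp}$. Hence $\inter{\phi_n^{\vec\alpha_n}}{\perp\perp}$ contains only derivations that are themselves in $\Rset_{\phi_n}$. Showing $\cutform{d'}{e_n}{}\in\Rset$ for every $e_n\in\inter{\phi_n^{\vec\alpha_n}}{\perp}$ therefore does not yet give $d'\in\inter{\phi_n^{\vec\alpha_n}}{}$. You also need $d'\in\Rset$, and nothing in your argument provides it. The paper's proof of \Cref{prop:equivalent-reducibility} passes over this point as well.

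The identity derivation supplies the missing step.
\begin{enumerate}
\item First, $\id_{\phi_n}\in\inter{\phi_n^{\vec\alpha_n}}{\perp}$. It lies in $\Rset$, being cut-free and progressing (\Cref{fact:identity-branches} and \Cref{rem:identity-is-good}). Moreover, for every $g\in\inter{\phi_n^{\vec\alpha_n}}{}\subseteq\Rset_{\phi_n}$ we have $\cutform{g}{\id_{\phi_n}}{}\in\Rset$ by \Cref{prop:id-in-tot-candidate}(1).
\item Instantiating hypothesis (1) at $e_n=\id_{\phi_n}$ gives $\cutform{d'}{\id_{\phi_n}}{}\in\Rset$.
\item Applying \Cref{prop:id-in-tot-candidate}(1) again yields $d'\in\Rset$.
\end{enumerate}
With this addition your argument is complete.
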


The following is an immediate consequence of \cref{prop:pos-neg-approximants}.

\begin{prop}\label{prop:reducibility-approximations}
	Let $\Rset \in \{\Cnor{}, \GDer{}\}$ and  \( d \) be a derivation with conclusion $\phi_0, \ldots, \phi_n$. 
    If $d$ is not reducible then there exist ordinal sequences $\vec \alpha_0, \ldots, \vec \alpha_n$  such that
    \begin{enumerate}
    	\item \( d \) is not $(\vec \alpha_0, \dotsc, \vec \alpha_n)$-reducible and
    	\item If \( \vec \beta_i \le \vec \alpha_i \) for all \( i \le n \) and \( \vec \beta_j < \vec \alpha_j \) for some \( j \le n \) then \( d \) is $(\vec \beta_0, \dotsc, \vec \beta_n)$-reducible.
    \end{enumerate}
\end{prop}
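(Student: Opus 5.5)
The plan is to read the statement as a minimality argument over a well-founded order on tuples of ordinal assignments. I will first show that the collection
\[
S = \{ (\vec\alpha_0,\dotsc,\vec\alpha_n) \mid d \text{ is not } (\vec\alpha_0,\dotsc,\vec\alpha_n)\text{-reducible} \}
\]
is non-empty. I will then take a minimal element of $S$ for the componentwise order.

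\emph{Step 1: $S$ is non-empty.} Since $d$ is not reducible, there are derivations $e_i \in \inter{\phi_i^\perp}{}$, for $i\le n$, such that $\cutform{d}{e_0,\dotsc,e_n}{}\notin\Rset$. By the first clause of \cref{prop:pos-neg-approximants}, applied to $\psi=\phi_i^\perp$, each $e_i$ lies in $\inter{(\phi_i^\perp)_{\vec\alpha_i}}{}$ for some $\mu$-assignment $\vec\alpha_i$.

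I then need the duality identity
\[
\inter{(\phi_i^\perp)_{\vec\alpha_i}}{} = \inter{\phi_i^{\vec\alpha_i}}{}^{\perp}.
\]
This should follow from the clause $\inter{\mu^\alpha X\phi}{}=(\inter{\nu^\alpha X\phi^\perp}{})^\perp$ together with \cref{prop:properties-candidates}. The point is that negation exchanges $\mu$- and $\nu$-subformulas while preserving their priority order, so a $\mu$-assignment to $\phi^\perp$ corresponds exactly to a $\nu$-assignment to $\phi$. This bookkeeping is the step I expect to need the most care, because of the recursive substitution order in the definition of assignments. Given the identity, the same $e_i$ witness that $d$ is not $(\vec\alpha_0,\dotsc,\vec\alpha_n)$-reducible, so this tuple belongs to $S$.

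\emph{Step 2: choosing a minimal element.} Order tuples componentwise: $\vec\beta\preceq\vec\alpha$ iff $\vec\beta_i\le\vec\alpha_i$ for all $i$, where each component carries the lexicographic order. Each $\vec\alpha_i$ has length bounded by the number of fixed-point subformulas of $\phi_i$, so the lexicographic order on that component is well-founded, as noted after the definition of assignments. A finite product of well-founded orders is well-founded for the strict componentwise order. The ordinals form a proper class, but this is harmless: fix the tuple from Step 1 and restrict attention to the set of tuples $\preceq$ it, which is a set since all its entries are ordinals below fixed bounds.

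Within this set, $S$ is non-empty, so it has a $\prec$-minimal element $(\vec\alpha_0,\dotsc,\vec\alpha_n)$. Clause (1) holds because this tuple lies in $S$. For clause (2), any $(\vec\beta_0,\dotsc,\vec\beta_n)$ with $\vec\beta_i\le\vec\alpha_i$ for all $i$ and $\vec\beta_j<\vec\alpha_j$ for some $j$ is strictly below the chosen tuple. By minimality it is not in $S$, so $d$ is $(\vec\beta_0,\dotsc,\vec\beta_n)$-reducible.
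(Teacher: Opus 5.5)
Your proposal is correct, and it is essentially the argument the paper leaves implicit when it calls the statement an immediate consequence of \cref{prop:pos-neg-approximants}. First obtain one non-reducible tuple from the approximants; your use of the first clause together with the $\mu$/$\nu$ duality is fine. Then take a minimal such tuple.

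One justification needs repair. Under the lexicographic order, the tuples below a fixed one are not bounded entrywise: for example, $\langle 0,\gamma\rangle < \langle 1,0\rangle$ for every ordinal $\gamma$. So that collection is a proper class, not a set.

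You do not need sethood, however. Choose the lexicographically least first component occurring in $S$, then the least second component among the remaining tuples, and so on. Each step is possible because nonempty classes of ordinals have least elements. The resulting tuple is directly minimal for the componentwise order, which gives clause (2).
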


\section{First cut elimination argument}\label{sec:cut-elimination-theorem}

Our first cut elimination proof is based on $\Cnor{}$-reducibility candidates. We will show that every progressing derivation with conclusion $\Gamma$ belongs to $\inter{\Gamma}{}_{\Rset}$ (for $\Rset\in \{\Cnor{}, \GDer{}\}$), and therefore is $\omega$-normalising by~\Cref{prop:candidates-subset-R}. Our approach follows a standard argument for establishing logical soundness (or computational totality) of ill-founded proof systems (see, e.g., \cite{muLJ,farzad-infinitary}). First, we  show that the inference rules of the system are sound for  our  reducibility candidates   semantics,  both w.r.t.  $\Cnor{}$-reducibility and $\GDer{}$-reducibility (\Cref{lem:reflecting-non-totality}). Then, we assume towards contradiction  $d \not \in \inter{\Gamma}{}_{\Cnor{}}$, from which we infer that $d$ does not belong to some approximation of the latter candidate (\Cref{prop:pos-neg-approximants}). By  repeatedly appealing to (the contrapositive of) \Cref{lem:reflecting-non-totality} we   construct a branch of $d$ reflecting this non-membership property. Finally, by progressivity of $d$ we can find a good thread along this branch decreasing infinitely often such ordinal approximation, contradicting well-foundedness of ordinals.

 \begin{lem}[Local soundness]\label{lem:reflecting-non-totality}
\purple{Let $\Rset\in\{\GDer{}, \Cnor{}\}$, and let $d$ have the form
$$
	\vlderivation{
	\vliiin{\rrule}{}{\theta_0, \ldots, \theta_n}{\vldr{d_0}{\Gamma_0}}{\vlhy{\ldots}}{\vldr{d_k}{\Gamma_k}}
	}
$$
If $d$ is not reducible  then there is $i \le k$ such that 
\begin{enumerate}
	\item $d_{i}$ is not reducible.
	\item For $\Gamma_{i} = \xi_0, \ldots, \xi_m$, if \( d \) is not $(\vec\alpha_0, \ldots, \vec  \alpha_n)$-reducible  then there exist ordinals sequences $\vec \beta_0, \ldots, \vec \beta_m$ such that \( d_i \) is not $(\vec \beta_0, \ldots, \vec \beta_m)$-reducible  and, for all \( j \le n  \) and \( l \le m \),
	\begin{itemize}
		\item if $\xi_{l}$ is an ancestor of $\theta_j$ and \( \theta_j \fleq \xi_{l}\) then $\vec \beta_{l} \leq \vec \alpha_j$
		\item If $\rrule=\nu$,  then $m=n$ and $\vec \beta_{m}< \vec \alpha_{n} $.
	\end{itemize}
\end{enumerate}
}
\end{lem}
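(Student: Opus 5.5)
We argue by contraposition and by cases on the last rule $\rrule$. Assume every premise $d_i$ is reducible (respectively, $(\vec\beta_0,\dots,\vec\beta_m)$-reducible for every choice of $\vec\beta$'s satisfying the two constraints). We show that $d$ is reducible (respectively, $(\vec\alpha_0,\dots,\vec\alpha_n)$-reducible).

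The uniform tool is the characterisation of reducibility through a single distinguished conclusion formula. By \Cref{prop:equivalent-reducibility} and \Cref{reduc-approx-charac}, it suffices to show the following for the principal formula $\theta_j$ of $\rrule$: after cutting all other conclusions $\theta_{j'}$ against arbitrary $e_{j'}\in\inter{\theta_{j'}^{\vec\alpha_{j'}}}{\perp}$, the resulting derivation lies in $\inter{\theta_j^{\vec\alpha_j}}{}$. Using commuting cut reductions, this derivation reduces to an instance of $\rrule$ whose premises are the cut-composites $\cutform{d_i}{\vec e}{}$. Two facts let us transfer membership along such reductions, taking $\Rset\in\{\Cnor{},\GDer{}\}$. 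First, $\Rset$ is closed under anti-reduction by \Cref{rem:expansion}; for $\GDer{}$ the analogous closure has to be checked directly, since an IC set of $d$ yields an IC set of the reduct, and \Cref{rem:ext-prog-local-preserv} gives only the other direction. Second, reducibility candidates are closed under anti-reduction because they are orthogonals.

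For each context formula $\xi_l$ of $d_i$ that is an ancestor of some $\theta_j$, we take $\vec\beta_l=\vec\alpha_j$, which satisfies the first constraint. The induction hypothesis then places each $\cutform{d_i}{\vec e}{}$ in the interpretation of its minor formula. The cases $\rrule\in\{\cut,\ex\}$ are handled by composing cuts, with \Cref{prop:equivalent-reducibility} applied to the cut formula. For the cut, a fresh ordinal assignment is chosen for the cut formula: we pick the least failing assignments via \Cref{prop:reducibility-approximations} and \Cref{prop:pos-neg-approximants}.

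The logical cases then follow from the orthogonal definition of the interpretation. Take $\rrule=\otimes$. The derivation $\otimes(d_0',d_1')$ with $d_0'\in\inter{\phi}{}$ and $d_1'\in\inter{\psi}{}$ lies in $\inter{\phi\otimes\psi}{}$, because that interpretation is $\{\otimes\text{-derivations}\}^{\perp\perp}$. For $\parr$ and $\with$, we test against the generators $\otimes(f,g)$ and $\oplus_i(f)$: a critical reduction step reduces the test to cuts of the premise, and anti-reduction closure finishes the argument. The $\oplus$ case is analogous to $\otimes$, and $\top$ and $\bot$ are immediate.

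For $\mu$, we use \Cref{prop:characterisation-of-interpretation-of-mu} together with its dual and its approximant version. For $\rrule=\nu$ with principal formula $\nu X\phi$, the assignment $\vec\alpha_n$ assigns some ordinal $\alpha$ to this $\nu$. By the definition of $\inter{\nu^\alpha X\phi}{}$, we must show orthogonality to $\mu(f)$ for every $f\in\inter{\phi^\perp}{\env[X\mapsto\inter{\nu^\gamma X\phi}{}]}$ with $\gamma<\alpha$. Via \Cref{prop:properties-candidates}(\ref{inter-subst}), this set equals $\inter{(\phi(\nu X\phi))^{\vec\beta}}{\perp}$, where $\vec\beta$ is $\vec\alpha_n$ with the entry for $\nu X\phi$ lowered to $\gamma$ and lower-priority entries adjusted. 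Since the entry lowered is the one for $\nu X\phi$ while all higher-priority entries are unchanged, $\vec\beta<\vec\alpha_n$ lexicographically, which yields the second constraint.

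The main obstacle is this $\nu$ case. We must check that the unfolded formula $\phi(\nu X\phi)$ stays in the same $\fleq$-class, with the same priority enumeration, so that the lexicographic decrease is well defined and it is the ancestor condition that forces $\vec\beta_l\le\vec\alpha_j$. We must also handle the situation where lower-priority entries are reset, which is exactly why the ordering is lexicographic with prefixes counted as smaller.
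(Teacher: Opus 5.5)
Your overall plan matches the paper: contraposition, case analysis on $\rrule$, and a $\nu$ case that lowers the ordinal of $\nu X\phi$ and truncates the lower-priority entries. The uniform transfer principle you rely on, however, is false for $\Rset=\GDer{}$. You assert that $\GDer{}$, and hence every $\GDer{}$-candidate, is closed under anti-reduction because ``an IC set of $d$ yields an IC set of the reduct''. This fails for reductions that erase a subderivation, in particular the critical $\with$/$\oplus_i$ step your $\with$ case uses.

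Here is a counterexample. Let $d_0$ be the $\top$-axiom with conclusion $\mu Y.Y,\top$. Let $d_1$ derive the same sequent by infinitely many $\mu$-steps on $\mu Y.Y$. Apply $\with$ to obtain $\mu Y.Y,\top\with\top$, and cut this against $\oplus_0$ applied to the $\top$-axiom with conclusion $\top,\zero$. The critical step yields a finite derivation, which is vacuously in $\GDer{}$. In the redex, the only branch runs through $d_1$. The internal trace from $\top\with\top$ becomes constantly $\top$, so it is no thread, and that branch alone forms an IC set. This IC set bears no good external thread. So the redex is not in $\GDer{}$ while the reduct is. Hence your $\with$ case, and the claim that $\GDer{}$-candidates are closed under anti-reduction, do not go through as argued.

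The missing idea is to use the reducibility of the \emph{discarded} premise $d_{1-i}$, which your contrapositive hypothesis actually supplies. It is needed to show that IC sets reaching into the erased subderivation still bear good external threads. This is exactly the extra argument the paper gives, and it is why the paper singles out $\with$ under $\GDer{}$, not $\nu$, as the delicate case.

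There are also smaller points:
\begin{itemize}
\item Your route first commutes the $e_j$-cuts into the $\with$-rule, which duplicates the $e_j$. Reflecting IC sets across that step is not automatic either, since a branch of some $e_j$ may need coherent partners above both premises. The paper instead permutes the test cut directly onto the $\with$-rule and never duplicates.
\item Your ``immediate'' $\top$ case under $\GDer{}$ involves an erasing commutation, so it needs $e_j\in\GDer{}$ explicitly.
\item In the cut case, the right choice for the cut formula is the trivial assignment on both sides. Then the two composites land in $\inter{\phi}{}$ and $\inter{\phi^\perp}{}$ and are orthogonal; ``least failing assignments'' play no role in the contrapositive.
\end{itemize}
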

\begin{proof}
 In all cases except \( \rrule = \nu \), the second claim is a trivial strengthening of the first argument by \cref{prop:reducibility-approximations},  and therefore omitted.   We proceed by case analysis on the rule \( \rrule \).
 Let    
 \( \Delta = \theta_0, \ldots, \theta_{n-1} \).
\begin{itemize}
\item $\rrule=\cut$. So $k=1$. Let \( \phi \) be the cut formula, that is \( \Gamma_0 = \Delta_0 , \phi \) and \( \Gamma_1 = \Delta_1 , \phi^\perp \) with \( \Gamma = \Delta_0 , \Delta_1 \). Suppose that $d_0$ is reducible and  $d_1$ is reducible. 
Fix $e_j \in \inter{\theta_j^\perp}{}={\inter{\theta_j}{}}^\perp$ for \( j \le n \).
\Cref{prop:equivalent-reducibility} implies 
\begin{align*}
	\cutform{d_0}{e_0, \ldots, e_{m-1}}{} &\in \inter{\phi}{}
	\\
	\cutform{d_1}{e_{m}, \ldots, e_{n}}{} &\in \inter{\phi^\perp}{}={\inter{\phi}{}}^\perp
\end{align*}
where \( m = \lvert \Delta_0 \rvert \).
Thus, 
$$
\cutform{\cutform{d_0}{e_0, \ldots, e_{m-1}}{}}{\cutform{d_1}{e_{m}, \ldots, e_{n}}{}}{} \in \Rset{}. 
$$
By~\Cref{rem:expansion} and the fact that $\GDer{}$ is invariant under permuting cuts,  we infer
$$
\cutform{\cutform{d_0}{d_1}{}}{e_0, \ldots, e_{n}{}}{} \in \Rset.
$$
So \( d \) is reducible.

\item $\rrule= \parr$. So $k=0$, $\theta_n=\phi\parr \psi$ and $\Gamma_0 = \Delta, \phi, \psi$. Suppose that $d$ is not reducible. Appealing to \cref{prop:equivalent-reducibility}, let $ e_j \in \inter{\theta_j^\perp}{}={\inter{\theta_j}{}}^\perp$ for \( j < n \) be such that 
$$ \cutform{d}{e_0, \ldots, e_{n-1}}{}\not \in \inter{\phi \parr \psi}{}.
$$
Let the conclusion of \( e_j \) be \( \theta_j^\perp, \Theta_j \). Recall that
$$
\inter{\phi \parr \psi}{}=
\left\{ \begin{array}{c|c}
    \vlderivation{
	 \vliin{\otimes}{}{\Sigma', \Sigma'', \phi^\perp \otimes \psi^\perp}{\vldr{f}{\Sigma', \phi^\perp}}{\vldr{g}{\Sigma'', \psi^\perp}}    
	}
	 & f  \in \inter {\phi^\perp} {}\ ,\ g  \in \inter{\psi^\perp} {}
	\end{array} \right\}^{\perp}.
$$
So, there exist derivations $f \in \inter{\phi^\perp}{}={\inter{\phi}{}}^\perp$ and $g \in \inter{\psi^\perp}{}={\inter{\psi}{}}^\perp$ such that the following derivation is not in $\Rset{}$:
$$
\vlderivation{
\vliin{\cut}{}{\Theta_0, \ldots, \Theta_{n-1}, \Sigma, \Sigma'}
{
\vliiiiq{\cut}{}{\Theta_0, \ldots, \Theta_{n-1}, \phi \parr \psi}{
\vlin{\parr}{}{\Delta, \phi \parr \psi}{\vldr{d_0}{\Delta, \phi, \psi}}
}
{
\vldr{e_0}{\theta^\perp_0, \Theta_0}
}
{\vlhy{\ldots}}
{
\vldr{e_{n-1}}{\theta^\perp_{n-1}, \Theta_{n-1}}
}
}
{\vliin{\otimes}{}{\Sigma, \Sigma', \phi^\perp \otimes \psi^\perp}{\vldr{f}{\Sigma, \phi^\perp}}{\vldr{g}{\Sigma,' \psi^\perp}}  }
}
$$
We apply a series of cut elimination steps and obtain:
$$
\vlderivation{
\vliin{\cut}{}{\Theta_0, \ldots, \Theta_{n-1}, \Sigma, \Sigma'}
{
\vliin{\cut}{}{\Theta_0, \ldots, \Theta_{n-1}, \Sigma, \Sigma', \psi}
{
\vliiiiq{\cut}{}{\Theta_0, \ldots, \Theta_{n-1}, \phi , \psi}{
\vldr{d_0}{\Delta, \phi, \psi}
}
{
\vldr{e_0}{\theta^\perp_0, \Theta_0}
}
{\vlhy{\ldots}}
{
\vldr{e_{n-1}}{\theta^\perp_{n-1}, \Theta_{n-1}}
}
}
{\vldr{f}{\Sigma, \phi^\perp}}
}
{\vldr{g}{\Sigma', \psi^\perp}}}
$$
By~\Cref{rem:expansion} and inspecting the definition of $\GDer{}$ we have that the above derivation is not in $\Rset$. Therefore,  $d_0$ is not reducible.

\item $\rrule= \with$. So $k=1$, $\theta_{n}= \phi_0\with \phi_1$, $\Gamma_0= \Delta, \phi_0$, and $\Gamma_1= \Delta, \phi_1$. {This case is delicate for $\GDer{}$-reducibility, as the critical cut elimination step for the rule $\with$ discards one of its premises,  erasing a subderivation.  So, some relevant information about threads might disappear after that cut elimination step. For this reason, we will treat  $\Cnor{}{}$-reducibility and $\GDer{}$-reducibility differently for this case.  }

 Suppose $d$ is not $\Rset{}$-reducible, so there exist derivations $e_j\in \inter{\theta_j^\perp}{}={\inter{\theta_j}{}}^\perp$ for \( j < n \) such that 
$$ \cutform{d}{e_0, \ldots, e_{n-1}}{}\not \in \inter{\phi_0 \with \phi_1 }{}.
$$
By the definition of \( \inter{\phi_0 \with \phi_1 }{} \) this means that there exists \( i \in \{ 0,1\} \) and $e \in \inter{\phi_i^{\perp}}{}={\inter{\phi_i}{}}^\perp$ such that the  following derivation is not in $\Rset{}$:
$$
d^*_1=
\vlderivation{
\vliin{\cut}{}{\Theta_0, \ldots, \Theta_{n-1}, \Sigma}
{
\vliiiiq{\cut}{}{\Theta_0, \ldots, \Theta_{n-1}, \phi_0 \with \phi_1}{
\vliin{\with}{}{\Delta, \phi_0 \with \phi_1}{\vldr{d_0}{\Delta, \phi_0}}{\vldr{d_1}{\Delta, \phi_1}}
}
{
\vldr{e_0}{\theta^\perp_0, \Theta_0}
}
{\vlhy{\ldots}}
{
\vldr{e_{n-1}}{\theta^\perp_{n-1}, \Theta_{n-1}}
}
}
{\vlin{\oplus}{}{ \Sigma, \phi_0^\perp \oplus \phi_1^\perp}{\vldr{e}{\Sigma, \phi_i^\perp}}  }
}
$$
for some  $\Theta_1, \ldots, \Theta_{n-1}, \Sigma$. Applying a series of cut elimination steps yields the following derivation:
$$
d^*_2=
\vlderivation{
\vliin{\cut}{}{\Theta_0, \ldots, \Theta_{n-1},  \Sigma}
{
\vliiiiq{\cut}{}{\Theta_0, \ldots, \Theta_{n-1}, \phi_i}{
\vldr{d_i}{\Delta, \phi_i}
}
{
\vldr{e_0}{\theta^\perp_0, \Theta_0}
}
{\vlhy{\ldots}}
{
\vldr{e_{n-1}}{\theta^\perp_{n-1}, \Theta_{n-1}}
}
}
{\vldr{e}{\Sigma, \phi_i^\perp}}
}
$$
 By~\Cref{rem:expansion} we infer that the $d^*_2$ is not in $\Cnor{}$. Therefore, \( d_i \) is not $\Cnor{}$-reducible. Concerning, $\Rset=\GDer{}$,  we have two cases. If $d_{1-i}$ is not $\GDer{}$-reducible, then we are done. Otherwise, it is  $\GDer{}$-reducible, and by~\Cref{prop:reducible-implies-in-interpretation} and~\Cref{prop:candidates-subset-R}, $d_{1-i}\in \GDer{}$. Now, let \( X \) be an IC set in $d^*_1$ that does not bear any good external thread, \purple{and let $Y$ be obtained from $X$ by discarding the branches that traverse $d_{1-i}$. Since $d_{1-i}\in \GDer{}$ then $Y \neq \emptyset$. Moreover, $Y$ is clearly an IC set. But since $X$ does not bear a good external thread, then neither does $Y$. With minor adaptations, $Y$ can be turned into an IC set of $d^*_2$, which implies  $d^*_2\not \in \GDer{}$.}  Therefore, \( d_i \) is not $\GDer{}{}$-reducible.

\item $\rrule= {\perp}$. So $k=0$ and \( \theta_n= \bot \). Suppose $d$ is not reducible,  so there exist derivations $e_j\in \inter{\theta_j^\perp}{}=\inter{\theta_j}\perp$ for \( j< n \) such that $$ \cutform{d}{e_0, \ldots, e_{n-1}}{}\not \in \inter{\perp}{}= \left\{ \vlinf{\unit}{}{ \unit}{} \right\}^{\perp}. $$ In other words, the  following derivation is not in $\Rset{}$:
$$
\vlderivation{
\vliin{\cut}{}{\Theta_0, \ldots, \Theta_{n-1}}
{
\vliiiiq{\cut}{}{\Theta_0, \ldots, \Theta_{n-1}, \perp}
{\vlin{\perp}{}{\Gamma_0, \perp}{\vldr{d_0}{\Gamma_0}}}
{\vldr{ e_0}{\theta_0^\perp, \Theta_{n-1}}}
{\vlhy{\ldots}}
{\vldr{ e_{n-1}}{\theta_{n-1}^\perp, \Theta_{n-1}}}
}
{
\vlin{\unit}{}{\unit}{\vlhy{}}
}
}
$$
for appropriate $\Theta_0, \ldots, \Theta_{n-1}$. By applying a series of cut elimination steps, \Cref{rem:expansion} and inspecting  the definition of $\GDer{}$, we have that \( \cutform{d_0}{e_0, \dotsc, e_{n-1}}{} \not\in \Rset{} \). Therefore, $d_0$ is not reducible.

\item $\rrule\in \{\top, \unit, \ex, \otimes, \oplus, \mu \}$. The case of \( \top \) and \( \unit \) hold vacuously. Exchange is straightforward and the cases \( \otimes \) and \( \oplus \) follow immediately from the definition of the interpretation. The case of \( \mu \) is similar but via \Cref{prop:characterisation-of-interpretation-of-mu}.
\item
$\rrule= \nu$. So $k=0$, $\theta_{n}= \nu X\phi$ for some \( X \) and \( \phi \), and $\Gamma_0 = \Delta , \phi(\nu X\phi)$. Suppose that $d$ is not reducible. Let $e_j\in \inter{\theta_j^\perp}{}={\inter{\theta_j}{}}^\perp$ for \( i < n \) be such that $$ \cutform{d}{e_0, \ldots, e_{n-1}}{}\not \in \inter{\nu X\phi}{}. $$
\Cref{prop:characterisation-of-interpretation-of-mu} established
$$
\inter {\nu X \phi}{}_{\GDer{}} = 
\left\{  \begin{array}{c|c}
			\vlderivation{
				\vlin{\mu}{}{\Sigma, \mu X \phi^\perp}{\vldr{e}{\Sigma, \phi^\perp(\mu X \phi^\perp)} }    
			}
			& e \in \inter{\phi^\perp(\mu X \phi^\perp)}{}
\end{array}  \right\}^{\perp}
$$
so there exists some derivation $e \in \inter {\phi^\perp(\mu X \phi^\perp)}{}$ such that the following derivation is not in $\Rset{}$:
\begin{gather}\label{eqn:nu-reduc-approx}
\vlderivation{
\vliin{\cut}{}{\Theta_0, \ldots, \Theta_{n-1}, \Sigma}
{
\vliiiiq{\cut}{}{\Theta_0, \ldots, \Theta_{n-1}, \nu X.\phi}{
\vlin{\nu}{}{\Delta, \nu X \phi}{\vldr{d_0}{\Delta, \phi(\nu X\phi)}}
}
{
\vldr{e_0}{\theta^\perp_0, \Theta_0}
}
{\vlhy{\ldots}}
{
\vldr{e_{n-1}}{\theta^\perp_{n-1}, \Theta_{n-1}}
}
}
{\vlin{\mu}{}{\Sigma , \mu X \phi^\perp}{\vldr{e}{\Sigma, \phi^\perp(\nu X \phi^\perp)}}  }
}
\end{gather}
for suitable $\Theta_1, \ldots, \Theta_{n-1}, \Sigma$. 
We apply a sequence of cut elimination steps  we obtain:
$$
\vlderivation{
\vliin{\cut}{}{\Theta_0, \ldots, \Theta_{n-1}, \Sigma}
{
\vliiiiq{\cut}{}{\Theta_0, \ldots, \Theta_{n-1}, \phi(\nu X\phi)}{
\vldr{d_0}{\Delta, \phi(\nu X\phi)}
}
{
\vldr{e_0}{\theta^\perp_0, \Theta_0}
}
{\vlhy{\ldots}}
{
\vldr{e_{n-1}}{\theta^\perp_{n-1}, \Theta_{n-1}}
}
}
{\vldr{e}{\Sigma, \phi^\perp(\mu X \phi^\perp)}}
}
$$
by~\Cref{rem:expansion} and inspecting the definition of $\GDer{}$ we conclude that the above derivation is not in $\Rset$. Therefore, 
 $d_0$ is not reducible. 

The second claim is argued as follows. Let $\nu Y_0 \psi_0 > \dotsm > \nu Y_p \psi_p$ be the $\nu$-subformulas of \( \nu X \phi \) ordered by the priority ordering (see \cref{defn:priority}). Let \( j \) be such that $\nu X \phi = \nu X_j \psi_j$ and $\vec \alpha_n = \vec{\gamma}\alpha \vec{\delta}$ where \( \lvert \vec \gamma \rvert = j \). Suppose $d$ is not $(\vec \alpha_0, \ldots,\vec \alpha_n)$-reducible. By \cref{reduc-approx-charac} there exists derivations \( e_j \in \inter{\theta_j^{\vec\alpha_j}}{\perp} \) such that
$$\cutform{d}{e_0, \ldots, e_{n-1}}{}\not \in \inter{(\nu X\phi)^{\vec \alpha_n}}{}.
$$
As \( \inter{(\nu X\phi)^{\vec \alpha_n}}{} = \inter{\nu^\alpha X\, \phi^{\vec \gamma}}{} \), there exists \( \beta < \alpha \) and \( e \in \inter{\phi(\nu X\phi)^{\vec \gamma \beta}}{} \) such that the derivation presented above in \eqref{eqn:nu-reduc-approx} is not in \( \Rset{} \). The same sequence of cut elimination steps shows that 
\( \cutform{d}{e_0, \dotsc, e_{n-1}, e}{} \not\in \Rset{} \), meaning that \( d_0 \) is not \( (\vec \beta_0, \dotsc, \vec \beta_{n}) \)-reducible where \( \vec\beta_i = \vec \alpha_i \) for \( i < n \) and \( \vec\beta_n = \vec \gamma \beta \).
     \end{itemize}
 \end{proof}

\begin{thm}[Soundness]\label{lem:progressing-implies-reducible}
     Let $\Rset\in\{\GDer{}, \Cnor{}\}$, and let $d$ be a progressing derivation with conclusion $\Gamma$. Then $d\in \inter{\Gamma}{}$.
\end{thm}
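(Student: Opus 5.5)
We argue by contradiction, following the outline given at the start of this section. By \Cref{prop:reducible-implies-in-interpretation} it suffices to show that $d$ is $\Rset$-reducible. Suppose not. We build, by recursion on $i$, a branch $\branch b$ of $d$ together with ordinal assignments witnessing non-reducibility at every node of $\branch b$.

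\textbf{Building the branch.} Start at the root. By \Cref{prop:reducibility-approximations} choose assignments $\vec\alpha^0_0,\dots,\vec\alpha^0_n$ such that $d$ is not $(\vec\alpha^0_0,\dots,\vec\alpha^0_n)$-reducible, and take them minimal in the sense of clause~(2) there. Given a node $\branch b(i)$ where $d_{\branch b(i)}$ is not $(\vec\alpha^i_0,\dots)$-reducible, apply \Cref{lem:reflecting-non-totality} to the last rule. This yields a premise $\branch b(i+1)$ whose subderivation is not reducible, together with assignments $\vec\alpha^{i+1}_l$. These assignments satisfy $\vec\alpha^{i+1}_l\le\vec\alpha^i_j$ whenever the $l$-th formula is an ancestor of the $j$-th and the two are $\fleq$-equivalent, with strict decrease when the rule is $\nu$ and the $j$-th formula is principal. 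Rules with no premises ($\unit$, $\top$) cannot occur, because those cases hold vacuously. Hence the construction never stops and produces an infinite branch $\branch b$.

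At cuts the cut formula is fresh, so it receives an assignment from scratch via \Cref{prop:reducibility-approximations}. This causes no problem, since monotonicity is only required along ancestry. One subtlety concerns context formulas in non-$\nu$ rules: the lemma gives $\vec\beta_l\le\vec\alpha_j$, and we may re-minimise at each step, since minimality never increases an assignment. We also need that when a $\mu$ or other non-$\nu$ logical rule acts on a formula that stays $\fleq$-equivalent, the prefix of the assignment on the relevant $\nu$-subformulas does not increase. This is the content of the first bullet of \Cref{lem:reflecting-non-totality}, read with the convention that assignments index $\nu$-formulas of the common $\fleq$-class.

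\textbf{Deriving the contradiction.} Since $d$ is progressing, $\branch b$ bears a good thread $\tau$ through a trace $t$, and $t$ is infinitely often principal by \Cref{lem:non-steble-traces-implies-bearing-thread}. By \Cref{lem:good-formula} there is a minimal-priority formula $\nu X\psi$ that occurs infinitely often in $\tau$. From some point on, every formula of $t$ lies in the $\fleq$-class of $\nu X\psi$ and contains $\nu X\psi$ as a subformula. Follow the assignments attached to $t$ from that point. They form a non-increasing sequence in the lexicographic order. Whenever $t$ is principal for a $\nu$-rule unfolding $\nu X\psi$, or unfolding a higher-priority $\nu$-formula, the relevant coordinate strictly decreases.

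The delicate point is unfolding a lower-priority $\nu$-formula $\nu Y\chi$ of the same class. Its coordinate decreases, but the coordinates after it may be reset; this is harmless precisely because the order is lexicographic with higher priority first. Unfolding a $\mu$-formula of the class must not increase any $\nu$-coordinate, which the first bullet of \Cref{lem:reflecting-non-totality} provides. Since $\nu X\psi$ is unfolded infinitely often, we obtain an infinite strictly decreasing sequence of ordinal sequences of bounded length. This contradicts well-foundedness of the lexicographic order.

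\textbf{Main obstacle.} I expect the hardest step to be the bookkeeping of assignments along the trace across non-$\nu$ rules and at $\mu$-unfoldings: checking that the formula $\phi^{\vec\alpha}$ makes sense after substitution, and that the coordinate for $\nu X\psi$ really sits at a fixed position of the assignment throughout the eventual segment of $t$. I would handle this by fixing the enumeration of $\nu$-formulas in the $\fleq$-class of $\nu X\psi$ once and for all, and reading every assignment on $t$ with respect to that enumeration.
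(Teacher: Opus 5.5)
Your proposal is correct and follows essentially the same route as the paper. You reduce to reducibility via the characterisation and build a branch carrying minimal non-reducibility assignments by repeatedly applying the local soundness lemma; a good thread on that branch eventually stays in one $\fleq$-class, giving a lexicographically non-increasing sequence of bounded-length assignments that strictly decreases at every $\nu$-unfolding, which contradicts well-foundedness.

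The only slip is terminological. In the paper's convention, the formula supplied by the fact on threads is the shortest, hence highest-priority, fixed point rather than a minimal-priority one. This does not affect the argument, because you show that every $\nu$-unfolding within the class strictly decreases the whole assignment.
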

\begin{proof}
By~\Cref{prop:reducible-implies-in-interpretation}, it suffices to show that $d$ is  reducible. So, suppose towards contradiction that this is not the case. Fix ordinals assignments such that $d$ is not $(\vec \alpha_0, \ldots, \vec \alpha_n)$-reducible. Repeatedly applying \cref{lem:reflecting-non-totality} induces an infinite branch $\branch b$ of $d$ and a family of ordinal assignments $(\vec \alpha^i_0, \ldots, \vec \alpha^i_{n_i})_{i \in \omega}$ such that $d_{\branch{b}(i)}$ is not $(\vec \alpha^i_{0}, \ldots, \vec \alpha^i_{n_i})$-reducible. Without loss of generality we may assume that $(\vec \alpha^i_{0}, \ldots, \vec \alpha^i_{n_i})$ is the least such ordinal sequence for each \( i \) in the sense of \cref{prop:reducibility-approximations}. Since $d$ is progressing, $\branch b$ bears a good thread $\tau$ with, say, trace $t=(u_i, h_i, \phi_i)_{i}$. As the formulas enumerated by a thread are finite in number, there is some \( k\) such that \(\phi_i \fleq \phi_j\) for all \(i,j > k\) and, in particular, there is a finite bound the length of all sequences \( ( \vec \alpha_{h_i}^i )_i \). By \cref{lem:reflecting-non-totality}, $ \vec \alpha^{i+1}_{h_{i+1}}\leq \vec \alpha^{i}_{h_i} $ for all \( i \) and $ \vec \alpha^{i+1}_{h_{i+1}} < \vec \alpha^{i}_{h_i} $ infinitely often, contradicting the well-foundedness of ordinals.
\end{proof}

Using \Cref{prop:candidates-subset-R}, \Cref{lem:progressing-implies-reducible} implies immediately  that progressing proofs are also externally progressing and $\omega$-normalising. This allows us to establish our main result:

\begin{cor}[Cut elimination]\label{thm:cut-elimination}
    If $d \in \PDer{}$ then $d\in \Cnor{}$.
\end{cor}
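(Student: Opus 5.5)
The plan is to derive the corollary directly from \Cref{lem:progressing-implies-reducible}, instantiated with $\Rset = \Cnor{}$. Let $d \in \PDer{}$ have conclusion $\Gamma = \phi_0, \ldots, \phi_n$. By \Cref{lem:progressing-implies-reducible}, $d \in \inter{\Gamma}{}_{\Cnor{}} = \inter{\phi_0}{}_{\Cnor{}} \cap \dots \cap \inter{\phi_n}{}_{\Cnor{}}$. Since the conclusion of a derivation is non-empty except in degenerate cases, it suffices to observe that $d \in \inter{\phi_0}{}_{\Cnor{}}$. By \Cref{prop:candidates-subset-R} we have $\inter{\phi_0}{}_{\Cnor{}} \subseteq (\Cnor{})_{\phi_0}$, so $d \in \Cnor{}$, i.e.\ $d$ is $\omega$-normalisable.

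Two points need attention.
\begin{enumerate}
\item \emph{Empty conclusion.} The soundness theorem formally yields membership in $\inter{\Gamma}{}$, and for an empty sequent this intersection is vacuous.

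\item \emph{The inclusion $\inter{\phi}{} \subseteq \Rset_\phi$.} This holds because every interpretation is of the form $\mathcal{X}^\perp$ (or an intersection of such sets), and orthogonals are by definition taken inside $\Rset_{\phi}$. I would briefly recall this, since it is what makes \Cref{prop:candidates-subset-R} meaningful.
\end{enumerate}

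For the empty-conclusion case, I would argue via reducibility directly. \Cref{prop:reducible-implies-in-interpretation} is only used one way inside the proof of \Cref{lem:progressing-implies-reducible}: that proof actually establishes that $d$ is $\Cnor{}$-reducible. Taking the empty family of cut partners, $\cutform{d}{}{} = d \in \Cnor{}$ follows from the definition of reducibility.

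Alternatively, I would handle the empty case by cutting with $\id$ on some formula. This is not needed, however, since reducibility with zero cut partners is literally membership in $\Rset$.

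The main obstacle is therefore not in the corollary itself but in ensuring that \Cref{lem:progressing-implies-reducible} genuinely applies with $\Rset = \Cnor{}$. In particular, the case $\rrule = \cut$ of \Cref{lem:reflecting-non-totality} uses \Cref{rem:expansion} and cut permutation for $\Cnor{}$, and that permutation invariance follows from \Cref{prop:compression}. Accepting those earlier results, the corollary is immediate.
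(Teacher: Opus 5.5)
Your proposal is correct and follows the paper's own argument: instantiate \Cref{lem:progressing-implies-reducible} with $\Rset = \Cnor{}$ to obtain $d \in \inter{\Gamma}{}_{\Cnor{}}$, then conclude $d \in \Cnor{}$ by \Cref{prop:candidates-subset-R}. Your handling of the empty-conclusion case is sound extra care that the paper leaves implicit: you note that the soundness proof actually establishes $\Cnor{}$-reducibility, which for an empty family of cut partners is just membership in $\Cnor{}$.
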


As with the standard reducibility candidates methods for inductive systems, the cut elimination argument we above does not exhibit a concrete infinitary procedure for rewriting a derivation into a cut-free one. Also, it does not suggest any  approach for proving directly preservation of progressivity, which  is managed  implicitly by the notion of  $\Cnor{}$-reducibility.  In the next section we will develop  a second, more insightful cut elimination argument for~\Cref{thm:cut-elimination},  based on    $\GDer{}$-reducibility.

\section{Second cut elimination argument}
\label{sec:second-cut-elimination}
  \Cref{lem:progressing-implies-reducible} shows that progressing derivations are also externally progressing. Our second cut elimination argument for~\Cref{thm:cut-elimination} will rely on this result, where the external progressivity condition allows  a   straightforward proof of both  the existence of the limit cut-free proof (\textit{productivity}) and the sought-after preservation of  progressivity. We will formulate this result within a standard cut elimination procedure, where cuts  are stepwise pushed upward by the cut elimination rules, giving priority to the bottommost ones. To avoid technicalities related to cut commutations, the procedure will actually manipulate so-called \emph{multicut rules}, a generalisation of the cut rule representing series of consecutive cuts  (see, e.g., \cite{Baelde2016InfinitaryPT}).

\subsection{Multicut $\omega$-reduction sequences and paths}

As mentioned earlier, a multicut can be seen as a macro rule for a tree of consecutive cuts. The following definition formalises this idea.

\begin{defn}[Multicuts] 
 Let $d$ be a derivation with conclusion $\Gamma$, and  $u_1, \ldots, u_n$ with $n>1$ be nodes of $d$. 
 A \emph{multicut in \( d \)} is a tuple $\langle v; u_1, \ldots, u_n\rangle$ where $d_v$ is obtained from $d_{u_1}, \ldots, d_{u_n}$ by applying a series of consecutive cut rules. When $n=2$  the multicut is called  \emph{cut}. Multicuts  will be denoted $\mathsf{m}$. We call $u_1, \ldots, u_n$ (resp., $v$) the \emph{premises} (resp., \emph{conclusion}) of the multicut. 
A \emph{cut pair} of a multicut is a pair $(\phi, \phi^\perp)$ of dual formulas  occurring among the premises of the multicut such that, by permuting some cuts, a new multicut can be obtained that includes a cut rule $\rrule$ with cut formulas  $\phi$ and $\phi^\perp$. In this case, we call $\phi$ and $\phi^\perp$ \emph{cut formulas of the multicut}.
 
 The \emph{tree of $\mathsf{m} = \langle v; u_1, \ldots, u_n\rangle$ (in $d$)} is the set $\treem{\mathsf{m}}{d}=\{w \in d \mid \exists i \in \{1, \ldots, n\}(v \leq w \leq u_i)\}$, and the \emph{initial segment of $\mathsf{m}$ (in $d$)} is the set  $\initm{\mathsf{m}}{d}=\{w \in d\mid w < v  \}$.  
\end{defn}

\begin{exmp}
    Consider the following derivation $d$:
\[
\vlderivation{
\vlin{}{\parr}{\Gamma, \Sigma, \Gamma', \Sigma', \Theta, \alpha\parr \beta}
{
\vliin{}{\cut}{\Gamma, \Sigma, \Gamma', \Sigma', \Theta, \alpha, \beta}
{
\vliin{}{\cut}{\Gamma, \Sigma, \Gamma', \Sigma', \chi}
    {\vliin{}{\cut}{\Gamma, \Sigma, \psi}
    {\vldr{d_1}{\Gamma, \phi, \psi}}
    {\vldr{d_2}{\Delta, \phi^\perp}}}
    {\vliin{}{\cut}{\Gamma', \Sigma', \psi^\perp, \chi}
    {\vldr{d_3}{\Gamma', \psi^\perp, \delta}}
    {\vldr{d_4}{\Sigma', \chi, \delta^\perp}}}
}
{\vldr{d_5}{\Theta, \chi^\perp,\alpha, \beta}}
}
}
\]  
Then, we have $\mathsf{m}=\langle \langle 0 \rangle;   \langle 0,0,0,0\rangle, \langle 0, 0,0,1\rangle, \langle 0,0,1,0\rangle \langle 0,0,1,1\rangle, \langle 0,1\rangle\rangle$ and 
 $\mathsf{m}'=\langle \langle 0 \rangle;   \langle 0,0\rangle, \langle 0,1\rangle\rangle$ 
are  multicuts. Their trees are the following sets 
\[
\begin{array}{rcl}
  \treem{\mathsf{m}}{d}   & = &\begin{Bmatrix}
    \langle 0\rangle, \langle 0,0 \rangle,\langle 0,1 \rangle ,\langle 0,0,0 \rangle,\langle 0,0,1 \rangle ,
    \\ 
    \langle 0,0,0,0\rangle, \langle 0,0,0,1\rangle, \langle 0,0,1,0\rangle \langle 0,0,1,1\rangle
\end{Bmatrix} \\ \\ 
  \treem{\mathsf{m}'}{d}   & = &\{\langle 0 \rangle,   \langle 0,0\rangle, \langle 0,1\rangle \}
\end{array}
\]
and so $\treem{\mathsf{m}'}{d}\subseteq \treem{\mathsf{m}}{d}$. 
Their initial segments in $d$ are $\initm{\mathsf{m}}{d}=\initm{\mathsf{m}'}{d}=\{\langle  \rangle\}$. The cut pairs of $\mathsf{m}$ are $(\phi, \phi^\perp)$, $(\delta, \delta^\perp)$, $(\psi, \psi^\perp)$, and $(\chi, \chi^\perp)$, while $\mathsf{m}'$  only has $(\chi, \chi^\perp)$ as cut pair.
\end{exmp}

We can now introduce a multicut-based counterpart of $\omega$-reduction sequences, called \emph{multicut $\omega$-reduction sequences}. In particular,  we will focus on \emph{multicut $\omega$-reduction paths}, which track the  evolution of a specific multicut along those sequences. The latter notion requires the following preliminary definition.

\begin{defn}[Expansion vs reduction]
Let $d$ be a derivation, and let $\mathsf{m}$ be a multicut of $d$. We say that  $\mathsf{m}$ is \emph{expanded} to $\mathsf{m}'$ (and that $\mathsf{m}'$ is the \emph{expansion of $\mathsf{m}$}), written $\mathsf{m} < \mathsf{m}'$, if $ \mathsf{m}'$ is a multicut of $d$
 such that $\treem{\mathsf{m}}{d}\subsetneq \treem{\mathsf{m}'}{d}$. We write $\mathsf{m}\leq \mathsf{m}'$ if $\mathsf{m}< \mathsf{m}'$ or $ \mathsf{m}= \mathsf{m}'$.

 We also say that $\mathsf{m}$   \emph{reduces to $\mathsf{m}'$ (along $d \cutelims e$)}, written  $\mathsf{m}\mapsto \mathsf{m}'$, if $d \cutelims e$ rewrites $\mathsf{m}$ to a multicut $\mathsf{m}'$ of $e$, and $\mathsf{m}'$ is not a mere permutation of the cuts of $\mathsf{m}$. In this case we call $\mathsf{m}$ a \emph{redex}, and  $\mathsf{m}'$ a \emph{reduct of $\mathsf{m}$}.

\end{defn}

Note that there might be many (possibly zero) reducts of a multicut.

\begin{defn}[Multicut $\omega$-reduction sequence]
A \emph{multicut $\omega$-reduction sequence (from $d$)} is an $\omega$-indexed sequence of pairs $\mathfrak m=(d_i, \mathcal{M}_i)_{i \in \omega}$  where:
\begin{itemize}
\item $\reds s:= (d_i)_{i \in \omega}$ is a  $\omega$-reduction sequence
\item each $\mathcal{M}_i$ is a   (finite) set of multicuts of $d_i$ such that:
\begin{itemize}
    \item $\mathcal{M}_0$ is set of all cuts $\langle v; u_1, u_2 \rangle$ of $d_0$ such that $v$ has smallest length
    \item  $\mathcal{M}_{i+1}$ is obtained from $\mathcal{M}_i$ by replacing a multicut with either its expansion, if $d_i=d_{i+1}$, or its reducts along $d_i \cutelims d_{i+1}$ otherwise.
    \item $\mathcal{M}_i= \emptyset$ implies $d_i$ cut-free.
\end{itemize}
\item $\mathfrak s$ only applies cut elimination rules to multicuts in $(\mathcal{M}_i)_{i \in \omega}$.
 \end{itemize}
 We say that $\mathfrak m$ is \emph{terminating} if $\mathcal{M}_i= \emptyset$ for some $i \in \omega$. 
\end{defn}

\purple{Intuitively, terminating  multicut $\omega$-reduction sequences $\mathfrak m$  correspond to  \emph{finite} cut elimination processes. Notice that,  if its $\omega$-reduction sequence if non-strict, then $\mathfrak m$ might represent a cut elimination procedure that keeps ``expanding'' multicuts, i.e.,  merging more and more cuts into a single multicut. An example is $\mathfrak m=(d_i, \mathcal{M}_i)_{i \in \omega}$ such that, for all $i \in \omega$,  $d_i=d_{i+1}$ and  $\mathcal{M}_i=\{\mathsf{m}_i\}$  with  $\mathsf{m}_i < \mathsf{m}_{i+1}$. }

\begin{defn}[Multicut $\omega$-reduction path]
    A  \emph{multicut $\omega$-reduction path} is a family $\mathfrak p=(d_i, \mathsf{m}_i)_{i\in \omega}$ where, for all $i \in \omega$:
    \begin{itemize}
        \item $\reds s:= (d_i)_{i \in \omega}$ is a  $\omega$-reduction sequence
    \item $\mathsf{m}_i$ is a multicut in $d_i$ and   either $\mathsf{m}_i\leq  \mathsf{m}_{i+1}$, or  $\mathsf{m}_i\mapsto \mathsf{m}_{i+1}$.
\end{itemize}
We say that $\mathfrak p$ is  \emph{proper} if $\mathsf{m}_i\neq \mathsf{m}_{i+1}$ for all infinitely many $i \in \omega$. Finally,  given a multicut $\omega$-reduction sequence $\mathfrak m=(d_i, \mathcal{M}_i)_{i\in \omega}$, we write $\mathfrak p \in \mathfrak m $ if $\mathsf{m}_i\in \mathcal{M}_i$ for all $i \in \omega$.
\end{defn}

\begin{rem}\label{rem:path-from-sequences}
    By K\"{o}nig's lemma, for every non-terminating multicut $\omega$-reduction sequence  $(d_i, \mathcal{M}_i)_{i\in \omega}$ there is a proper multicut $\omega$-reduction path $\mathfrak p \in \mathfrak m$.
\end{rem}

We will work with so-called \emph{fair}  multicut $\omega$-reduction sequences, where every multicut is eventually either expanded or reduced.

\begin{defn}[Fairness]
    We say that a multicut $\omega$-reduction sequence $\mathfrak m=(d_i, \mathcal{M}_i)_{i \in \omega}$ is \emph{fair} if every $\mathfrak p \in \mathfrak m$ is proper.
\end{defn}

\subsection{IC sets from multicut $\omega$-reduction paths}

A proper multicut reduction path from  a  (possibly non-progressing) derivation $d$ can  be seen as a pointer machine that visits a subtree of $d$, called \emph{covering} of $d$, written $\mathcal{C}_{\mathfrak p}$. In this subsection we show that the collection of branches of this subtree forms an internally closed set.

\begin{defn}[Frontier and covering]\label{defn:covering}
Let $\mathfrak p = (d_i, \mathsf{m}_i)_{i\in \omega}$ be a  proper multicut $\omega$-reduction path from $d$. The \emph{$i$-frontier} of $\mathfrak p$, written $\mathcal{F}_{\mathfrak p}^i$, is a tuple of nodes of $d_0$ defined by induction on $i$ as follows:
\begin{itemize}
    \item $\mathcal{F}_{\mathfrak p}^0\dfn \langle u_1, \ldots, u_n \rangle$, where $ u_1, \ldots, u_n$ are the premises of $\mathsf{m}_0$.
    \item Suppose $\mathcal{F}_{\mathfrak p}^i=\langle w_1, \ldots w_n \rangle$. We have two cases:
    \begin{enumerate}
        \item  $\mathsf{m}_{i}\mapsto \mathfrak {m}_{i+1}$, with  $\mathsf{m}_i=\langle v ; u_1, \ldots, u_n \rangle$. Without loss of generality, we can reduce $d_i \cutelims d_{i+1}$ to the following cases:
        \begin{itemize}
            \item $d_i\cutelim d_{i+1}$ is one step of the cut elimination rule commuting two cuts, which permutes the $j$-th and the $j+1$-th premises of $\mathsf{m}_i$. Set  $$\mathcal{F}_{\mathfrak p}^{i+1}\dfn \langle w_1, \ldots, w_{j-1}, w_{j+1}, w_j, w_{j+2}, \ldots w_n \rangle. $$
            \item $d_i\cutelim d_{i+1}$ applies a critical cut elimination step to a cut in $\mathsf{m}_i$ with  premises $u_j$ and $u_{j+1}$. Let $\mathsf{m}_{i+1}=\langle v' ; u'_1, \ldots, u'_m\rangle$. Set $\mathcal{F}_{\mathfrak p}^{i+1}\dfn \langle w'_1, \ldots, w'_{m}\rangle$ where:
            $$
                 w'_t \dfn \begin{cases}
                     w_t, &\text{if }u'_t=u_t,\\
                     w_t0, &\text{if }u'_t=u_t0,\\
                      w_t1, &\text{if }u'_t=u_t1.
                 \end{cases}
           $$
           \item $d_i \cutelim d_{i+1}$ applies a non-critical  reduction to the inference rule with conclusion $u_j$ so that 
           $\initm{d_{i}}{\mathsf{m}_{i}}=\{\langle \rangle, \langle b_1\rangle, \ldots, \langle b_1, \ldots, b_k \rangle\}$ and $\initm{d_{i+1}}{\mathsf{m}_{i+1}}=\{\langle \rangle, \langle b_1\rangle, \ldots, \langle b_1, \ldots, b_{k+1} \rangle\}$.  Set $\mathcal{F}_{\mathfrak p}^{i+1}\dfn \langle w'_1, \ldots, w'_{n}\rangle$ where:
           $$
                 w'_t \dfn \begin{cases}
                     w_t, &\text{if }t \neq j,\\
                     w_t0, &\text{if }t=j \text{ and }b_{k+1}=0,\\
                      w_t1, &\text{if }t=j \text{ and }b_{k+1}=1.
                 \end{cases}
$$
        \end{itemize}
        \item $\mathsf{m}_i \leq \mathsf{m}_{i+1}$. W.l.o.g. we can assume that $\mathsf{m}_i < \mathsf{m}_{i+1}$ and $\mathsf{m}_{i+1}$ is as follows:
         $$
 \small
 \vlderivation{
\vliiiq{}{\mathsf{m}_{i+1}}{\Gamma}{\vldr{d_1}{\Gamma_1}}{\vlhy{\cdots}}{ \vldr{d_{n}}{\Gamma_{n}}}
} 
= 
\vlderivation{
\vliiiiiq{}{\mathsf{m}_i}{\Gamma}{\vldr{d_1}{\Gamma_1}}{\vlhy{\cdots}}{\vliin{\cut}{\rrule}{\Gamma'_j}{\vldr{d_j}{\Gamma_j}}{\vldr{d_{j+1}}{\Gamma_{j+1}}}}{\vlhy{\cdots}}{ \vldr{d_{n}}{\Gamma_{n}}}
} 
$$
We set $\mathcal{F}_{\mathfrak p}^{i+1}\dfn \langle w_1, \ldots, w_{j-1}w_j0,w_j1, w_{j+1}, \ldots, w_n\rangle$.
 \end{enumerate}
\end{itemize}
The  \emph{covering} of $\mathfrak p$, written $\mathcal{C}_{\mathfrak p}$, is the subtree of $d$ obtained by the prefix-closure of the set of nodes appearing in $(\mathcal{F}_{\mathfrak p}^{i})_{i \in \omega}$.
\end{defn}

\begin{lem}\label{lem:traversed-implies-ICset} 
Let $d$ be a (possibly non-progressing) derivation, and let $\mathfrak p$ be a proper multicut $\omega$-reduction path from $d$. Then, $\Branch{\mathcal{C}_{\mathfrak p}}$  is internally closed.
\end{lem}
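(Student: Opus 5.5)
The plan is to check the three requirements of an IC set for $\Branch{\mathcal{C}_{\mathfrak p}}$ directly: non-emptiness, closedness, and the coherence condition for internal threads. The coherence condition carries all the content. Throughout I would rely on a bookkeeping invariant relating the frontier $\mathcal{F}_{\mathfrak p}^i=\langle w_1,\ldots,w_n\rangle$ to the multicut $\mathsf{m}_i=\langle v;u_1,\ldots,u_n\rangle$ of $d_i$. The invariant is that, for each $t$, the sequent at $u_t$ in $d_i$ coincides with the sequent at $w_t$ in $d$, and the subderivation $(d_i)_{u_t}$ is $d_{w_t}$. It also says that every cut pair of $\mathsf{m}_i$ sits at two frontier nodes $w_s,w_t$. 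Moreover, the two formula occurrences of that pair, followed downward in $d$ through ancestry, are the endpoints of two finite traces. These traces start at the two cut formula occurrences of one cut of $d$ at a node $z$ with $z\le w_s$ and $z\le w_t$, lie in different premises of that cut, and run through dual formulas. The invariant is proved by induction on $i$, following the case distinction in \Cref{defn:covering}. Expansion adds a cut whose two cut formula occurrences are fresh, and both traces have length one. A critical step moves both members of a pair to the successor nodes $w_s b$, $w_t b'$, extending both traces by one principal step. A non-critical step extends a single frontier node along a context formula, so it extends only traces through that formula. A permutation merely reorders the tuple.

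Non-emptiness and closedness are routine. $\mathcal{C}_{\mathfrak p}$ is prefix-closed by definition, so a branch of $d$ all of whose initial segments lie in $\mathcal{C}_{\mathfrak p}$ is a branch of $\mathcal{C}_{\mathfrak p}$; hence $\Branch{\mathcal{C}_{\mathfrak p}}$ is closed in $\Branch d$. Since $\mathfrak p$ is proper, infinitely often some frontier node is strictly extended. So $\mathcal{C}_{\mathfrak p}$ is infinite, and being finitely branching it has a branch by K\"onig's lemma.

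For coherence, fix $\branch b\in\Branch{\mathcal{C}_{\mathfrak p}}$ and an internal thread $\tau$ of $\branch b$, witnessed by a trace $s$ with $s(0)$ the cut formula occurrence at $\branch b(n+1)$, where $\branch b(n)$ is a cut. Because $\branch b$ lies in the covering, the node $\branch b(n+1)$ is eventually a frontier node. From that stage on, the node $\branch b(k)$ is reached for each larger $k$, and each time it is reached it is some frontier node $w_t$. By the invariant, the occurrence $s(k)$ is then either a context formula or one member of a cut pair of $\mathsf{m}_i$. The member is the one traced back to the cut at $\branch b(n)$, and its partner is at some frontier node $w'$. Since $\tau$ is a thread, $s$ is principal in a logical rule infinitely often. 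Along the covering, $\branch b$ can be advanced past a node where $s(k)$ is principal only by a critical step on this very pair, because a non-critical commutation never moves the cut formula's own rule. Each such critical step also advances the partner node by exactly one rule in which the dual occurrence is principal. The partner nodes therefore form an increasing chain $w'_0<w'_1<\cdots$ of nodes of $\mathcal{C}_{\mathfrak p}$, which is unbounded. Their limit is a branch $\branch c\in\Branch{\mathcal{C}_{\mathfrak p}}$ carrying the partner traces, and these glue, again by the invariant, into an infinite trace starting at the other cut formula occurrence at $\branch b(n)$. This trace is the pointwise dual of $s$ up to stuttering, so it bears $\tau^\perp$. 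Then $\branch b\wedge\branch c=\branch b(n)$ is a cut and $\ortcoh{\tau}{\branch b}{\branch c}$.

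I expect the main obstacle to be making the invariant precise and showing that the partner trace is genuinely infinite and is a trace of $\tau^\perp$. The danger is that between two critical steps on the pair the partner side could undergo non-critical steps in which the dual occurrence moves through context positions. I would handle this by noting that such steps keep the dual occurrence at the same formula, so they only insert repetitions into a weak thread and do not change its contraction. The permutation case of \Cref{defn:covering} needs a short check that reordering the tuple preserves the pairing of cut formulas.
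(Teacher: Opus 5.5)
Your proposal is correct and takes essentially the same route as the paper. The paper also follows the cut-pair partner of the internal trace through the multicuts of $\mathfrak p$; it argues that each principal occurrence of that trace can only be consumed by a critical step on that very pair, so the partner is principal infinitely often; and it concludes that the partner occurrences expand $\tau^\perp$ along a branch $\branch c$ of the covering with $\ortcoh{\tau}{\branch b}{\branch c}$. Your explicit frontier/cut-pair invariant simply makes precise what the paper asserts ``by definition''.
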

\begin{proof}
Let $\mathfrak p=(d_i, \mathsf{m}_i)_{i \in \omega}$ be a proper multicut $\omega$-reduction path from $d$. By definition of proper multicut $\omega$-reduction path,  $\mathcal{C}_{\mathfrak p}$ is infinite, and so it has at least one branch by  weak K\"{o}nig's lemma. We now show that $\Branch{\mathcal{C}_{\mathfrak p}}$ is an IC set. Closure under infima and suprema is straightforward, so we only need to prove the coherence condition for IC sets. Let $\branch b \in \Branch{\mathcal{C}_{\mathfrak p}}$ and $\tau$ internal thread of  ${\branch b}$, and let $t=(p_i,\branch{b}(n+i), \phi_i)_{i\in \omega}$ be the trace of $\tau$, such that  $(p_0,\branch{b}(n), \phi_0)$ is a cut formula. By definition, every $ (p_i,\branch{b}(n+i), \phi_i)$ is a cut formula of a multicut in $\mathfrak p$. Let $s=(s(i))_{i \in \lambda}$ with $\lambda \leq \omega$ be the sequence of formulas that appear in a cut pair with  formulas of $t$ in the multicuts of $\mathfrak p$. Notice that $s(i+1)\flleq s(i)$ for all $i \in \lambda$. By definition, since $\tau$ is a thread, and every formula occurrence in $t$ is a cut formula of a multicut in $\mathfrak p$, there are infinitely many $(i, j)$ such that $s(i)$ and $t(j)$ appear in the same  cut pair of a multicut of $\mathfrak p$, and $t(j)$ is a principal formula.
This means that, for infinitely many $i$, there is  $d_i \cutelim^*d_{i+1}$ where a  critical cut elimination step is applied to the cut pair $(s(i), t(j))$, and so $s(i)$ is a principal formula for infinitely many $i$. Therefore, it must be that $ s(i+1)\fll s(i)$ for infinitely many $i \in \lambda$, and so $\lambda=\omega$. This means that $s$ is the expansion of a thread $\sigma$, which implies $\sigma=\tau^\perp$. Moreover,  $\sigma$ must be bourne by a branch  $\branch c$ of $d$. By definition, $\branch c\in \Branch{\mathcal{C}_{\mathfrak p}}$ and   $\ortcoh{\tau}{\branch b }{\branch c }$.
\end{proof}

	Notice that different multicut $\omega$-reduction sequences will induce distinct  collections of  proper multicut $\omega$-reduction paths, which in turn might traverse distinct  IC sets.

\subsection{(Externally) progressing derivations are $\omega$-normalisable}

In this subsection we present the second cut elimination argument of this paper. First, we show how to construct a fair   multicut $\omega$-reduction sequence  $\mathfrak m$ from a given derivation $d$. Second, we show that, if  $d$ is externally progressing,   $\mathfrak m$ defines a depth-increasing  $\omega$-reduction sequence to a cut-free progressing derivation. We conclude by appealing  to \Cref{lem:progressing-implies-reducible}, which implies that progressing derivations are also externally progressing.

\begin{restatable}{lem}{lemmult}\label{lem:construction-of-multicut-reduction}
    Let $d$ be a  derivation with conclusion $\phi_1, \ldots, \phi_k$ (possibly non progressing). Then, there exists a fair  multicut $\omega$-reduction sequence from $\cutform{d}{\id_{\phi_1}, \ldots,  \id_{\phi_k}}{}$.
\end{restatable}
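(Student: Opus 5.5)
We build the sequence $\mathfrak m=(d_i,\mathcal{M}_i)_{i\in\omega}$ by a round-robin (dovetailing) strategy, starting from $d_0 = \cutform{d}{\id_{\phi_1}, \ldots, \id_{\phi_k}}{}$. Each multicut is stored together with a timestamp of when it was last acted upon, and at every stage we act on the multicut in $\mathcal{M}_i$ with the oldest timestamp. We begin with $\mathcal{M}_0$ as prescribed by the definition, namely the bottommost cuts of $d_0$. The cuts against the identity derivations guarantee that this set is non-empty even when $d$ is cut-free, and they keep the conclusion of $d_0$ aligned with that of $d$.

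A single step on a chosen multicut $\mathsf{m}=\langle v;u_1,\ldots,u_n\rangle$ of $d_i$ proceeds by the following case analysis.
\begin{enumerate}
\item If some premise $u_j$ is the conclusion of a $\cut$ rule, we expand $\mathsf{m}$ to the multicut $\mathsf{m}'$ that absorbs that cut and set $d_{i+1}=d_i$.
\item Otherwise every premise ends in a logical rule. If some $u_j$ has a rule whose principal formula is not a cut formula of $\mathsf{m}$, we apply the commuting reductions of \Cref{fig:cut-elim-comm} to push that rule below the multicut. This produces one reduct, or two in the case of $\with$.
\item Otherwise every premise has a cut formula principal. Since cut pairs connect premises in a tree-like fashion (the multicut is acyclic with $n-1$ cut pairs), some cut pair has both formulas principal, and we fire the corresponding critical step of \Cref{fig:cut-elim-prin}, after the cut permutations needed to bring the pair together.
\end{enumerate}
If a reduct ceases to be a multicut because only one premise remains, or because the $\unit/\perp$ step of \Cref{rem:at-least-one-reduct} yields no reduct, we drop it from $\mathcal{M}_{i+1}$. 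The third condition of the definition, that $\mathcal{M}_i=\emptyset$ implies $d_i$ cut-free, must be checked. It holds because every cut of $d_0$ lies either above some multicut of $\mathcal{M}_0$ or inside its tree, and the reduction rules only create cuts inside the trees of reducts.

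Fairness is the main obstacle. Let $\mathfrak p=(d_i,\mathsf{m}_i)\in\mathfrak m$. We must show $\mathsf{m}_i\neq\mathsf{m}_{i+1}$ infinitely often. The round-robin scheduling ensures that every multicut currently in $\mathcal{M}_i$ is acted upon within finitely many stages. This relies on $\mathcal{M}_i$ being finite, together with the fact that each action replaces one element by at most two new ones, each inheriting a fresh timestamp later than the others. Each action changes the multicut along $\mathfrak p$, since an expansion strictly enlarges the tree and a reduction is not a mere permutation. So it suffices that every action does something, i.e. that the case analysis above is exhaustive and never stalls. The delicate point is the third case. I would argue the existence of a critical pair by viewing the multicut as a tree whose vertices are premises and whose edges are cut pairs. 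If every premise's principal formula is a cut formula, each vertex points along one incident edge. A tree has more vertices than edges, so by pigeonhole two vertices point along the same edge, giving a critical pair.

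Finally, the steps must be organised so that each individual stage is a finite $\cutelims$. Cut permutations within a multicut are absorbed by the convention that a multicut is taken up to such permutations, as allowed by the definition of reduction.
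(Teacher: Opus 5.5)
Your construction is essentially the paper's: expand a cut premise; otherwise commute a rule whose principal formula is not a cut formula; otherwise fire a critical pair. Your FIFO scheduling, and the pigeonhole argument that a critical pair exists in the third case, fill in details the paper leaves implicit.

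\textbf{The gap.} It is in your verification of the clause ``$\mathcal{M}_i=\emptyset$ implies $d_i$ cut-free''. You are implicitly using the invariant that every cut of $d_i$ lies in the tree of, or above a premise of, some multicut of $\mathcal{M}_i$. This survives the creation of reducts, but not the \emph{dropping} of a multicut.

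In the no-reduct case the multicut is a single cut between the axiom $\vdash\unit$ and a $\perp$-rule deriving $\vdash\perp,\Gamma_2$ from some $e_2$. After the step, $e_2$ lies outside every multicut, so its cuts are never touched again. If this was the last multicut, you get $\mathcal{M}_{i+1}=\emptyset$ with $d_{i+1}$ not cut-free.

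The same problem arises for any premise that a commutation leaves outside the multicut, for instance a side of a $\otimes$ that carries no cut formula. Note also that $\otimes$, and not only $\with$, can split a multicut into two reducts.

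\textbf{The missing idea.} You need the paper's invariant ($\star$). It is the real reason for cutting against identities, not merely making $\mathcal{M}_0$ non-empty. In the form you need: every formula in the conclusion of a multicut lies in a premise that is a subderivation of some $\id_{\phi_j}$, and is therefore cut-free.
\begin{itemize}
\item It holds in $\cutform{d}{\id_{\phi_1},\ldots,\id_{\phi_k}}{}$, because every conclusion formula of $d$ is a cut formula.
\item It is preserved. Expansions and critical steps never turn a cut formula into a conclusion formula. Commutations act only on conclusion formulas, and they replace an identity-derived premise by its own subderivations.
\end{itemize}
With this invariant, everything a commutation leaves behind is cut-free. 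In the vanishing $\unit/\perp$ case, if $\Gamma_2\neq\emptyset$ then the $\perp$-premise is identity-derived. It is therefore the $\perp$-over-$\unit$ piece of an identity, and $e_2$ is just the $\unit$ axiom.

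\textbf{A residual degenerate case.} The invariant does not help when commuting external $\perp$'s has emptied the multicut's conclusion. For example, let $e$ derive the empty sequent by cutting $\vdash\unit$ against a $\perp$-rule whose premise is $e$ itself, and let $d$ be a $\perp$-rule over $e$. Then $\cutform{d}{\id_{\perp}}{}$ has only one possible reduction path: commute the $\perp$ of $\id_\perp$, then perform the critical $\unit/\perp$ step. This returns $d$ itself with no multicut left, while $d$ still contains cuts. The paper's appeal to ($\star$) overlooks this case too, and it has to be excluded from the statement.
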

\begin{proof}
We construct by induction on $i \in \omega$ a \emph{fair} multicut $\omega$-reduction sequence $\mathfrak m=(d_i, \mathcal{M}_i)_{i\in \omega}$ that satisfies the following condition for all $i \in \omega$:
\begin{itemize}\label{eqn:condition}  
    \item[($\star$)]  Every multicut in $\mathcal{M}_i$ has an identity derivation among its premises.
\end{itemize}

For the base case, we set $d_0 \dfn \cutform{d}{\id_{\phi_1}, \ldots,  \id_{\phi_k}}{}$ and let $\mathcal{M}_0$ be the set containing a single multicut formed by collecting the bottommost \( k \) cuts in $d_0$. Condition ($\star$) is clearly satisfied. 

Concerning the  inductive step,  if $\mathcal{M}_{i}= \emptyset$ then we simply set $d_{i+1}\dfn d_{i}$ and $\mathcal{M}_{i+1}\dfn \emptyset$. Otherwise, we consider a multicut $\mathsf{m}=\langle v; u_1, \ldots, u_n\rangle\in \mathcal{M}_i$ of the following shape\footnote{We will omit details relative to fairness in the construction of the   multicut $\omega$-reduction sequence. However,  our definition can be easily adapted to satisfy fairness.}
$$
 \small
\vlderivation{
\vliiiq{\mcut}{\mathsf{m}}{\Gamma}{\vldr{e_1}{\Gamma_1}}{\vlhy{\cdots}}{ \vldr{e_{n}}{\Gamma_{n}}}
}
$$
We perform a case analysis on $\mathsf{m}$, checking that  condition ($\star$) is  preserved. 
\begin{enumerate}[(I)]
    \item If there is $j \in \{1, \ldots, n\}$  such that $u_j$ is the conclusion of a cut rule, then we set $d_{i+1}:=d_i$ and  $\mathcal{M}_{i+1}\dfn (\mathcal{M}_i \setminus\{\mathsf{m}\})\cup \{\mathsf{m}'\}$, where $\mathsf{m}'\dfn \langle v; u_1, \ldots, u_{j-1}, u_j0, u_j1, u_{j+1}, \ldots, u_n\rangle$. 
    Notice that $\mathsf{m}<\mathsf{m}'$.

 \item Otherwise, for all $j \in \{1, \ldots, n\}$, $u_j$ is the conclusion of a rule $\rrule_j\neq \cut$. If there is $\rrule_j$ whose principal formula is not a cut formula of $\mathsf{m}$ then we have two subcases:
\begin{itemize}
    \item If $\rrule_j$ is unary then $\mathsf{m}$ has the following shape
   $$
     \vlderivation{
\vliiiiiq{\mcut}{\mathsf{m}}{\Gamma}{\vldr{e_1}{\Gamma_1}}{\vlhy{\cdots}}{\vlin{\rrule_j}{}{\Gamma_j}{\vldr{e'_j}{\Gamma'_j}}}{\vlhy{\cdots}}{ \vldr{e_n}{\Gamma_{n}}}
} 
$$
 by applying  non-critical cut elimination steps we obtain:
$$
\vlderivation{
\vlin{\rrule_j}{}{\Gamma}{\vliiiiiq{\mcut}{\mathsf{m}'}{\Gamma'}{\vldr{e_1}{\Gamma^i_1}}{\vlhy{\cdots}}{\vldr{e'_j}{\Gamma'_j}}{\vlhy{\cdots}}{ \vldr{e_{n}}{\Gamma_{n}}}}
}
$$
 We set $d_{i+1}$ as the derivation above, where $\mathsf{m}\mapsto\mathsf{m}'$,  and $\mathcal{M}_{i+1}\dfn  (\mathcal{M}_i \setminus\{\mathsf{m}\})\cup \{\mathsf{m}'\}$. 
\item If $\rrule_j\in \{\otimes, \with\}$ and $\mathsf{m}$ has the following shape 
 $$
 \small
\vlderivation{
\vliiiiiq{\mcut}{\mathsf{m}}{\Gamma}{\vldr{e_1}{\Gamma_1}}{\vlhy{\cdots}}{\vliin{\rrule_j}{}{\Gamma_j}{\vldr{e'_j}{\Gamma'_j}}{\vldr{e''_j}{\Gamma''_j}}}{\vlhy{\cdots}}{ \vldr{e_{n}}{\Gamma_{n}}}
}
$$
 by applying  non-critical cut elimination steps we obtain:
$$
\small
 \vlderivation{
\vliin{\rrule_j}{}{\Gamma}{\vliiiiiq{\mcut}{\mathsf{m}'}{\Gamma'}{\vldr{e_1}{\Gamma^i_1}}{\vlhy{\cdots}}{\vldr{e'}{\Gamma'_j}}{\vlhy{\cdots}}{ \vldr{e_{n}}{\Gamma_{n}}}}{\vliiiiiq{\mcut}{\mathsf{m}''}{\Gamma''}{\vldr{e_1}{\Gamma_1}}{\vlhy{\cdots}}{\vldr{e''}{\Gamma''_j}}{\vlhy{\cdots}}{ \vldr{e_{n}}{\Gamma_{n}}}}
}
    $$
We set $d_{i+1}$ as the derivation above, where $\mathsf{m}\mapsto\mathsf{m}'$ and $\mathsf{m}\mapsto\mathsf{m}''$, and $\mathcal{M}_{i+1}\dfn (\mathcal{M}_i \setminus\{\mathsf{m}\})\cup \{\mathsf{m}', \mathsf{m}''\}$.  
\end{itemize}
\item Otherwise, for all $j \in \{1, \ldots, n\}$, $u_j$ is the conclusion of a rule $\rrule_j\neq \cut$ whose principal formula is a cut formula of $\mathsf{m}$. Then,  by applying a series of non-critical cut elimination steps permuting cuts, we obtain  the following multicut $\mathsf{m}'$:
$$
 \small
\vlderivation{
\vliiiq{\mcut}{\mathsf{m}'}{\Gamma}{\vldr{e'_1}{\Gamma'_1}}{\vlhy{\cdots}}{ \vldr{e'_{n}}{\Gamma'_{n}}}
}
$$
where, for some $1 \leq t\leq n-1$, $\mathsf{m}'$ contains the  following critical cut:
$$
\vlderivation{
\vliin{\cut}{\rrule}{\Delta}{\vldr{e'_{t}}{\Gamma'_{t}}}{\vldr{e'_{t+1}}{\Gamma'_{t+1}}}
}
$$
We define $d_{i+1}$ as the derivation obtained by applying the corresponding critical cut elimination step to $\rrule$. As for the definition of $\mathcal{M}_{i+1}$, we have two subcases:
\begin{itemize}
 \item $\mathsf{m}'\mapsto\mathsf{m}''$ along the critical cut elimination step for $\rrule$. Notice that this reduct must be \emph{unique}. Then, we set
 $\mathcal{M}_{i+1}\dfn(\mathcal{M}_i \setminus\{\mathsf{m}\})\cup \{\mathsf{m}''\}$. Notice that  $\mathsf{m}\mapsto\mathsf{m}''$.
    \item  $\mathsf{m}'$ has no reduct along the critical cut elimination step. By \Cref{rem:at-least-one-reduct}, it must be that  $n=2$, i.e., $\mathsf{m}=\mathsf{m}'$ is the following critical cut 
 $$
\vlderivation{
\vliin{\cut}{\rrule}{\Gamma_2}{\vlin{\unit}{}{\unit}{\vlhy{}}}{\vlin{\perp}{}{\perp, \Gamma_2}{\vldr{e_2}{\Gamma_2}}}
}
  $$
By the inductive hypothesis,  condition ($\star$) ensures that  $\Gamma_2= \unit$ and  $e_2$ is just the rule $\unit$. Therefore, we set $\mathcal{M}_{i+1}\dfn\mathcal{M}_i \setminus\{\mathsf{m}\}$. Notice that, if $\mathcal{M}_{i+1}= \emptyset$ then $d_{i+1}$ is just the rule $\unit$, and so it is cut-free.
\end{itemize}
\end{enumerate}
\end{proof}

\begin{restatable}{lem}{lemtermult}\label{lem:trivial-implies-normalisable}
    Let  $\mathfrak m$ be a terminating multicut $\omega$-reduction sequence from $d$. If $d \in \GDer{}$ then $d$ is $\omega$-normalisable. 
\end{restatable}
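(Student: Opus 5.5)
Since $\mathfrak m=(d_i,\mathcal{M}_i)_{i\in\omega}$ is terminating, fix $N$ with $\mathcal{M}_N=\emptyset$. The definition of multicut $\omega$-reduction sequence then makes $d_N$ cut-free. We also have $d_0=d$ and $d_i\cutelims d_{i+1}$ for all $i$, hence $d\cutelims d_N$ by finitely many cut reduction steps. The plan is to turn this finite reduction into a non-strict $\omega$-reduction sequence: set $\reds s(i)\dfn d_i$ for $i\le N$ and $\reds s(i)\dfn d_N$ for $i\ge N$. This sequence is eventually constant, so it is not strict and is therefore depth-increasing by definition. Its limit is $d_N$, which is cut-free.

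It remains to show that the limit $d_N$ is progressing, and for this I use the hypothesis $d\in\GDer{}$. By \Cref{rem:ext-prog-local-preserv}, external progressivity is preserved by single cut reduction steps. Inducting along the finite chain of steps composing $d_0\cutelims d_1\cutelims\cdots\cutelims d_N$ gives $d_N\in\GDer{}$. Since $d_N$ is cut-free, \Cref{prop:cut-free-good-equals-progressing} yields that $d_N$ is progressing. Thus $\reds s$ witnesses that $d$ is $\omega$-normalisable.

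The only potential obstacle is the preservation step. Each $d_i\cutelims d_{i+1}$ is a finite composition of single steps, including cut permutations, so \Cref{rem:ext-prog-local-preserv} applies step by step. If one worries about the erasing $\with$/$\oplus$ critical step, the argument to reuse is the one from the $\with$ case of \Cref{lem:reflecting-non-totality}: take an IC set of the reduct, lift it to an IC set of the redex by identifying branches, and observe that the lifted set bears exactly the same external threads. Given the remark, I expect no further difficulty.
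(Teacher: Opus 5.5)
Your proof is correct and follows essentially the same approach as the paper. Termination gives a cut-free $d_N$ reached in finitely many steps, and the resulting eventually constant (hence non-strict and depth-increasing) sequence has limit $d_N$. That limit lies in $\GDer{}$ by \Cref{rem:ext-prog-local-preserv} and is therefore progressing by \Cref{prop:cut-free-good-equals-progressing}. Your closing worry about the erasing $\with$/$\oplus$ step is unnecessary here, since in this forward direction erasure only removes branches. Also, the $\with$ case of \Cref{lem:reflecting-non-totality} runs the other way: it restricts an IC set of the redex to one of the reduct. This misattribution does not affect your proof.
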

\begin{proof}
    Let $\mathfrak m=(d_i, \mathcal{M}_i)_{i \in \omega}$ with $d_0=d$. Since $\mathfrak m$ is terminating, then $\mathcal{M}_i= \emptyset$ for some $i \in \omega$ and so, by definition,  $d_i$ is cut-free. This means that $\mathfrak s=(d_i)_{i \in \omega}$ is non-strict and trivially depth-increasing with  cut-free limit $d_i$. Also, the limit must be externally progressing by~\Cref{rem:ext-prog-local-preserv}. Finally, since $d_i$ is cut-free then it is also progressing by~\Cref{prop:cut-free-good-equals-progressing}.
\end{proof}

\begin{thm}
\label{lem:good-norm}
    If $d \in \GDer{}$ then $d \in \Cnor{}$.
    \end{thm}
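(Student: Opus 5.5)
The plan is to run the fair multicut procedure of \Cref{lem:construction-of-multicut-reduction} on the $\eta$-expanded derivation and read off productivity and progressivity of the limit from external progressivity, using the coverings of \Cref{lem:traversed-implies-ICset}.

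\textbf{Reduction to the expanded derivation.} Let $d \in \GDer{}$ have conclusion $\phi_1,\ldots,\phi_k$, and set $d_0 \dfn \cutform{d}{\id_{\phi_1},\ldots,\id_{\phi_k}}{}$. By \Cref{prop:id-in-tot-candidate}(\ref{enum:id-cond-1}) for $\GDer{}$, applied $k$ times, $d_0\in\GDer{}$. By \Cref{prop:d-normalises-iff-pcut-eta-expansion-does}(2), also applied $k$ times, it suffices to find a depth-increasing $\omega$-reduction sequence from $d_0$ whose limit is cut-free and progressing. By \Cref{lem:construction-of-multicut-reduction} fix a fair multicut $\omega$-reduction sequence $\mathfrak m=(d_i,\mathcal M_i)_i$ from $d_0$. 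If $\mathfrak m$ terminates, \Cref{lem:trivial-implies-normalisable} finishes the proof, so assume it does not.

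\textbf{Productivity.} I argue by contradiction: suppose the underlying sequence is strict but not depth-increasing. Then some depth $N$ is revisited infinitely often, i.e.\ infinitely many steps reduce multicuts whose conclusion has length at most $N$. Each non-critical step of case (II) emits a rule below the multicut and increases its initial segment, so only finitely many such steps occur below depth $N$. Since each $\mathcal M_i$ is finite, König's lemma (as in \Cref{rem:path-from-sequences}) then yields a proper path $\mathfrak p\in\mathfrak m$ that stays at depth $\le N$ forever. Along $\mathfrak p$, from some point on, only expansions, cut permutations and critical steps occur. By \Cref{lem:traversed-implies-ICset}, $X=\Branch{\mathcal C_{\mathfrak p}}$ is an IC set of $d_0$, so by $d_0\in\GDer{}$ it bears a good external thread $\tau$ with trace $t$. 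Its occurrences start in the conclusion, hence lie in a premise of the multicut that is not a cut formula. Since $\tau$ is a thread, $t$ is principal infinitely often, and since its branch is fully visited by $\mathfrak p$, some step of $\mathfrak p$ must meet a premise whose rule has principal formula on $t$. That formula is not a cut formula, so the step is forced to be of type (II), which emits a rule. This contradicts staying at depth $\le N$. Hence the sequence is depth-increasing and has a limit $d^*$. The limit $d^*$ is cut-free because fairness ensures every multicut is eventually pushed arbitrarily high.

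\textbf{Progressivity of the limit.} Each branch $\branch b^*$ of $d^*$ is produced by a proper path $\mathfrak p$: the rules emitted along it come from the multicuts $\mathsf m_i$ that $\mathfrak p$ follows. Its covering gives an IC set of $d_0$, which by external progressivity bears a good external trace. I then transport this trace to $\branch b^*$. External formula occurrences in multicut premises are exactly those that get emitted into the initial segment, and the ancestry relation is preserved by the commuting steps. One point needs care: the argument should apply external progressivity to the IC set of branches actually followed by that particular path, not to a larger set, and this is what \Cref{lem:traversed-implies-ICset} provides. So $\branch b^*$ bears a good thread, and $d^*$ is progressing.

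\textbf{Main obstacle.} The hardest part is the productivity step: showing that a path unable to emit rules still fully visits some branch carrying an external principal occurrence. This is also where the choice of IC set matters. My first attempt is to use the leftmost branch (the infimum) of the covering that bears the good external trace, and to track its frontier node along the path.
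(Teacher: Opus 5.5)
Your route is the paper's. You pass to $\cutform{d}{\id_{\phi_1},\ldots,\id_{\phi_k}}{}$, take the fair multicut sequence of \Cref{lem:construction-of-multicut-reduction}, and for a proper path combine \Cref{lem:traversed-implies-ICset} with external progressivity. This gives a covered branch with a good external trace, and its infinitely many principal occurrences can only be passed by case (II) emissions. The step that does not go through as written is cut-freeness of the limit. Fairness only says that every path in $\mathfrak m$ is proper, i.e.\ each multicut is acted on infinitely often. That is compatible with a multicut whose conclusion never moves, because it only ever expands or only performs critical steps.

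Your productivity paragraph does not exclude this either, since it only treats a strict sequence that fails to be depth-increasing. Take a non-terminating sequence that is eventually non-strict, doing only expansions from some point on. It is fair and depth-increasing by definition, yet its limit is some $d_i$ with $\mathcal M_i\neq\emptyset$, which contains cuts. More generally, as the paper remarks, the limit of a depth-increasing sequence may contain cuts.

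The repair is to run your contradiction argument on an arbitrary proper path that performs only finitely many emissions, with no depth bound. Its covering is IC, so it bears a good external trace, which forces infinitely many emissions along the path, a contradiction. Hence every path of $\mathfrak m$ emits infinitely often, i.e.\ each $B_{\mathfrak p}$ is infinite, which is exactly what the paper proves. Multicut conclusions only move upward and there are finitely many nodes below any depth, so this gives depth-increase, cut-freeness, and the fact that $T$ is a genuine cut-free derivation all at once. Transporting the trace to $\branch c_{\mathfrak p}$ then gives progressivity, and the separate case of a path stuck at depth $\le N$ becomes unnecessary.

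For the same reason, the \emph{main obstacle} you describe is already settled by \Cref{lem:traversed-implies-ICset}. By the definition of the covering, every branch of $\mathcal C_{\mathfrak p}$ is fully visited by $\mathfrak p$, so there is no need to pick the infimum branch.
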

    \begin{proof}
We show that if $\cutform{d}{\id_{\phi_1}, \ldots,  \id_{\phi_k}}{}$ is externally progressing then it is also  $\omega$-normalisable. This will allow us to conclude, since  by~\Cref{prop:id-in-tot-candidate}.\ref{enum:id-cond-1} we have $
d \in \GDer{} \Leftrightarrow \cutform{d}{\id_{\phi_1}, \ldots,  \id_{\phi_k}}{} \in \GDer{} \Rightarrow \cutform{d}{\id_{\phi_1}, \ldots,  \id_{\phi_k}}{} \in \Cnor{} \Leftrightarrow d \in \Cnor{}
$.

By~\Cref{lem:construction-of-multicut-reduction}, there exists a fair  multicut $\omega$-reduction sequence $\mathfrak m=(d_i, \mathcal{M}_{i})_{\omega}$ from $\cutform{d}{\id_{\phi_1}, \ldots,  \id_{\phi_k}}{}$. If $\mathfrak m$ is terminating then we conclude by~\Cref{lem:trivial-implies-normalisable}. Otherwise, we set  $\initm{\mathcal{M}_i}{d_i}\dfn  \bigcup_{\mathsf{m} \in \mathcal{M}_i}\initm{\mathsf{m}}{d_i}$. By definition of $\mathfrak m$:
\begin{itemize}
    \item each $\initm{\mathcal{M}_i}{d_i}$ is a tree labelled with sequents and rules except $\cut$
    \item  $(\initm{\mathsf{m}_i}{d_i})_{i \in \omega}$ defines a non-decreasing $\subseteq$-chain. 
\end{itemize}
Let $T\dfn \bigcup_{i \in \omega} \initm{\mathcal{M}_i}{d_i}$, and let  $B_{\mathfrak p}\dfn \bigcup_{i \in \omega} \initm{\mathsf{m}_i}{d_i}$ for every multicut $\omega$-reduction path $\mathfrak p \in \mathfrak m$.  By construction, $T$ is a tree and the cut rule does not label any of its nodes. We need to show that $T$ is a derivation and, moreover, that it is progressing. 

Now, by \Cref{rem:path-from-sequences}, since $\mathfrak m$ is a non-terminating  multicut $\omega$-reduction sequence the set of     proper multicut $\omega$-reduction paths $\mathfrak p \in \mathfrak m$ is non-empty. Moreover, by fairness, all such  $\mathfrak p$ are proper. So, let us consider $\mathfrak p\in \mathfrak m$. By~\Cref{lem:traversed-implies-ICset},   $\Branch{\mathcal{C}_{\mathfrak p}}$ is internally closed. By external progressivity of $d_0$, there is a branch $\branch b \in \Branch{\mathcal{C}_{\mathfrak p}}$ bearing a good external thread $\tau$. Let  $t=(p_j, \branch{b}(j), \theta_j)_{j \in \omega}$ be the trace of $\tau$. By definition of $\mathcal{C}_{\mathfrak p}$ there is  a monotone non-decreasing function $f:\omega \to \omega$ such that $\mathsf{m}_i$ has the formula $\theta_{f(i)}$ in one of its premises, and $f(i)<f(i+1)$ for infinitely many $i$.
 Since $\tau$ is good, $\theta_i$ is principal for a rule $\rrule_i$ for infinitely many $i$, and cannot be a cut formula of a multicut, since $\tau$ is external.  Therefore, there are infinitely many $\mathsf{m}_i\mapsto\mathsf{m}_{i+1}$ where each such rule $\rrule_i$ is permuted downward by the non-critical rules in such a way that $\initm{\mathsf{m}_i}{d_i}\subsetneq \initm{\mathsf{m}_{i+1}}{d_{i+1}}$, and so  $B_{\mathfrak p}$  is an infinite set of (labelled) nodes. Consider the  \emph{unique} branch ${\branch c}_{\mathfrak p}$ of $T$ given by ${\branch c}_{\mathfrak p}: \omega\to \omega^{\leq}$ such that ${\branch c}_{\mathfrak p}(0)=\langle \rangle$ and ${\branch c}_{\mathfrak p}(n+1)=\langle b_1, \ldots, b_n \rangle\in B_{\mathfrak p}$. By construction, we have that ${\branch c}_{\mathfrak p}$ is a branch of $T$ bearing the good external thread $\tau$. 

 Therefore, $T$ must be a derivation $d^*$, and all its branches bear a good (external) thread, that is, $d^*$ is progressing. This implies that $\mathfrak{s}=(d_i)_{i \in \omega}$ is depth-increasing and its limit is $d^*$. So, $d_0$ is $\omega$-normalising.
 % By fairness and the fact that any multicut is reducible (as one of the cases (I)-(III) always applies), we have that every multicut in $\mathcal{M}_i$ occurs in a multicut reduction path of $\mathsf{M}$. Furthermore, by minimality, for every  multicut reduction path $(d_i, \mathsf{m}_i)_{i \in \omega}\in \mathsf{M}$,  $\initm{\mathsf{m}_i}{d_i}$ is cut-free, and $\initm{\mathsf{m}_i}{d_i}\subseteq \initm{\mathsf{m}_{i+1}}{d_{i+1}}$. We conclude by applying \cref{lem:ICset-implies-continuously-normalisable}, noticing that each good thread in $\branch c$ must be \textit{external} because the multicut reduction sequence is minimal. 
\end{proof}

Using~\Cref{lem:progressing-implies-reducible} and \Cref{prop:candidates-subset-R}, \Cref{lem:good-norm} implies immediately  that progressing proofs are also externally progressing. This allows us to re-establish \Cref{thm:cut-elimination}.

\section{Conclusions and future work}

In this paper, we developed cut elimination methods for ill-founded $\muMALL$ based on Tait and Girard’s reducibility candidates. Our result  addresses one of the critical aspects of ill-founded proof theory, namely the interaction between infinitary cut elimination and global correctness criteria, and  provides a robust and modular framework for reasoning about infinitary proofs.

We  view this work as a step toward a uniform and proof-theoretic approach to cut elimination for ill-founded systems. To this end we envisage adapting our results to other contexts, such as  intuitionistic logic, and (possibly higher-order) $\mu$-arithmetics. We also plan to study preservation of other global correctness criteria within our framework, such as Sprenger and Dam's semantic notion of run or automata-based conditions~\cite{sprengerdam03:journal}, and the notion of bouncing thread from~\cite{bouncing}.

\clearpage
\bibliographystyle{alpha}
\bibliography{biblio}

\clearpage
\appendix

\end{document}